\setlist[itemize]{leftmargin=3mm}
\newcommand*{\rom}[1]{\expandafter\@slowromancap\romannumeral #1@}
\newtheorem{proposition}{Proposition}
\numberwithin{proposition}{chapter}
\theoremstyle{definition}
\newtheorem{definition}{Definition}
\DeclareMathOperator*{\argmax}{argmax}
\DeclareMathOperator*{\argsort}{argsort}
\title[korean]{추천시스템을 위한 신뢰도 보정 및 활용}
\title[english]{Confidence Calibration for Recommender Systems and Its Applications}
\author[korean]{권}{원 빈}
\author[english]{Kweon}{Wonbin}
\begin{document}

    % 앞표지, 속표지, 학위논문 제출승인서, 학위논문 심사완료 검인서는
    % 클래스 옵션을 final로 지정해주면 자동으로 생성되며,
    % 반대로 옵션을 draft로 지정해주면 생성되지 않습니다.

    % 영문초록 (abstract)
    \begin{abstract}
    Personalized recommendations have a significant impact on various daily activities such as shopping, news, search, videos, and music.
However, most recommender systems only display the top-scored items to the user without providing any indication of confidence in the recommendation results.
The semantic of the same ranking position differs for each user; one user might like his third item with a probability of 30\%, whereas the other user may like her third item with 90\%.

Despite the importance of having a measure of confidence in recommendation results, it has been surprisingly overlooked in the literature compared to the accuracy of the recommendation.
In this dissertation, I propose a model calibration framework for recommender systems for estimating accurate confidence in recommendation results based on the learned ranking scores.
Moreover, I subsequently introduce two real-world applications of confidence on recommendations:
(1) Training a small student model by treating the confidence of a big teacher model as additional learning guidance,
(2) Adjusting the number of presented items based on the expected user utility estimated with calibrated probability.

\textbf{Obtaining Calibrated Confidence.}
I investigate various parametric distributions and propose two parametric calibration methods, namely Gaussian calibration and Gamma calibration.
Each proposed method can be seen as a post-processing function that maps the ranking scores of pre-trained models to well-calibrated preference probabilities, without affecting the recommendation performance.

\textbf{Bidirectional Distillation.}
I propose Bidirectional Distillation (BD) framework whereby both the teacher and the student collaboratively improve with each other.
Specifically, each model is trained with the distillation loss that makes it follow the other’s prediction confidence along with its original loss function.
Trained in a bidirectional way, it turns out that both the teacher and the student are significantly improved compared to when being trained separately.

\textbf{Top-Personalized-K Recommendation.}
I introduce Top-Personalized-$K$ Recommendation (PerK), a new recommendation task aimed at generating a personalized-sized ranking list to maximize individual user satisfaction.
PerK estimates the expected user utility by leveraging calibrated interaction probabilities, subsequently selecting the recommendation size that maximizes this expected utility.
We expect that Top-Personalized-$K$ recommendation has the potential to offer enhanced solutions for various real-world recommendation scenarios, based on its great compatibility with existing models.
    \end{abstract}

    % 목차 (Table of Contents) 생성
    \tableofcontents

    % 표목차 (List of Tables) 생성
    \listoftables

    % 그림목차 (List of Figures) 생성
    \listoffigures

    % 위의 세 종류의 목차는 한꺼번에 다음 명령으로 생성할 수도 있습니다.
    %\makecontents

%% 이하의 본문은 LaTeX 표준 클래스 report 양식에 준하여 작성하시면 됩니다.
%% 하지만 part는 사용하지 못하도록 제거하였으므로, chapter가 문서 내의
%% 최상위 분류 단위가 됩니다.
%% You cannot use 'part'

\chapter{Introduction}
Personalized recommendations have a significant impact on various daily activities such as shopping, news, search, videos, and music.
With an enormous number of items in the system, users cannot possibly look up the entire item set, and therefore they need to rely on the recommendations provided by the system.
However, most recommender systems only display the top-scored items to the user without providing any indication of confidence in the recommendation results.
The semantic of the same ranking position differs for each user; one user might like his third item with a probability of 30\%, whereas the other user may like her third item with 90\%.
Also, recommendation accuracy differs for each user based on their activeness and preference complexity.
Therefore, I claim that real-world applications need to indicate their confidence in their recommendation results to provide reliability and interpretability to users.

Despite the importance of having a measure of confidence in recommendation results, it has been surprisingly overlooked in the literature compared to the accuracy of the recommendation.
Recommender models aim to learn the ranking score for each user-item pair, so as to produce a ranking list of items for each user.
This ranking score cannot be used as a measure of probability or confidence as it is usually an unbounded real number used for the item sorting process.
In this dissertation, I propose a model calibration framework for recommender systems for estimating accurate confidence in recommendation results based on the learned ranking scores.
Estimated confidence can be presented along with the recommendation results to increase reliability and interoperability.
Moreover, I subsequently introduce two real-world applications of confidence on recommendations:
(1) Training a small student model by treating the confidence of a big teacher model as additional learning guidance,
(2) Adjusting the number of presented items based on the expected user utility estimated with calibrated probability.

\textbf{(Section \rom{2}) Obtaining Calibrated Probabilities with Personalized Ranking Models.}
In this paper, we aim to estimate the calibrated probability of how likely a user will prefer an item.
We investigate various parametric distributions and propose two parametric calibration methods, namely Gaussian calibration and Gamma calibration.
Each proposed method can be seen as a post-processing function that maps the ranking scores of pre-trained models to well-calibrated preference probabilities, without affecting the recommendation performance.
We also design the unbiased empirical risk minimization framework that guides the calibration methods to the learning of true preference probability from the biased user-item interaction dataset.
Extensive evaluations with various personalized ranking models on real-world datasets show that both the proposed calibration methods and the unbiased empirical risk minimization significantly improve the calibration performance.

\textbf{(Section \rom{3}) Bidirectional Distillation for Top-K Recommendation.}
Recommender systems (RS) have started to employ knowledge distillation, which is a model compression technique training a compact model (student) with the knowledge transferred from a cumbersome model (teacher).
The state-of-the-art methods rely on unidirectional distillation transferring the knowledge only from the teacher to the student, with an underlying assumption that the teacher is always superior to the student.
However, we demonstrate that the student performs better than the teacher on a significant proportion of the test set, especially for RS.
Based on this observation, we propose \textit{Bidirectional Distillation} (BD) framework whereby both the teacher and the student collaboratively improve with each other.
Specifically, each model is trained with the distillation loss that makes it follow the other’s prediction confidence along with its original loss function.
For effective bidirectional distillation, we propose \textit{rank discrepancy-aware sampling} scheme to distill only the informative knowledge that can fully enhance each other.
The proposed scheme is designed to effectively cope with a large performance gap between the teacher and the student.
Trained in the bidirectional way, it turns out that both the teacher and the student are significantly improved compared to when being trained separately.
Our extensive experiments on real-world datasets show that our proposed framework consistently outperforms the state-of-the-art competitors.
We also provide analyses for an in-depth understanding of BD and ablation studies to verify the effectiveness of each proposed component.

\textbf{(Section \rom{4}) Top-Personalized-K Recommendation.}
The conventional top-\textit{K} recommendation, which presents the top-\textit{K} items with the highest ranking scores, is a common practice for generating personalized ranking lists.
However, is this fixed-size top-$K$ recommendation the optimal approach for every user’s satisfaction?
Not necessarily.
We point out that providing fixed-size recommendations without taking into account user utility can be suboptimal, as it may unavoidably include irrelevant items or limit the exposure to relevant ones.
To address this issue, we introduce Top-Personalized-$K$ Recommendation, a new recommendation task aimed at generating a personalized-sized ranking list to maximize individual user satisfaction.
As a solution to the proposed task, we develop a model-agnostic framework named PerK.
PerK estimates the expected user utility by leveraging calibrated interaction probabilities, subsequently selecting the recommendation size that maximizes this expected utility.
Through extensive experiments on real-world datasets, we demonstrate the superiority of PerK in Top-Personalized-$K$ recommendation task.
We expect that Top-Personalized-$K$ recommendation has the potential to offer enhanced solutions for various real-world recommendation scenarios, based on its great compatibility with existing models.

\chapter{Obtaining Calibrated Probabilities with Personalized Ranking Models}
For personalized ranking models, the well-calibrated probability of an item being preferred by a user has great practical value.
While existing work shows promising results in image classification, probability calibration has not been much explored for personalized ranking.
In this paper, we aim to estimate the calibrated probability of how likely a user will prefer an item.
We investigate various parametric distributions and propose two parametric calibration methods, namely Gaussian calibration and Gamma calibration.
Each proposed method can be seen as a post-processing function that maps the ranking scores of pre-trained models to well-calibrated preference probabilities, without affecting the recommendation performance.
We also design the unbiased empirical risk minimization framework that guides the calibration methods to the learning of true preference probability from the biased user-item interaction dataset.
Extensive evaluations with various personalized ranking models on real-world datasets show that both the proposed calibration methods and the unbiased empirical risk minimization significantly improve the calibration performance.
This work was published at AAAI Conference on Artificial Intelligence (AAAI 2022) with an oral presentation \cite{kweon22}.

\section{Introduction}
Personalized ranking models aim to learn the ranking scores of items, so as to produce a ranked list of them for the recommendation \cite{bpr09}.
However, their prediction results provide an incomplete estimation of the user's potential preference for each item; the semantic of the same ranking position differs for each user.
One user might like his third item with the probability of 30\%, whereas the other user likes her third item with 90\%.
Accurately estimating the \textit{probability} of an item being preferred by a user has great practical value \cite{iso12}.
The preference probability can help the user choose the items with high potential preference and the system can raise user satisfaction by pruning the ranked list by filtering out items with low confidence \cite{pruning09}.
To ensure reliability, the predicted probabilities need to be \textit{calibrated} so that they can accurately indicate their ground truth correctness likelihood.
In this paper, our goal is to obtain the well-calibrated probability of an item matching a user's preference based on the ranking score of the pre-trained model, without affecting the ranking performance.

While recent methods \cite{cal17, kull2019beyond, rahimi2020intra} have successfully achieved model calibration for image classification, it has remained a long-standing problem for personalized ranking.
A pioneering work \cite{iso12} firstly proposed to predict calibrated probabilities from the scores of pre-trained ranking models by using isotonic regression \cite{iso72}, which is a simple non-parametric method that fits a monotonically increasing function.
Although it has shown some effectiveness, there is no subsequent study about \textit{parametric} calibration methods in the field of personalized ranking despite their richer expressiveness than non-parametric methods.

In this paper, we investigate various parametric distributions, and from which we propose two calibration methods that can best model the score distributions of the ranking models.
First, we define three desiderata that a calibration function for ranking models should meet, and show that existing calibration methods have the insufficient capability to model the diverse populations of the ranking score.
We then propose two parametric methods, namely Gaussian calibration and Gamma calibration, that satisfy all the desiderata.
We demonstrate that the proposed methods have a larger expressive power in terms of the parametric family and also effectively handles the imbalanced nature of ranking score populations compared to the existing methods \cite{platt99, cal17}.
Our methods are post-processing functions with \textit{three} learnable parameters that map the ranking scores of pre-trained models to calibrated posterior probabilities.

To optimize the parameters of the calibration functions, we can use the log-loss on the held-out validation sets \cite{cal17}.
The challenge here is that the user-item interaction datasets are implicit and missing-not-at-random \cite{ips16, saito19}.
For each user-item pair, the label is 1 if the interaction is observed, 0 otherwise.
An unobserved interaction, however, does not necessarily mean a negative preference, but the item might have not been exposed to the user yet.
Therefore, if we fit the calibration function with the log-loss computed naively on the implicit datasets, the mapped probabilities may indicate biased likelihoods of users' preference on items.
To tackle this problem, we design an unbiased empirical risk minimization framework by adopting Inverse Propensity Scoring \cite{ips94}.
We first decompose the interaction variable into two variables for observation and preference, and adopt an inverse propensity-scored log-loss that guides the calibration functions toward the true preference probability.

Extensive evaluations with various personalized ranking models on real-world datasets show that the proposed calibration methods produce more accurate probabilities than existing methods in terms of calibration measures like ECE, MCE, and NLL.
Our unbiased empirical risk minimization framework successfully estimates the ideal empirical risk, leading to performance gain over the naive log-loss.
Furthermore, reliability diagrams show that Gaussian calibration and Gamma calibration predict well-calibrated probabilities across all probability ranges.
Lastly, we provide an in-depth analysis that supports the superiority of the proposed methods over the existing methods.

\section{Preliminary \& Related Work}
\subsection{Personalized Ranking}
Let $\mathcal{U}$ and $\mathcal{I}$ denote the user space and the item space, respectively.
For each user-item pair $(u,i)$ of $u \in \mathcal{U}$ and $i \in \mathcal{I}$, a label $Y_{u,i}$ is given as 1 if their interaction is observed and 0 otherwise.
It is worth noting that unobserved interaction ($Y_{u,i}=0$) may indicate the negative preference or the unawareness, or both.
A personalized ranking model $f_\theta: \mathcal{U} \times \mathcal{I} \rightarrow \mathbb{R}$ learns the ranking scores of user-item pairs to produce a ranked list of items for each user.
$f_{\theta}$ is mostly trained with pairwise loss that makes the model put a higher score on the observed pair than the unobserved pair:
\begin{equation}
    \mathcal{L}_{\textup{pair}} = \sum_{u \in \mathcal{U}, i,j \in \mathcal{I}} \ell(f_\theta(u,i), f_\theta(u,j)) Y_{u,i} (1-Y_{u,j}),
\end{equation}
where $\ell(\cdot, \cdot)$ is some convex loss function such as BPR loss \cite{bpr09} or Margin Ranking loss \cite{margin07}.
Note that the ranking score $f_\theta(u,i) \in \mathbb{R}$ is not bounded in $[0,1]$ and therefore cannot be used as a probability.

\subsection{Calibrated Probability}
To estimate $P(Y_{u,i}=1 | f_\theta(u,i))$, which is the probability of item $i$ being interacted with user $u$ given the pre-trained ranking score, we need a post-processing calibration function $g_{\phi}(s)$ that maps the ranking score $s=f_{\theta}(u,i)$ to the calibrated probability $p$.
Here, the calibration function for the personalized ranking has to meet the following \textbf{desiderata}:
(1) the function $g_\phi: \mathbb{R} \rightarrow [0,1]$ needs to take an input from the \textit{unbounded range} of the ranking score to output a probability;
(2) the function should be \textit{monotonically increasing} so that the item with a higher ranking score gets a higher preference probability;
(3) the function needs enough expressiveness to represent diverse score distributions with \textit{asymmetricity}.

We say the probability $p$ is well-calibrated if it indicates the ground-truth correctness likelihood \cite{beta17}:
\begin{equation}
    \mathbb{E} [ Y | g_\phi(s) = p ] = p, \;\;\; \forall p \in [0,1].
\end{equation}
\begin{equation}
    g_\phi(f_\theta(u,i)) = p.
\end{equation}
For example, if we have 100 predictions with $p=0.3$, we expect 30 of them to indeed have $Y=1$ when the probabilities are calibrated.
Using this definition, we can measure the miscalibration of a model with Expected Calibration Error (ECE) \cite{bbq15}:
\begin{equation}
    \text{ECE}(g_\phi) = \mathbb{E} \big[ \abs{\mathbb{E}[Y|g_\phi(s)=p] - p} \big].
    \label{ece}
\end{equation}
However, since we only have finite samples, we cannot directly compute ECE with Eq.3.
Instead, we partition the [0,1] range of $p$ into $M$ equi-spaced bins and aggregate the value of each bin:
\begin{equation}
    \text{ECE}_M(g_\phi) = \sum_{m=1}^M \frac{\abs{B_m}}{N} \left| \frac{\sum_{k \in B_m} Y_k}{\abs{B_m}} - \frac{\sum_{k \in B_m} p_k}{\abs{B_m}} \right|,
    \label{ecem}
\end{equation}
where $B_m$ is $m$-th bin and $N$ is the number of samples.
The first term in the absolute value symbols denotes the ground-truth proportion of positive samples (accuracy) in $B_m$ and the second term denotes the average calibrated probability (confidence) of $B_m$.
Similarly, Maximum Calibration Error (MCE) is defined as follows:
\begin{equation}
    \text{MCE}_M(g_\phi) = \max_{m \in \{1,..,M\}} \left| \frac{\sum_{k\in B_m} Y_k}{\abs{B_m}} - \frac{\sum_{k \in B_m} p_k}{\abs{B_m}} \right|.
\end{equation}
MCE measures the worst-case discrepancy between the accuracy and the confidence.
Besides the above calibration measures, Negative Log-Likelihood (NLL) also can be used as a calibration measure \cite{cal17}.

\subsection{Calibration Method}
Existing methods for model calibration are categorized into two groups: non-parametric and parametric methods.
Non-parametric methods mostly adopt the binning scheme introduced by the histogram binning \cite{hist01}.
The histogram binning divides the uncalibrated model outputs into $B$ equi-spaced bins and samples in each bin take the proportion of positive samples in the bin as the calibrated probability.
Subsequently, isotonic regression \cite{iso12} adjusts the number of bins and their width, Bayesian binning into quantiles (BBQ) \cite{bbq15} takes an average of different binning models for the better generalization.
The non-parametric calibration methods (Hist, Isotonic, and BBQ) lack rich expressiveness since they rely on the binning scheme, which maps the ranking scores to the probabilities in a discrete manner.
Also, histogram binning \cite{hist01} and BBQ \cite{bbq15} do not guarantee the monotonicity and BBQ takes the input only from [0,1].
From the perspective of our desiderata, none of them meets all three conditions.

\begin{table}[ht]
    \centering\fontsize{9}{10}\selectfont
    \begin{tabular}{cccc}
        \toprule
        \multirow{2}{*}{Method} & (1) & (2) & (3) \\ 
         & \multirow{1}{*}{Input range} & \multirow{1}{*}{Monotonicity} & \multirow{1}{*}{Asymmetricity} \\ 
        \midrule
        Hist & \checkmark & & \checkmark \\
        Isotonic & \checkmark & \checkmark & \checkmark\\
        BBQ & & & \checkmark \\
        \midrule
        Platt & \checkmark & \checkmark & \\
        Temp. S & \checkmark & \checkmark &  \\
        Beta & & \checkmark & \checkmark\\
        \midrule
        Gaussian & \checkmark & \checkmark & \checkmark \\
        Gamma & \checkmark & \checkmark & \checkmark \\
        \bottomrule
    \end{tabular}
    \caption{Satisfaction of proposed desiderata for each calibration method. $\checkmark$ indicates the satisfaction and a blank denotes the unsatisfaction.}
    \label{desiderata}
\end{table}

The parametric methods try to fit calibration functions that map the output scores to the calibrated probabilities.
Temperature scaling ($\sigma(s/T)$) \cite{cal17}, a well-known technique for calibrating deep neural networks, is a simplified version of Platt scaling ($\sigma(as+b)$) \cite{platt99} that adopts Gaussian distributions with the same variance for the positive and the negative classes.
Beta calibration \cite{beta17} utilizes Beta distribution for the binary classification and Dirichlet calibration \cite{kull2019beyond} generalizes it for the multi-class classification.
While recent work \cite{rahimi2020intra, mukhoti2020calibrating} is focusing on parametric methods and shows promising results for image classification, they cannot be directly adopted for personalized ranking.
In this paper, we propose two parametric calibration methods that satisfy all the desiderata for the personalized ranking models.

\section{Proposed Calibration Method}
\subsection{Revisiting Platt Scaling}
Platt scaling \cite{platt99} is widely used parametric calibration method, which is a generalized form of the temperature scaling \cite{cal17}:
\begin{equation}
    g_\phi^{\text{Platt}}(s) = \sigma(bs + c),
\end{equation}
where $\phi = \{b,c\}$ are learnable parameters and $\sigma(x) = 1/(1+\text{exp}(-x))$ is the sigmoid function.
In this section, we show that Platt scaling can be derived from the assumption that the class-conditional scores follow Gaussian distributions with the same variance.

We first set the class-conditional score distribution for the positive and the negative classes:
\begin{equation}
\begin{split}
    & p(s|Y=0) = (\sqrt{2\pi}\sigma_0)^{-1} \text{exp} [ - (s-\mu_0)^2 / 2\sigma_0^2 ], \\
    & p(s|Y=1) = (\sqrt{2\pi}\sigma_1)^{-1} \text{exp} [ - (s-\mu_1)^2 / 2\sigma_1^2 ], \\
\end{split}
\end{equation}
where $\mu_0, \mu_1 \in \mathbb{R}$, $\sigma_0^2, \sigma_1^2 \in \mathbb{R}^+$ are the mean and the variance of each Gaussian distribution.
Then, the posterior is computed as follows:
\begin{equation}
\begin{split}
    P(Y=1|s) & = \frac{\pi_1 p(s|Y=1)}{\pi_1 p(s|Y=1) + \pi_0 p(s|Y=0)} \\
    & = \frac{1}{1 + \pi_0 p(s|Y=0) / \pi_1 p(s|Y=1)} \\
    & = \frac{1}{1 + \text{exp}\big[ (\frac{1}{2\sigma_1^2} - \frac{1}{2\sigma_0^2})s^2 + (\frac{\mu_0}{\sigma_0^2}-\frac{\mu_1}{\sigma_1^2})s - c \big]} \\
    & = \sigma(as^2 + bs + c),
\end{split}
\label{revisit}
\end{equation}
where $\pi_0$ and $\pi_1$ are the prior probability for each class, $a = (2\sigma_0^2)^{-1} - (2\sigma_1^2)^{-1}$, $b = \mu_1/\sigma_1^2 - \mu_0/\sigma_0^2$, and $c=\mu_1^2 / (2\sigma_1^2) - \mu_0^2 / (2\sigma_0^2) + \text{log}(\pi_0 \sigma_1) - \text{log}(\pi_1 \sigma_0) \in \mathbb{R}$.
We can see that Platt scaling is a special case of Eq.\ref{revisit} with the assumption $a=0$ (i.e., the same variance for both class-conditional score distributions).

\subsection{Gaussian Calibration}
For personalized ranking, however, the usage of the same variance for both class-conditional score distributions is not desirable, because a severe imbalance between the two classes exists in user-item interaction datasets.
Since users have distinct preferences for item categories, preferred items take only a small portion (up to 10\% in real-world datasets) of the entire itemset.
Therefore, the score distribution of diverse unpreferred items and that of distinct preferred items are likely to have disparate variances.

To tackle this problem, we let the variance of each class-conditional score distribution be optimized with datasets, without any naive assumption of the same variance for both classes:
\begin{equation}
    g_\phi^{\text{Gaussian}}(s) = \sigma(as^2 + bs + c),
\end{equation}
where $\phi = \{a, b,c\}$ are learnable parameters and can be any real numbers.
Since $a=(2\sigma_0^2)^{-1} - (2\sigma_1^2)^{-1}$ can capture the different deviations of two classes during the training, we can handle the distinct distribution of each class.

\subsection{Gamma Calibration}
Gamma distribution is also widely adopted to model the score distribution of ranking models \cite{gamma99}.
Unlike Gaussian distribution that is symmetric about its mean, Gamma distribution can capture the skewed population of ranking scores that might exist in the datasets.
In this section, we set the class-conditional score distribution to Gamma distribution:
\begin{equation}
\begin{split}
    & p(s|Y=0) = \Gamma(\alpha_0)^{-1} \beta_0^{\alpha_0} s^{\alpha_0-1} \text{exp}(-\beta_0s), \\
    & p(s|Y=1) = \Gamma(\alpha_1)^{-1} \beta_1^{\alpha_1} s^{\alpha_1-1} \text{exp}(-\beta_1s), \\
\end{split}
\end{equation}
where $\Gamma(\cdot)$ is the Gamma function, $\alpha_0 , \alpha_1 , \beta_0, \beta_1 \in \mathbb{R}^+$ are the shape and the rate parameters of each Gamma distribution.
Then, the posterior is computed as follows:
\begin{equation}
\begin{split}
    P(Y=1|s) & = \frac{1}{1 + \pi_0 p(s|Y=0) / \pi_1 p(s|Y=1)} \\
    & = \frac{1}{1 + \frac{\pi_0 \beta_0^{\alpha_0} \Gamma(\alpha_1)}{\pi_1 \beta_1^{\alpha_1} \Gamma(\alpha_0)} s^{\alpha_0 - \alpha_1} \text{exp}[(\beta_1-\beta_0)s]} \\
    & = \frac{1}{1 + \text{exp}\big[ (\alpha_0-\alpha_1)\text{log}s + (\beta_1-\beta_0)s - c \big]} \\
    & = \sigma(a\text{log}s + bs + c),
\end{split}
\end{equation}
where $a = \alpha_1-\alpha_0$, $b = \beta_0-\beta_1$, and $c=\text{log}(\pi_1\beta_1^{\alpha_1}\Gamma(\alpha_0) / \pi_0\beta_0^{\alpha_0}\Gamma(\alpha_1)) \in \mathbb{R}$.
Therefore, Gamma calibration can be formalized as follows:
\begin{equation}
    g_\phi^{\text{Gamma}}(s) = \sigma(a\text{log}s + bs + c),
\end{equation}
where $\phi = \{a, b,c\}$ are learnable parameters.
Since Gamma distribution is defined only for the positive real number, we need to shift the score to make all the inputs positive: $s \leftarrow s-s_{\text{min}}$, where $s_{\text{min}}$ is the minimum ranking score. 

\subsection{Other Distributions}
Besides Gaussian and Gamma distribution, Swets \cite{exp69} adopts Exponential distributions for the score distribution of both classes:
\begin{equation}
\begin{split}
    & p(s|Y=0) = \lambda_0 \text{exp}(-\lambda_0 s), \\
    & p(s|Y=1) = \lambda_1 \text{exp}(-\lambda_1 s), \\
\end{split}
\end{equation}
where $\lambda_0, \lambda_1 \in \mathbb{R}^+$ are the rate parameters for each Exponential distribution.
Note that for Gamma distribution and Exponential distribution, we need to shift the score to make all the inputs positive: $s \leftarrow s-s_{min}$, where $s_{min}$ is the minimum ranking score. 
Then, the posterior can be computed as follows:
\begin{equation}
\begin{split}
    P(Y=1|s) & = \frac{1}{1 + \pi_0 p(s|Y=0) / \pi_1 p(s|Y=1)} \\
    & = \frac{1}{1 + \text{exp}[ (\lambda_1 - \lambda_0)s + \text{log}(\pi_0 \lambda_0 / \pi_1 \lambda_1) ]} \\
    & = \sigma(bs + c),
\end{split}
\end{equation}
where $b=\lambda_0 - \lambda_1$ and $c=\text{log}(\pi_1 \lambda_1 / \pi_0 \lambda_0) \in \mathbb{R}$.
It is the exact same form as Platt scaling.

On the other hand, Manmatha \cite{normexp01} proposes Exponential distribution for the negative class and Gaussian distribution for the positive class:
\begin{equation}
\begin{split}
    & p(s|Y=0) = \lambda_0 \text{exp}(-\lambda_0 s), \\
    & p(s|Y=1) = (\sqrt{2\pi}\sigma_1)^{-1} \text{exp} [ - (s-\mu_1)^2 / 2\sigma_1^2 ],\\
\end{split}
\end{equation}
where $\lambda_0, \mu_1, \sigma_1^2 \in \mathbb{R}^+$.
Then, we have the posterior as follows:
\begin{equation}
\begin{split}
    P(Y=1|s) & = \frac{1}{1 + \pi_0 p(s|Y=0) / \pi_1 p(s|Y=1)} \\
    & = \frac{1}{1 + \frac{\pi_0 \lambda_0 \sqrt{2\pi} \sigma_1}{\pi_1} \text{exp}[(s-\mu_1)^2 / 2\sigma_1^2 - \lambda_0 s] } \\
    & = \frac{1}{1 + \text{exp}[ (2\sigma_1^2)^{-1}s^2 - (\lambda_0 + \mu_1/\sigma_1^2)s - c ]} \\ 
    & = \sigma(as^2 + bs + c),
\end{split}
\label{manmatha}
\end{equation}
where $a=(-2\sigma_1^2)^{-1} \in \mathbb{R}^-$, $b=\lambda_0 + \mu_1/\sigma_1^2 \in \mathbb{R}^+$, and $c=\text{log}(\pi_1 / \pi_0 \lambda_0 \sqrt{2\pi} \sigma_1) - \mu_1^2/(2\sigma_1^2)\in \mathbb{R}$.
With the constraints $a<0$ and $b>0$, the calibration function in Eq.\ref{manmatha} cannot be monotonically increasing on $s>0$ with any parameters.

Lastly, Kanoulas \cite{gammagaussian10} proposes gamma distribution for the negative class and Gaussian distribution for the positive class:
\begin{equation}
\begin{split}
    & p(s|Y=0) = \rho(\alpha_0)^{-1} \beta_0^{\alpha_0} s^{\alpha_0-1} \text{exp}(-\beta_0s), \\
    & p(s|Y=1) = (\sqrt{2\pi}\sigma_1)^{-1} \text{exp} [ - (s-\mu_1)^2 / 2\sigma_1^2 ],\\
\end{split}
\end{equation}
where $\alpha_0$, $\beta_0$, $\mu_1$, $\sigma_1^2 \in \mathbb{R}^{+}$.
The posterior for these likelihoods can be computed as follows:
\begin{equation}
\begin{split}
    P(Y=1|s) & = \frac{1}{1 + \pi_0 p(s|Y=0) / \pi_1 p(s|Y=1)} \\
    & = \frac{1}{1 + \text{exp}[ \frac{1}{2\sigma_1^2}s^2 - (\beta_0 + \frac{\mu_1}{\sigma_1^2})s + (\alpha_0 -1)\text{log}s - c]} \\
    & = \sigma(as^2 + bs + b'\text{log}s + c),
\end{split}
\label{kanoulas}
\end{equation}
where $a=(-2\sigma_1^2)^{-1} \in \mathbb{R}^-$, $b=\beta_0+\mu_1/\sigma_1^2 \in \mathbb{R}^+$, $b'=1-\alpha_0 \in \mathbb{R}$, and $c=\text{log}(\pi_1 \rho(\alpha_0) / \pi_0\beta_0^{\alpha_0} \sigma_1 \sqrt{2\pi}) - \mu_1^2/(2\sigma_1^2) \in \mathbb{R}$.
For this function to be monotonically increasing, we need a non-linear constraint $b'- \frac{b^2}{8a} < 0$ (derivation of this constraint can be easily done by taking the derivative of Eq.\ref{kanoulas} w.r.t. the score $s$).
Since the optimization of logistic regression with non-linear constraints is not straightforward, it is hard for this posterior to satisfy the second desideratum.

\subsection{Monotonicity for Proposed Desiderata}
The proposed calibration methods naturally satisfy the first and the third of our desiderata:
(1) the proposed methods take the unbounded ranking scores and produce calibrated probabilities;
(2) the proposed methods have richer expressiveness than Platt scaling or temperature scaling, since they have a larger capacity to express asymmetric distributions.
The last condition that our calibration methods need to meet is that they should be monotonically increasing to maintain the ranking order.
To this end, we need linear constraints on the parameters of each method: $2as+b > 0$ for Gaussian calibration and $a/s+b>0$ for Gamma calibration.
Since these constraints are linear and we have only three learnable parameters, the optimization of constrained logistic regression is easily done in at most a few minutes with the existing module of Scipy \cite{scikit-learn}.

\begin{proposition}
Gaussian calibration $g_{\phi}(s) = \sigma(as^2+bs+c)$ is monotonically increasing only and only if the parameter $a$ and $b$ satisfy the constraint $2as+b < 0$ for $s_{\textup{min}}$ and $s_{\textup{max}}$.
\end{proposition}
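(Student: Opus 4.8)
The plan is to reduce the monotonicity of the composite map $g_\phi$ to a sign condition on a single affine function, and then to reduce that condition from the whole score range to its two endpoints. First I would differentiate by the chain rule, obtaining $g_\phi'(s)=\sigma'(as^2+bs+c)\,(2as+b)$. Because the logistic link satisfies $\sigma'(x)=\sigma(x)(1-\sigma(x))>0$ for every $x$, the factor $\sigma'(as^2+bs+c)$ is strictly positive, so the sign of $g_\phi'(s)$ is governed entirely by the affine factor $2as+b$. Hence $g_\phi$ is monotone on the score range exactly when $2as+b$ keeps a fixed sign there, and the monotonicity required in the statement is precisely such a sign condition on $2as+b$.

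The crux is the reduction from a condition on $2as+b$ quantified over the whole continuum of scores to the two-point check at $s_{\textup{min}}$ and $s_{\textup{max}}$ in the statement. Here I would use that $2as+b$ is affine in $s$, hence monotone; a monotone function on an interval takes every value between its two endpoint values, so it keeps a fixed sign on $[s_{\textup{min}},s_{\textup{max}}]$ if and only if it has that sign at both $s_{\textup{min}}$ and $s_{\textup{max}}$. This is exactly what lets us replace the continuum condition by the finite constraint $2as+b<0$ evaluated at $s_{\textup{min}}$ and $s_{\textup{max}}$ recorded in the statement. For the ``only if'' direction I would argue contrapositively: if this two-endpoint condition fails, then by affineness $2as+b$ either changes sign inside $[s_{\textup{min}},s_{\textup{max}}]$ or fails to keep the required sign, so $g_\phi'$ does not keep the required sign and $g_\phi$ cannot be monotone as claimed.

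The main obstacle I anticipate is correctly handling the domain: the ranking scores of the pre-trained model do not range over all of $\mathbb{R}$ but over the finite interval $[s_{\textup{min}},s_{\textup{max}}]$, and the entire content of the proposition is that monotonicity need only be enforced on this interval rather than globally. This matters because a genuine quadratic ($a\neq 0$) can never be monotone on all of $\mathbb{R}$, so the endpoint formulation is essential; I would therefore state the equivalence relative to $[s_{\textup{min}},s_{\textup{max}}]$ and verify both necessity and sufficiency of the endpoint check. The degenerate case $a=0$, where $2as+b$ is constant and the two endpoint checks coincide, should be treated separately to keep the ``if and only if'' watertight.
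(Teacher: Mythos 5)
Your proof takes essentially the same route as the paper's: differentiate by the chain rule, use strict positivity of $\sigma'(x)=\sigma(x)(1-\sigma(x))$ to reduce monotonicity to the sign of the affine factor $2as+b$, and use linearity of $2as+b$ to replace the interval condition by the two endpoint checks at $s_{\textup{min}}$ and $s_{\textup{max}}$. One caution: for monotone \emph{increase} the required sign is $2as+b>0$ (as the paper's own derivation and its surrounding text give); the ``$<0$'' in the proposition statement is a typo in the paper that you echoed, while your explicit contrapositive for the ``only if'' direction and the degenerate case $a=0$ are harmless extra care the paper's proof leaves implicit.
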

%Gaussian calibration $g_{\phi}(s) = \sigma(as^2+bs+c)$ is monotonically increasing if and only if the parameter $a$ and $b$ satisfy the constraint $2as+b > 0$ for $s_{\textup{min}}$ and $s_{\textup{max}}$.
\begin{proof}
\begin{equation*}
\begin{split}
    & g_{\phi}'(s) = \sigma'(as^2+bs+c) \cdot (2as+b) > 0 \\
    & \Longleftrightarrow 2as+b > 0 \\
    & \because \; \sigma'(x) = \sigma(x)(1-\sigma(x)) > 0 \;\; \textup{for all} \; x \in \mathbb{R} \\
    & \Longleftrightarrow 2as_{\textup{min}}+b > 0 \textup{ and } 2as_{\textup{max}}+b > 0 \\
    & \because 2as + b \textup{ is a linear function of s in } [s_{\textup{min}}, s_{\textup{max}}]. \\
\end{split}
\end{equation*}
\end{proof}

\begin{proposition}
Gamma calibration $g_{\phi}(s) = \sigma(a\textup{log}s+bs+c)$ is monotonically increasing only and only if the parameter $a$ and $b$ satisfy the constraint $a/s+b < 0$ for $s_{\textup{min}}$ and $s_{\textup{max}}$.
\end{proposition}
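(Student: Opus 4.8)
The plan is to follow the template of the preceding (Gaussian) proposition, since the monotonicity of $g_\phi$ again reduces to a sign condition on the derivative of the argument of the sigmoid. First I would differentiate via the chain rule: writing $h(s) = a\textup{log}s + bs + c$ for the argument, I obtain
\[
g_\phi'(s) = \sigma'(h(s))\,h'(s) = \sigma'(a\textup{log}s + bs + c)\left(\frac{a}{s} + b\right).
\]
Because $\sigma'(x) = \sigma(x)(1-\sigma(x)) > 0$ for every $x \in \mathbb{R}$, the factor $\sigma'(h(s))$ is strictly positive and never changes sign, so the sign of $g_\phi'(s)$ is governed entirely by the factor $a/s + b$. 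Consequently $g_\phi$ is monotone on $[s_{\textup{min}}, s_{\textup{max}}]$ exactly when $a/s + b$ keeps constant sign there, which the proposition records as the endpoint constraint $a/s + b < 0$ at $s_{\textup{min}}$ and $s_{\textup{max}}$.

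The second step is to justify replacing the ``for all $s \in [s_{\textup{min}}, s_{\textup{max}}]$'' requirement by its two endpoint instances. In the Gaussian case this was immediate because the controlling factor $2as + b$ is affine in $s$, so its extreme values on an interval are attained at the endpoints. Here I would instead observe that $\frac{d}{ds}\left(\frac{a}{s}+b\right) = -\frac{a}{s^2}$ has constant sign on $s > 0$ (the sign of $-a$), so $a/s + b$ is monotone in $s$ across the whole interval; a monotone function attains its extrema at the endpoints, so checking the constraint at $s_{\textup{min}}$ and $s_{\textup{max}}$ is equivalent to enforcing it on all of $[s_{\textup{min}}, s_{\textup{max}}]$, which yields the claimed equivalence. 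I would also record the mild domain prerequisite that both $\textup{log}s$ and $a/s$ require $s > 0$; this is precisely what the shift $s \leftarrow s - s_{\textup{min}}$ built into Gamma calibration secures, so the interval lies in the positive reals and the differentiation above is valid.

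The main obstacle, and essentially the only place the argument departs from the Gaussian proof, is this endpoint reduction: unlike the affine factor $2as + b$, the Gamma factor $a/s + b$ is nonlinear in $s$, so ``correct sign at the endpoints implies correct sign throughout'' is not automatic and must be earned from the monotonicity of $a/s + b$ in $s$ established via the derivative $-a/s^2$. Everything else --- the chain-rule differentiation and the strict positivity of $\sigma'$ --- transfers verbatim from the Gaussian proposition, so once the endpoint reduction is in place the stated equivalence follows directly.
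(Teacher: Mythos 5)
Your proof is correct and takes essentially the same route as the paper's: chain rule, strict positivity of $\sigma'$, and reduction of the sign condition to the endpoints via the monotonicity of $a/s+b$ on $[s_{\textup{min}}, s_{\textup{max}}]$ (your explicit computation $\frac{d}{ds}(a/s+b) = -a/s^2$ merely spells out what the paper asserts in one line, and your remark about $s>0$ via the shift $s \leftarrow s - s_{\textup{min}}$ is a sensible addition). One caution: the ``$a/s+b<0$'' in the proposition statement is a sign typo in the paper (cf.\ the surrounding text, which requires $a/s+b>0$), and monotone \emph{increase} needs the positive sign as your derivative computation and the paper's proof both actually use --- so phrase the conclusion as $a/s+b>0$ at both endpoints rather than ``keeps constant sign.''
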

%Gamma calibration $g_{\phi}(s) = \sigma(a\textup{log}s+bs+c)$ is monotonically increasing if and only if the parameter $a$ and $b$ satisfy the constraint $a/s+b > 0$ for $s_{\textup{min}}$ and $s_{\textup{max}}$.
\begin{proof}
\begin{equation*}
\begin{split}
    & g_{\phi}'(s) = \sigma'(a\textup{log}s+bs+c) \cdot (a/s+b) > 0 \\
    & \Longleftrightarrow a/s+b > 0 \\
    & \because \; \sigma'(x) = \sigma(x)(1-\sigma(x)) > 0 \;\; \textup{for all} \; x \in \mathbb{R} \\
    & \Longleftrightarrow a/s_{\textup{min}}+b > 0 \textup{ and } a/s_{\textup{max}}+b > 0 \\
    & \because a/s + b \textup{ is a monotonic function of s in } [s_{\textup{min}}, s_{\textup{max}}]. \\
\end{split}
\end{equation*}
\end{proof}

\section{Unbiased Parameter Fitting}
\subsection{Naive Log-loss}
After we formalize Gaussian Calibration and Gamma Calibration, we need to optimize their learnable parameters $\phi$.
A well-known way to fit them is to use log-loss on the held-out validation set, which can be the same set used for the hyperparameter tuning \cite{cal17, beta17}.
Since we only observe the interaction indicator $Y_{u,i}$, the naive negative log-likelihood is computed for a user-item pair as follows:
\begin{equation}
      \mathcal{L}_{\text{naive}} = - Y_{u,i} \, \text{log}( g_{\phi}(s_{u,i})) - (1-Y_{u,i}) \text{log}(1 - g_{\phi}(s_{u,i})).
\end{equation}
where $s_{u,i} = f_{\theta}(u,i)$ is the ranking score for the user-item pair.
Note that during the fitting of the calibration function $g_{\phi}(s)$, the parameters of the pre-trained ranking model $f_{\theta}(u,i)$ are fixed.

\subsection{Ideal Log-loss for Preference Estimation}
The observed interaction label $Y_{u,i}$, however, indicates the presence of user-item interaction, not the user's preference on the item.
Therefore, $Y_{u,i}=0$ does not necessarily mean the user's negative preference, but it can be that the user is not aware of the item.
If we fit the calibration function with $\mathcal{L}_{\textup{naive}}$, mapped probabilities could be biased towards the negative preference by treating the unobserved positive pair as the negative pair.
To handle this implicit interaction process, we borrow the idea of decomposing the interaction variable $Y_{u,i}$ into two independent binary variables \cite{ips16}:
\begin{equation}
\begin{split}
    Y_{u,i} & = O_{u,i} \cdot R_{u,i}, \\
    P(Y_{u,i}=1) & = P(O_{u,i}=1) \cdot P(R_{u,i}=1) \\
    & = \omega_{u,i} \cdot \rho_{u,i},
\end{split}
\end{equation}
where $O_{u,i}$ is a binary random variable representing whether the item $i$ is observed by user $u$, and $R_{u,i}$ is a binary random variable representing whether the item $i$ is preferred by user $u$.
The user-item pair interacts ($Y_{u,i}=1$) when the item is observed ($O_{u,i}=1$) and preferred ($R_{u,i}=1$) by the user.

The goal of this paper is to estimate the probability of an item being \textit{preferred} by a user, not the probability of an item being \textit{interacted} by a user.
Therefore, we need to train $g_{\phi}(s)$ for predicting $P(R=1|s)$ instead of $P(Y=1|s)$\footnote{We can replace $Y_{u,i}$ with $R_{u,i}$ in Eq.2.2 - 2.12.}.
To this end, we need a new ideal loss function that can guide the optimization towards the true preference probability:
\begin{equation}
      \mathcal{L}_{\text{ideal}} = - R_{u,i} \, \text{log}(g_{\phi}(s_{u,i})) - (1-R_{u,i}) \text{log}(1 - g_{\phi}(s_{u,i})).
\end{equation}
The ideal loss function enables the calibration function to learn the unbiased preference probability.
However, since we cannot observe the variable $R_{u,i}$ from the training set, the ideal log-loss cannot be computed directly.

\subsection{Unbiased Empirical Risk Minimization}
In this section, we design an unbiased empirical risk minimization (UERM) framework to obtain the ideal empirical risk minimizer.
We deploy the Inverse Propensity Scoring (IPS) estimator \cite{ips94}, which is a technique for estimating the counterfactual outcome of a subject under a particular treatment.
The IPS estimator is widely adopted for the unbiased rating prediction \cite{ips16, ips19} and the unbiased pairwise ranking \cite{joachims2017unbiased, saito19}.
For a user-item pair, the inverse propensity-scored log-loss for the unbiased empirical risk minimization is defined as follows:
\begin{equation}
      \mathcal{L}_{\text{UERM}} = - \frac{Y_{u,i}}{\omega_{u,i}} \, \text{log}( g_{\phi}(s_{u,i})) - (1-\frac{Y_{u,i}}{\omega_{u,i}}) \text{log}(1 - g_{\phi}(s_{u,i})),
\end{equation}
where $\omega_{u,i} = P(O_{u,i}=1)$ is called \textit{propensity score}. % which denotes the probability of the item $i$ being observed by user $u$.
\begin{proposition}
$\hat{\mathcal{R}}_{\textup{UERM}}(g_{\phi}|\omega)$, which is the empirical risk of $\mathcal{L}_{\textup{UERM}}$ on validation set with true propensity score $\omega$, is equal to $\hat{\mathcal{R}}_{\textup{ideal}}(g_{\phi})$, which is the ideal empirical risk.
\end{proposition}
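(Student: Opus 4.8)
The plan is to establish the claim as an unbiasedness statement: the UERM empirical risk is an unbiased estimator of the ideal empirical risk, where the expectation is taken over the observation mechanism $O$. First I would write both quantities explicitly as averages over the validation pairs,
\[
\hat{\mathcal{R}}_{\textup{ideal}}(g_\phi) = \frac{1}{N}\sum_{(u,i)} \mathcal{L}_{\textup{ideal}}, \qquad \hat{\mathcal{R}}_{\textup{UERM}}(g_\phi|\omega) = \frac{1}{N}\sum_{(u,i)} \mathcal{L}_{\textup{UERM}},
\]
and note that, because the pre-trained scores $s_{u,i}$ are held fixed during calibration, every term $\log g_\phi(s_{u,i})$ and $\log(1-g_\phi(s_{u,i}))$ is deterministic. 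The only stochastic quantities are the observed labels $Y_{u,i}$ and the latent preferences $R_{u,i}$, so these log-terms factor out of any expectation cleanly.

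The key step is to take the expectation of each UERM term over the observation variable $O_{u,i}$ while conditioning on the latent preference $R_{u,i}$. Using the decomposition $Y_{u,i}=O_{u,i}R_{u,i}$, the independence of $O_{u,i}$ and $R_{u,i}$, and $\mathbb{E}[O_{u,i}]=\omega_{u,i}$, I would compute, conditionally on $R_{u,i}$,
\[
\mathbb{E}_{O}\!\left[\frac{Y_{u,i}}{\omega_{u,i}}\right] = \frac{R_{u,i}\,\mathbb{E}[O_{u,i}]}{\omega_{u,i}} = R_{u,i}.
\]
Since $Y_{u,i}/\omega_{u,i}$ is exactly the coefficient of $-\log g_\phi(s_{u,i})$ and $1-Y_{u,i}/\omega_{u,i}$ the coefficient of $-\log(1-g_\phi(s_{u,i}))$ in $\mathcal{L}_{\textup{UERM}}$, these coefficients collapse in expectation to $R_{u,i}$ and $1-R_{u,i}$, which recovers $\mathcal{L}_{\textup{ideal}}$ term by term.

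Summing over all validation pairs and dividing by $N$ then yields $\mathbb{E}_{O}[\hat{\mathcal{R}}_{\textup{UERM}}(g_\phi|\omega)] = \hat{\mathcal{R}}_{\textup{ideal}}(g_\phi)$, which is the asserted equality. The step I expect to demand the most care is the conditioning argument: I must state explicitly that the expectation runs over the observation process $O$ with the preference variables held fixed, and that the deterministic log-terms pass through the expectation unchanged. Provided $\omega_{u,i}>0$ for every pair (so the inverse propensity weight is well-defined) and $O_{u,i}$ is independent of $R_{u,i}$ as assumed in the decomposition, the unbiasedness follows with no further approximation.
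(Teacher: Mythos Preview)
Your proposal is correct and follows essentially the same route as the paper: both arguments reduce to the single observation that $\mathbb{E}[Y_{u,i}/\omega_{u,i}]$ collapses to the preference quantity once $\mathbb{E}[Y_{u,i}]=\omega_{u,i}\rho_{u,i}$ (equivalently, $\mathbb{E}_{O}[Y_{u,i}\mid R_{u,i}]=\omega_{u,i}R_{u,i}$), so the $\omega_{u,i}$ cancels and the UERM coefficients become those of $\mathcal{L}_{\textup{ideal}}$ term by term. The only cosmetic difference is that the paper takes the full expectation over $Y$ (landing on $\rho_{u,i}$ rather than $R_{u,i}$), whereas you condition on $R$ and integrate out $O$ alone; your version is in fact slightly stronger since it gives conditional unbiasedness, but the mechanism is identical.
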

\begin{proof}
\begin{equation*}
\begin{split}
    \hat{\mathcal{R}}_{\text{UERM}}(g_{\phi}|\omega) = & \: \mathbb{E}_{(u,i) \in \mathcal{D}_{\text{val}}} \big[\mathcal{L}_{\text{IPS}}(u,i)\big] \\
    = & - \frac{1}{\abs{\mathcal{D}_{\text{val}}}} \sum_{(u,i) \in \mathcal{D}_{\text{val}}} \mathbb{E} \left[ \frac{Y_{u,i}}{\omega_{u,i}} \text{log}( g_{\phi}(s_{u,i})) \right. + \left. (1-\frac{Y_{u,i}}{\omega_{u,i}}) \text{log}(1 - g_{\phi}(s_{u,i})) \right]\\
    = & - \frac{1}{\abs{\mathcal{D}_{\text{val}}}} \sum_{(u,i) \in \mathcal{D}_{\text{val}}} \frac{\omega_{u,i}\rho_{u,i}}{\omega_{u,i}} \text{log}( g_{\phi}(s_{u,i})) + (1-\frac{\omega_{u,i}\rho_{u,i}}{\omega_{u,i}}) \text{log}(1 - g_{\phi}(s_{u,i})) \\
    = & - \frac{1}{\abs{\mathcal{D}_{\text{val}}}} \sum_{(u,i) \in \mathcal{D}_{\text{val}}} \rho_{u,i} \text{log}( g_{\phi}(s_{u,i})) + (1-\rho_{u,i}) \text{log}(1 - g_{\phi}(s_{u,i})) \\
    = & \: \mathbb{E}_{(u,i) \in \mathcal{D}_{\text{val}}} \big[\mathcal{L}_{\text{ideal}}(u,i)\big] \\
    = & \: \hat{\mathcal{R}}_{\text{ideal}}(g_{\phi}).
\end{split}
\end{equation*}
\end{proof}
This proposition shows that we can get the unbiased empirical risk minimizer by $\phi^{\textup{UERM}} = \textup{argmin}_{\phi}\{ \hat{\mathcal{R}}_{\text{UERM}}(g_{\phi}|\omega)\}$ when only $Y_{u,i}$ is observed.
The remaining challenge is to estimate the propensity score $\omega_{u,i}$ from the dataset.
There have been proposed several techniques for estimating the propensity score such as Naive Bayes \cite{ips16} or logistic regression \cite{rosenbaum2002overt}.
However, the Naive Bayes needs unbiased held-out data for the missing-at-random condition and the logistic regression needs additional information like user demographics and item categories.
In this paper, we adopt a simple way that utilizes the popularity of items as done in \cite{saito19}:
$\hat{\omega}_{u,i} = ( \sum_{u \in \mathcal{U}} Y_{u,i}/\text{max}_{i \in \mathcal{I}} \sum_{u \in \mathcal{U}} Y_{u,i} )^{0.5}.$
While one can be concerned that this estimate of propensity score may be inaccurate, Schnabel \cite{ips16} shows that we merely need to estimate better than the naive uniform assumption.
We provide an experimental result that demonstrates our estimate of the propensity score shows comparable performance with Naive Bayes and Logistic Regression that use additional information.

For deeper insights into the variability of the estimated empirical risk, we investigate the bias when the propensity scores are inaccurately estimated.
\begin{proposition}
The bias of $\, \hat{\mathcal{R}}_{\textup{UERM}}(g_{\phi}|\hat{\omega})$ induced by the inaccurately estimated propensity scores $\hat{\omega}$ is  $\frac{1}{\abs{\mathcal{D}_{\textup{val}}}} \sum_{(u,i) \in \mathcal{D}_{\textup{val}}} \rho_{u,i} \left(\frac{\omega_{u,i}}{\hat{\omega}_{u,i}}-1 \right) \textup{log}\left( \frac{g_{\phi}(s_{u,i})}{1-g_{\phi}(s_{u,i})} \right)$.
\end{proposition}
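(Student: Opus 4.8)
The plan is to read \emph{bias} in the way the preceding proposition makes natural: it is the gap between the ideal empirical risk $\hat{\mathcal{R}}_{\textup{ideal}}(g_{\phi})$ we would like to minimize and the expectation of the risk we actually minimize, $\hat{\mathcal{R}}_{\textup{UERM}}(g_{\phi}|\hat{\omega})$, now computed with the \emph{estimated} propensities $\hat{\omega}$ rather than the true $\omega$. That previous proposition shows the two coincide when $\hat{\omega}=\omega$, so whatever residual I derive must vanish in that limit, which is a clean sanity check. Throughout, $f_{\theta}$, $\phi$, and $\hat{\omega}$ are fixed, so the only randomness is in the labels $Y_{u,i}$ and every expectation is taken over them.

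First I would write $\hat{\mathcal{R}}_{\textup{UERM}}(g_{\phi}|\hat{\omega})$ as the average over $\mathcal{D}_{\textup{val}}$ of the per-pair loss $\mathcal{L}_{\textup{UERM}}$ with $\hat{\omega}_{u,i}$ substituted for $\omega_{u,i}$, and move the expectation inside the sum. The single input I need is the decomposition $Y_{u,i}=O_{u,i}R_{u,i}$ with $O_{u,i}$ and $R_{u,i}$ independent, giving $\mathbb{E}[Y_{u,i}]=\omega_{u,i}\rho_{u,i}$ and hence $\mathbb{E}[Y_{u,i}/\hat{\omega}_{u,i}]=\omega_{u,i}\rho_{u,i}/\hat{\omega}_{u,i}$. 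Substituting this produces a closed form for $\mathbb{E}[\hat{\mathcal{R}}_{\textup{UERM}}(g_{\phi}|\hat{\omega})]$ whose coefficients of $\textup{log}\,g_{\phi}(s_{u,i})$ and $\textup{log}(1-g_{\phi}(s_{u,i}))$ are $\omega_{u,i}\rho_{u,i}/\hat{\omega}_{u,i}$ and $1-\omega_{u,i}\rho_{u,i}/\hat{\omega}_{u,i}$. The ideal risk has the same two log terms but with coefficients $\rho_{u,i}$ and $1-\rho_{u,i}$, exactly as in the proof of the previous proposition.

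Next I would subtract the two risks term by term. The two constant $1$'s in the $\textup{log}(1-g_{\phi})$ coefficients cancel, so the coefficient of $\textup{log}\,g_{\phi}(s_{u,i})$ reduces to $\rho_{u,i}(\omega_{u,i}/\hat{\omega}_{u,i}-1)$ and the coefficient of $\textup{log}(1-g_{\phi}(s_{u,i}))$ to its negative. Factoring out $\rho_{u,i}(\omega_{u,i}/\hat{\omega}_{u,i}-1)$ leaves $\textup{log}\,g_{\phi}(s_{u,i})-\textup{log}(1-g_{\phi}(s_{u,i}))$, which is precisely the log-odds $\textup{log}\big(g_{\phi}(s_{u,i})/(1-g_{\phi}(s_{u,i}))\big)$ in the statement; averaging over $\mathcal{D}_{\textup{val}}$ then reproduces the claimed expression.

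The derivation is routine, so the real work is bookkeeping rather than any genuine obstacle. The one point that needs care is the sign: recovering the orientation written in the statement requires taking the bias as $\hat{\mathcal{R}}_{\textup{ideal}}-\mathbb{E}[\hat{\mathcal{R}}_{\textup{UERM}}]$, and I would pin this down using the $\hat{\omega}=\omega$ check, where the factor $\omega_{u,i}/\hat{\omega}_{u,i}-1$ vanishes and the bias correctly collapses to zero, consistent with the preceding proposition. A secondary point is to be explicit that the expectation runs over $Y_{u,i}$ alone and that it is the independence of $O_{u,i}$ and $R_{u,i}$ that lets $\mathbb{E}[Y_{u,i}]$ factor as $\omega_{u,i}\rho_{u,i}$; once that factorization is in place, the two logarithms combine into the log-odds with no further estimation needed.
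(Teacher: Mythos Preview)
Your approach matches the paper's exactly: take the expectation of $\mathcal{L}_{\textup{UERM}}$ using $\mathbb{E}[Y_{u,i}]=\omega_{u,i}\rho_{u,i}$, subtract the ideal risk term by term, and collapse the two log coefficients into a single log-odds factor. Your caution about the sign is well placed---the paper itself defines $\text{bias}=\hat{\mathcal{R}}_{\textup{UERM}}(g_{\phi}|\hat{\omega})-\hat{\mathcal{R}}_{\textup{ideal}}(g_{\phi})$ and its derivation ends at $\textup{log}\big((1-g_{\phi})/g_{\phi}\big)$, the negative of what the proposition states, so the orientation you chose is the one that actually matches the claimed formula.
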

\begin{proof}
\begin{equation*}
\begin{split}
    \text{bias} = & \, \hat{\mathcal{R}}_{\text{UERM}}(g_{\phi}|\hat{\omega}) - \hat{\mathcal{R}}_{\text{ideal}}(g_{\phi}) \\
    = & \, \frac{1}{\abs{\mathcal{D}_{\text{val}}}} \sum_{(u,i) \in \mathcal{D}_{\text{val}}} - \frac{\omega_{u,i}\rho_{u,i}}{\hat{\omega}_{u,i}} \text{log}(g_{\phi}(s_{u,i})) \\
    & - (1-\frac{\omega_{u,i}\rho_{u,i}}{\hat{\omega}_{u,i}})\text{log}(1-g_{\phi}(s_{u,i})) + \rho_{u,i}\text{log}(g_{\phi}(s_{u,i}))  + (1-\rho_{u,i})\text{log}(1-g_{\phi}(s_{u,i})) \\
    = & \, \frac{1}{\abs{\mathcal{D}_{\text{val}}}} \sum_{(u,i) \in \mathcal{D}_{\text{val}}} (\omega_{u,i}/\hat{\omega}_{u,i}-1) [-\rho_{u,i}\text{log}(g_{\phi}(s_{u,i})) + \rho_{u,i}\text{log}(1-g_{\phi}(s_{u,i}))] \\
    = & \, \frac{1}{\abs{\mathcal{D}_{\text{val}}}} \sum_{(u,i) \in \mathcal{D}_{\text{val}}} \rho_{u,i} \left(\frac{\omega_{u,i}}{\hat{\omega}_{u,i}}-1 \right) \text{log}\left( \frac{1-g_{\phi}(s_{u,i})}{g_{\phi}(s_{u,i})} \right). \\
\end{split}
\end{equation*}
\end{proof}
\noindent Obviously, the bias is zero when the propensity score is correctly estimated.
Furthermore, we can see that the magnitude of the bias is affected by the inverse of the estimated propensity score.
This finding is consistent with the previous work \cite{clip19} that proposes to adopt a propensity clipping technique to reduce the variability of the bias.
In this work, we use a simple clipping technique $\hat{\omega}_{u,i} \leftarrow \max\{\hat{\omega}_{u,i}, 0.1\}$ that can prevent the item with extremely low popularity from amplifying the bias \cite{saito19}.

\section{Experiment}
\subsection{Experimental Setup}
Our source code is publicly available\footnote{\url{https://github.com/WonbinKweon/CalibratedRankingModels_AAAI2022}}.

\textbf{Datasets.} To evaluate the calibration quality of predicted preference probability, we need an unbiased test set where we can directly observe the preference variable $R_{u,i}$ without any bias from the observation process $O_{u,i}$.
To the best of our knowledge, there are two real-world datasets that have separate unbiased test sets where the users are asked to rate uniformly sampled items (i.e., $O_{u,i}=1$ for test sets).
Note that in the training set, we only observe the interaction $Y_{u,i}$.
Yahoo!R3\footnote{\url{http://research.yahoo.com/Academic_Relations}} has over 300K \textit{interactions} in the training set and 54K \textit{preferences} in the test set from 15.4K users and 1K songs.
Coat \cite{ips16} has over 7K interactions in the training set and 4.6K preferences in the test set from 290 users and 300 coats.
We hold out 10\% of the training set as the validation set for the hyperparameter tuning of the base models and the optimization of the calibration methods.
For the training and the test set, we convert the ratings over 3 to $Y=1$, and the ratings under 4 to $Y=0$ as done in the conventional papers \cite{ncf17, saito19}.

\textbf{Base models.}
For rigorous evaluation, we apply the calibration methods on several widely-used personalized ranking models with various model architectures and loss functions: Bayesian Personalized Ranking (BPR) \cite{bpr09}, Neural Collaborative Filtering (NCF) \cite{ncf17}, Collaborative Metric Learning (CML) \cite{cml17}, Unbiased BPR (UBPR) \cite{saito19}, and LightGCN (LGCN) \cite{lightgcn20}.
BPR \cite{bpr09} learns the user and the item embeddings with BPR loss:
\begin{equation}
    \mathcal{L}_{\textup{BPR}} = \sum_{u \in \mathcal{U}, i,j \in \mathcal{I}} - \text{log} \sigma(f_\theta(u,i) - f_\theta(u,j)) Y_{u,i} (1-Y_{u,j}),
\end{equation}
where $f_\theta(u,i) = \textbf{e}_u \cdot \textbf{e}_i$ and $\textbf{e}_u$, $\textbf{e}_i$ are user and item embeddings.
NCF \cite{ncf17} learns the ranking score of a user-item pair with the binary cross-entropy loss:
\begin{equation}
\begin{split}
      \mathcal{L}_{\textup{NCF}} & = \sum_{u \in \mathcal{U}, i \in \mathcal{I}} -Y_{u,i}\textup{log}\sigma(f_\theta(u,i)) \\ & -(1-Y_{u,i})\textup{log}(1-\sigma(f_\theta(u,i))),  
\end{split}
\end{equation}
where $f_\theta(u,i) = \textup{NeuralNet}(\textbf{e}_u, \textbf{e}_i)$.
CML \cite{cml17} learns the user and the item embeddings with the triplet loss:
\begin{equation}
    \mathcal{L}_{\textup{CML}} = \sum_{u \in \mathcal{U}, i,j \in \mathcal{I}} \textup{max}(0, m+f_\theta(u,i) - f_\theta(u,j)) Y_{u,i} (1-Y_{u,j}),
\end{equation}
where $m$ is the margin and $f_\theta(u,i) = ||\textbf{e}_u - \textbf{e}_i||^2$.
UBPR \cite{saito19} learns the user and the item embeddings with the inverse propensity-scored BPR loss:
\begin{equation}
    \mathcal{L}_{\textup{UBPR}} = \sum_{u \in \mathcal{U}, i,j \in \mathcal{I}} - \text{log} \sigma(f_\theta(u,i) - f_\theta(u,j)) \frac{Y_{u,i}}{\omega_{u,i}} (1-\frac{Y_{u,i}}{\omega_{u,i}}),
\end{equation}
where $f_\theta(u,i) = \textbf{e}_u \cdot \textbf{e}_i$ and $\omega_{u,i}$ is the propensity score estimated by the same technique used in our framework.
LGCN \cite{lightgcn20} learns the user and the item embeddings with BPR loss:
\begin{equation}
    \mathcal{L}_{\textup{LGCN}} = \sum_{u \in \mathcal{U}, i,j \in \mathcal{I}} - \text{log} \sigma(f_\theta(u,i) - f_\theta(u,j)) Y_{u,i} (1-Y_{u,j}),
\end{equation}
where $f_\theta(u,i) = \textup{LGCN}(\textbf{e}_u) \cdot \textup{LGCN}(\textbf{e}_i)$ and  $\textup{LGCN}(\cdot)$ is simplified Graph Convolutional Networks (GCN).

\begin{table}[ht!]
    \centering\fontsize{9}{10}\selectfont
    \begin{tabular}{c|ccccc|ccccc}
        \toprule
         & \multicolumn{5}{c|}{Yahoo!R3} & \multicolumn{5}{c}{Coat} \\
        \toprule
        Metric & BPR & NCF & CML & UBPR & LGCN & BPR & NCF & CML & UBPR & LGCN \\
        \midrule
        NDCG@1 & 0.5070 & 0.5181 & 0.5259 & 0.5328 & \textbf{0.5443} & 0.3924 & 0.5341 & \textbf{0.5865} & 0.3982 & 0.4008 \\
        NDCG@3 & 0.5519 & 0.5812 & 0.5716 & 0.5919 & \textbf{0.6002} & 0.3761 & \textbf{0.5129} & 0.5020 & 0.3962 & 0.3973 \\
        NDCG@5 & 0.6176 & 0.6467 & 0.6379 & 0.6555 & \textbf{0.6637} & 0.4302 & \textbf{0.5452} & 0.5142 & 0.4418 & 0.4301 \\
        Recall@1 & 0.3126 & 0.3213 & 0.3286 & 0.3345 & \textbf{0.3395} & 0.1321 & 0.2132 & \textbf{0.2334} & 0.1354 & 0.1395 \\ 
        Recall@3 & 0.5743 & 0.6098 & 0.5918 & 0.6207 & \textbf{0.6280} & 0.2852 & \textbf{0.4166} & 0.3847 & 0.3181 & 0.3113 \\ 
        Recall@5 & 0.7428 & 0.7779 & 0.7613 & 0.7820 & \textbf{0.7915} & 0.4638 & \textbf{0.5146} & 0.4847 & 0.4757 & 0.4816 \\
        \bottomrule
    \end{tabular}
    \caption{Ranking performance of each personalized ranking model. Numbers in boldface are the best results.}
    \label{baseper}
\end{table}

For the base models, we basically follow the source code of the authors.
NCF has 2-layer MLP for MLP module and 1-layer MLP for the prediction module.
LGCN has a 2-layer simplified GCN for the training and the inference.
We use 128 for the size of user and item embeddings for all base models, except NCF which adopts 64 for the embedding size.
The batch size is 512, the learning rate is 0.001, the weight decay rate is 0.001, and the negative sample rate is 1.
Each model is trained until the convergence and their ranking performance can be found in Table \ref{baseper}.
LGCN shows the best performance for Yahoo!R3 dataset and NCF or CML shows the best performance for Coat dataset.

\textbf{Calibration methods compared.}
We evaluate the proposed calibration methods with various calibration methods.
For the naive baseline, we adopt the minmax normalizer and the sigmoid function which simply re-scale the scores into [0,1] without calibration.
For non-parametric methods, we adopt Histogram binning \cite{hist01}, Isotonic regression \cite{iso12}, and BBQ \cite{bbq15}.
For Histogram binning, we set the number of bins as 50 for Yahoo!R3, 15 for Coat.
For BBQ, we set the number of binning methods as 4 and the number of bins of each binning method is \{10,20,50,100\}.
For those that do not meet the first desideratum (BBQ and Beta calibration), we adopt the sigmoid function to re-scale the input into [0,1].
For parametric methods, we adopt Platt scaling \cite{platt99} and Beta calibration \cite{beta17}.
Note that we do not compare recent work designed for multi-class classification \cite{kull2019beyond, rahimi2020intra}, since they are either the generalized version of Beta calibration or cannot be directly adopted for the personalized ranking models.

\textbf{Evaluation metrics.}
We adopt well-known calibration metrics like ECE, MCE with $M=15$, and NLL as done in recent work \cite{kull2019beyond, rahimi2020intra}.
We also plot the reliability diagram that shows the discrepancy between the accuracy and the average calibrated probability of each probability interval.
Note that evaluation metrics are computed on $R_{u,i}$ which is observed only from the test set.

\textbf{Evaluation process.}
We first train the base personalized ranking model $f_{\theta}(u,i)$ with $Y_{u,i}$ on the training set.
Second, we compute ranking score $s_{u,i} = f_{\theta}(u,i)$ for user-item pairs in the validation set.
Third, we optimize the calibration method $g_{\phi}(s)$ on the validation set with the computed $s_{u,i}$ and the estimated $\hat{\omega}_{u,i}$, with $f_{\theta}(u,i)$ fixed.
Lastly, we evaluate the calibrated probability $p=g_{\phi}(s_{u,i})$ with $R_{u,i}$ from the unbiased test set by using the above evaluation metrics.

\textbf{Computing  infrastructures}
We adopt a Titan X GPU and an Intel(R) Core(TM) i7-7820X 3.60GHz CPU.
Optimization of all calibration methods is done in at most a few minutes.

% ECE
\begin{sidewaystable}[ph!]
    \caption{Expected Calibration Error of each calibration method applied on five personalized ranking models. Numbers in boldface are the best results and \textit{Improv} denotes the improvement of the best proposed method over the best competitor (Platt or Beta with $\mathcal{L}_{\text{UERM}}$).}
    \begin{tabular}{cc|ccccc|ccccc}
        \toprule
         & & \multicolumn{5}{c|}{Yahoo!R3} & \multicolumn{5}{c}{Coat} \\
        \toprule
        Type & Methods & BPR & NCF & CML & UBPR & LGCN & BPR & NCF & CML & UBPR & LGCN \\
        \midrule
        \multirow{2}{*}{uncalibrated} & MinMax & 0.4929 & 0.4190 & 0.3152 & 0.3004 & 0.2258 & 0.1790 & 0.4624 & 0.1834 & 0.1920 & 0.2350\\
         & Sigmoid & 0.3065 & 0.0729 & 0.0526 & 0.2516 & 0.3024 & 0.2196 & 0.1422 & 0.0647 & 0.1415 & 0.0508\\
        \midrule
        \multirow{3}{*}{non-parametric} & Hist & 0.0161 & 0.0133 & 0.0641 & 0.0130 & 0.0194 & 0.0552 & 0.0230 & 0.0161 & 0.0514 & 0.0470\\
         & Isotonic & 0.0146 & 0.0130 & 0.0635 & 0.0127 & 0.0154 & 0.0474 & 0.0159 & 0.0160 & 0.0490 & 0.0453\\
         & BBQ & 0.0136 & 0.0137 & 0.0634 & 0.0140 & 0.0165 & 0.0552 & 0.0178 & 0.0198 & 0.0459 & 0.0494\\
        \midrule
        \multirow{4}{*}{\shortstack{parametric \\ w/ $\mathcal{L}_{\textup{naive}}$}} & Platt & 0.0126 & 0.0146 & 0.0515 & 0.0107 & 0.0099 & 0.0441 & 0.0245 & 0.0203 & 0.0423 & 0.0407\\
         & Beta & 0.0127 & 0.0144 & 0.0504 & 0.0105 & 0.0150 & 0.0451 & 0.0258 & 0.0270 & 0.0416 & 0.0407\\
         & Gaussian & 0.0129 & 0.0104 & 0.0486 & 0.0105 & 0.0073 & 0.0436 & 0.0264 & 0.0245 & 0.0410 & 0.0404\\
         & Gamma & 0.0108 & 0.0145 & 0.0512 & 0.0107 & 0.0098 & 0.0424 & 0.0239 & 0.0208 & 0.0405 & 0.0406\\
        \midrule
          & Platt & 0.0106 & 0.0129 & 0.0303 & 0.0100 & 0.0070 & 0.0411 & 0.0120 & 0.0155 & 0.0354 & 0.0224\\
        parametric & Beta & 0.0109 & 0.0132 & 0.0305 & 0.0094 & 0.0076 & 0.0414 & 0.0075 & 0.0183 & 0.0375 & 0.0266\\
        w/ $\mathcal{L}_{\textup{UERM}}$ & Gaussian & 0.0106 & \textbf{0.0096} & \textbf{0.0285} & \textbf{0.0070} & \textbf{0.0061} & 0.0393 & 0.0062 & \textbf{0.0147} & \textbf{0.0323} & \textbf{0.0208}\\
         & Gamma & \textbf{0.0100} & 0.0117 & 0.0287 &0.0085 &0.0065 & \textbf{0.0390} & \textbf{0.0061} &0.0148 & 0.0326 & 0.0215\\
        \midrule
        & \textit{Improv} & 5.85\% & 25.35\% & 5.94\% & 25.85\% & 12.86\% & 5.21\% & 18.67\% & 5.41\% & 8.81\% & 7.14\% \\
        \bottomrule
    \end{tabular}
    \label{calimain}
\end{sidewaystable}

% MCE
\begin{sidewaystable}[ph!]
    \caption{Maximum Calibration Error with $M=15$ of each calibration method applied on five personalized ranking models. Numbers in boldface are the best results.}
    \begin{tabular}{cc|ccccc|ccccc}
        \toprule
         & & \multicolumn{5}{c|}{Yahoo!R3} & \multicolumn{5}{c}{Coat} \\
        \toprule
        Type & Methods & BPR & NCF & CML & UBPR & LGCN & BPR & NCF & CML & UBPR & LGCN \\
        \midrule
        \multirow{2}{*}{uncalibrated} & MinMax & 0.4929 & 0.4190 & 0.3152 & 0.3004 & 0.2258 & 0.1790 & 0.4624 & 0.1834 & 0.1920 & 0.2350\\
         & Sigmoid & 0.5726 & 0.5248 & 0.2571 & 0.7103 & 0.5082 & 0.3868 & 0.6223 & 0.6707 & 0.2699 & 0.4642\\
        \midrule
        \multirow{3}{*}{non-parametric} & Hist & 0.2620 & 0.2369 & 0.6184 & 0.1349 & 0.1366 & 0.2492 & 0.4083 & 0.4924 & 0.2910 & 0.3940\\
         & Isotonic & 0.2136 & 0.2173 & 0.5252 & 0.1909 & 0.1270 & 0.2495 & 0.4246 & 0.4812 & 0.2547 & 0.3500\\
         & BBQ & 0.2116 & 0.2489 & 0.7076 & 0.1701 & 0.1157 & 0.2545 & 0.3749 & 0.3423 & 0.3358 & 0.2589\\
        \midrule
        \multirow{4}{*}{\shortstack{parametric \\ w/ $\mathcal{L}_{\textup{naive}}$}} & Platt & 0.2032 & 0.2876 & 0.3254 & 0.3117 & 0.3068 & 0.3664 & 0.1844 & 0.5796 & 0.4079 & 0.4633\\
         & Beta & 0.2563 & 0.2575 & 0.5049 & 0.2705 & 0.3033 & 0.4071 & 0.2175 & 0.5782 & 0.4352 & 0.3367\\
         & Gaussian & 0.2387 & 0.2366 & 0.3737 & 0.3282 & 0.2200 & 0.3177 & 0.1678 & 0.5878 & 0.4807 & 0.3854\\
         & Gamma & 0.2123 & 0.2260 & 0.3149 & 0.3126 & 0.2962 & 0.3616 & 0.1702 & 0.5968 & 0.4185 & 0.3480\\
        \midrule
          & Platt & 0.1951 & 0.2504 & 0.2293 & 0.1257 & 0.1143 & 0.2625 & 0.1882 & 0.4423 & 0.2708 & 0.2225\\
        parametric & Beta & 0.2004 & 0.2453 & 0.2734 & 0.1317 & 0.1905 & 0.3278 & 0.2085 & 0.4150 & 0.2574 & 0.2501\\
        w/ $\mathcal{L}_{\textup{UERM}}$ & Gaussian & 0.1816 & 0.2259 & 0.2451 & \textbf{0.1231} & \textbf{0.1064} & \textbf{0.2380} & \textbf{0.1236} & \textbf{0.3390} & \textbf{0.2419} & 0.2117\\
         & Gamma & \textbf{0.1592} & \textbf{0.2074} & \textbf{0.2268} & \textbf{0.1231} & 0.1118 & 0.2543 & \textbf{0.1236} &0.4663 &0.2478 & \textbf{0.2023}\\
        \bottomrule
    \end{tabular}
    \label{tabmce}
\end{sidewaystable}

% NLL
\begin{sidewaystable}[ph!]
\caption{Negative Log-Likelihood of each calibration method applied on five personalized ranking models. Numbers in boldface are the best results.}
    \begin{tabular}{cc|ccccc|ccccc}
        \toprule
         & & \multicolumn{5}{c|}{Yahoo!R3} & \multicolumn{5}{c}{Coat} \\
        \toprule
        Type & Methods & BPR & NCF & CML & UBPR & LightGCN & BPR & NCF & CML & UBPR & LightGCN \\
        \midrule
        \multirow{2}{*}{uncalibrated} & MinMax & 0.4929 & 0.4190 & 0.3152 & 0.3004 & 0.2258 & 0.1790 & 0.4624 & 0.1834 & 0.1920 & 0.2350\\
         & Sigmoid & 0.7030 & 0.4641 & 0.3072 & 0.7083 & 0.6890 & 0.6727 & 1.0978 & 0.4819 & 0.6780 & 0.6904\\
        \midrule
        \multirow{3}{*}{non-parametric} & Hist & 0.2753 & 0.2717 & 0.3361 & 0.2728 & 0.2769 & 0.6264 & 0.5116 & 0.5412 & 0.6177 & 0.5077\\
         & Isotonic & 0.2726 & 0.2723 & 0.3240 & 0.2728 & 0.2683 & 0.5110 & 0.4953 & 0.4711 & 0.5224 & 0.5386\\
         & BBQ & 0.2725 & 0.2693 & 0.3517 & 0.2749 & 0.2691 & 0.4867 & 0.4790 & 0.4895 & 0.5092 & 0.4813\\
        \midrule
        \multirow{4}{*}{\shortstack{parametric \\ w/ $\mathcal{L}_{\textup{naive}}$}} & Platt & 0.2756 & 0.2700 & 0.3184 & 0.2735 & 0.2671 & 0.4748 & 0.4771 & 0.4747 & 0.4735 & 0.4758\\
         & Beta & 0.2755 & 0.2697 & 0.3250 & 0.2738 & 0.2673 & 0.4766 & 0.4796 & 0.4747 & 0.4741 & 0.4776\\
         & Gaussian & 0.2748 & 0.2676 & 0.3196 & 0.2725 & 0.2671 & 0.4766 & 0.4768 & 0.4755 & 0.4730 & 0.4761\\
         & Gamma & 0.2735 & 0.2699 & 0.3181 & 0.2735 & 0.2672 & 0.4754 & 0.4764 & 0.4739 & 0.4747 & 0.4749\\
        \midrule
          & Platt & 0.2749 & 0.2684 & 0.3034 & 0.2723 & 0.2675 & 0.4746 & 0.4655 & 0.4654 & 0.4716 & 0.4741\\
        parametric & Beta & 0.2747 & 0.2685 & 0.3022 & 0.2721 & 0.2672 & 0.4744 & 0.4684 & 0.4665 & 0.4713 & 0.4751\\
        w/ $\mathcal{L}_{\textup{UERM}}$ & Gaussian & 0.2743 & \textbf{0.2666} & \textbf{0.3021} & \textbf{0.2715} & \textbf{0.2671} & \textbf{0.4735} & \textbf{0.4619} & 0.4653 & \textbf{0.4711} & \textbf{0.4727}\\
         & Gamma & \textbf{0.2723} & 0.2681 & 0.3035 & 0.2720 &0.2674 & 0.4742& 0.4649 &\textbf{0.4651} &\textbf{0.4711} & 0.4733\\
        \bottomrule
    \end{tabular}
    \label{tabnll}
\end{sidewaystable}

\subsection{Comparing Calibration Performance}
Table \ref{calimain}, \ref{tabmce}, and \ref{tabnll} show ECE, MCE, and NLL of each calibration method applied on the various personalized ranking models, respectively.
ECE indicates how well the calibrated probabilities and ground-truth likelihoods match on the test set across all probability ranges.
First, the minmax normalizer and the sigmoid function produce poorly calibrated preference probabilities.
It is obvious because the ranking scores do not have any probabilistic meaning and naively re-scaling them cannot reflect the score distribution.

Second, the parametric methods better calibrate the preference probabilities than the non-parametric methods in most cases.
This is consistent with recent work \cite{cal17, kull2019beyond} for image classification.
The non-parametric calibration methods lack rich expressiveness since they rely on the binning scheme, which maps the ranking scores to the probabilities in a discrete manner.
On the other hand, the parametric calibration methods fit the continuous functions based on the parametric distributions.
Therefore, they have a more granular mapping from the ranking scores to the preference probabilities.

Third, every parametric calibration method benefits from adopting $\mathcal{L}_{\textup{UERM}}$ instead of $\mathcal{L}_{\textup{naive}}$ for the parameter fitting.
The naive log-loss treats all the unobserved pairs as negative pairs and makes the calibration methods produce biased preference probabilities.
On the contrary, inverse propensity-scored log-loss handles such problem and enables us to compute the ideal empirical risk indirectly.
As a result, ECE decreases by 7.40\%-76.52\% for all parametric methods compared to when the naive log-loss is used for the optimization.

Lastly, Gaussian calibration and Gamma calibration with $\mathcal{L}_{\textup{UERM}}$ show the best calibration performance across all base models and datasets.
Platt scaling can be seen as a special case of the proposed methods with $a=0$, so it has less expressiveness in terms of the capacity of parametric family.
Beta distribution is only defined in [0,1], so it cannot represent the unbounded ranking scores.
To adopt Beta calibration, we need to re-scale the ranking score, however, it is not verified for the optimality \cite{iso12}.
As a result, our calibration methods improve ECE by 5.21\%-25.85\% over the best competitor.
Also, since our proposed models have a larger capacity of expressiveness, they show larger improvement on Yahoo!R3, which has more samples to fit the parameters than Coat. 

% rel diagram
\begin{figure*}[t!]
\centering 
\subfigure{\includegraphics[width=0.18\linewidth]{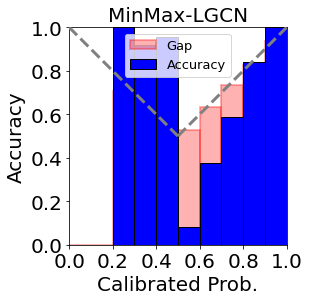}}
\subfigure{\includegraphics[width=0.18\linewidth]{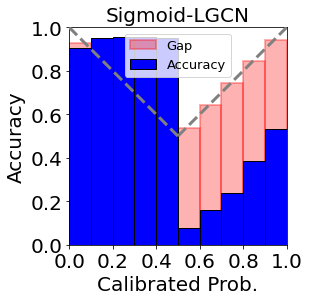}}
\subfigure{\includegraphics[width=0.18\linewidth]{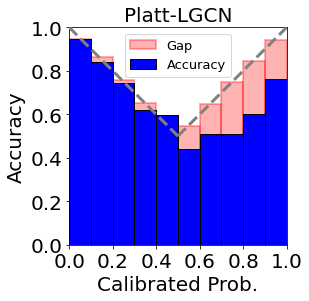}}
\subfigure{\includegraphics[width=0.18\linewidth]{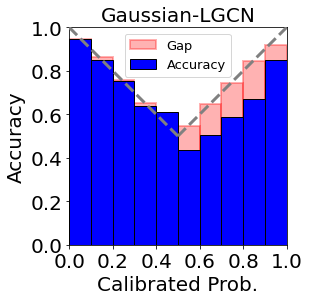}}
\subfigure{\includegraphics[width=0.18\linewidth]{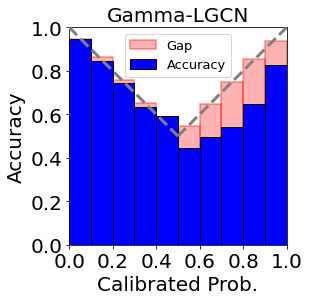}}
\subfigure{\includegraphics[width=0.18\linewidth]{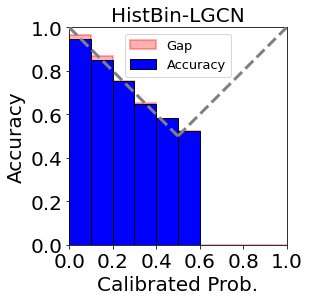}}
\subfigure{\includegraphics[width=0.18\linewidth]{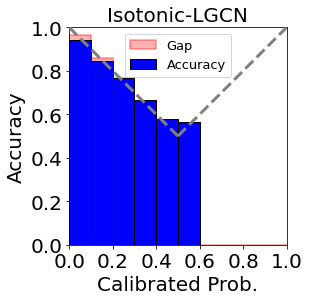}}
\subfigure{\includegraphics[width=0.18\linewidth]{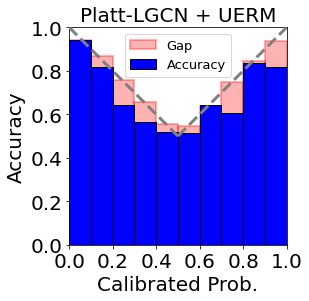}}
\subfigure{\includegraphics[width=0.18\linewidth]{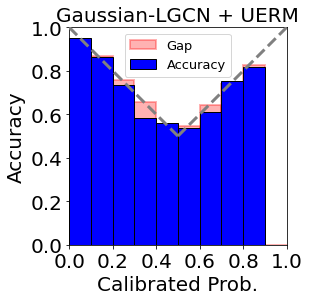}}
\subfigure{\includegraphics[width=0.18\linewidth]{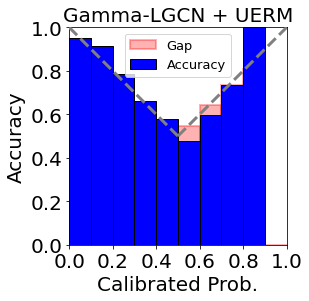}}
\caption{Reliability diagram of each calibration method. Gap denotes the discrepancy between the accuracy and the average calibrated probability for each bin. The grey dashed line is a diagonal function that indicates the ideal reliability line where the blue accuracy bar should meet.}
\label{rd}
\end{figure*}

\subsection{Reliability Diagram}
Figure \ref{rd} shows the reliability diagram \cite{cal17} for each calibration method applied on LGCN for Yahoo!R3.
We partition the calibrated probabilities $g_{\phi}(s)$ into 10 equi-spaced bins and compute the accuracy and the average calibrated probability for each bin (i.e., the first and the second term in Eq.\ref{ecem}, respectively).
The accuracy is the same with the ground-truth proportion of positive samples for the positive bins (i.e., probability over 0.5) and the ground-truth proportion of negative samples for the negative bins (i.e., probability under 0.5).
Note that the bar does not exist if the bin does not have any prediction in it.

First, the non-parametric calibration methods do not produce the probability over 0.6.
It is because they can easily be overfitted to the unbalanced user-item interaction datasets since they do not have any prior distribution. 
On the other hand, the parametric calibration methods produce probabilities across all ranges by avoiding such overfitting problem with the prior parametric distributions.

Second, the parametric calibration methods with UERM produce well-calibrated probabilities especially for the positive preference ($p>0.5$).
The naive log-loss makes the calibration methods biased towards the negative preference, by treating all the unobserved pairs as the negative pairs.
As a result, the parametric methods with the naive log-loss (upper-right three diagrams of Figure 1) show large gaps in the positive probability range ($p>0.5$).
On the contrary, UERM framework successfully alleviates this problem and produces much smaller gaps for the positive preference (lower-right three diagrams of Figure \ref{rd}).
Lastly, it is quite a natural result that parametric methods with UERM do not produce the probability over 0.9, considering that the users prefer only a few items among a large number of items.

\subsection{Score Distribution \& Fitted Function}
\begin{figure}[t]
\centering 
\subfigure{\includegraphics[width=0.4\linewidth]{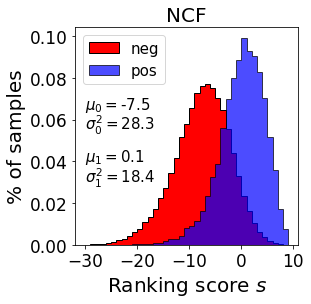}}
\subfigure{\includegraphics[width=0.4\linewidth]{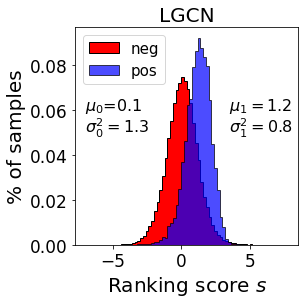}}
\caption{Ranking score distributions of negative and positive pairs.}
\label{sd}
\end{figure}

\begin{figure}[t]
\centering 
\subfigure{\includegraphics[width=0.3\linewidth]{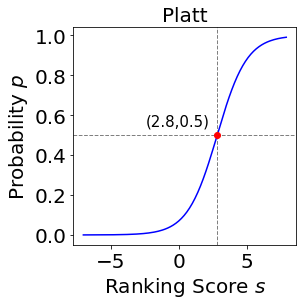}}
\subfigure{\includegraphics[width=0.3\linewidth]{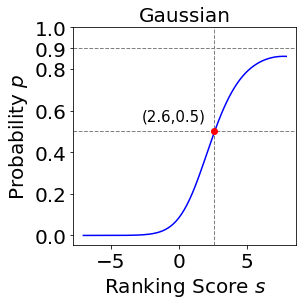}}
\subfigure{\includegraphics[width=0.3\linewidth]{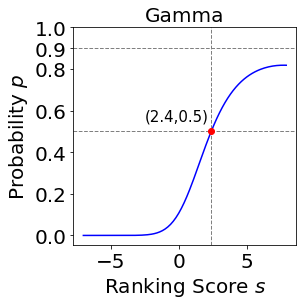}}
\caption{Fitted function of each calibration method.}
\label{fcf}
\end{figure}
Figure \ref{sd} shows the distribution of ranking scores trained by NCF and LGCN on Yahoo!R3.
We can see that the class-conditional score distributions have different deviations ($\sigma_0 > \sigma_1$) and skewed shapes (left tails are longer than the right tails).
This indicates that Platt scaling (or temperature scaling) assuming the same variance for both classes cannot effectively handle these score distributions.
Figure \ref{fcf} shows the fitted calibration function of each parametric method adopted on LGCN and optimized with UERM.
Since most of the user-item pairs are negative in the interaction datasets, all three functions are fitted to produce the low probability under 0.1 for a wide bottom range to reflect the dominant negative preferences.
Platt scaling is forced to have the symmetric shape due to its parametric family, so it produces the high probability over 0.9 which is symmetrical to that of under 0.1.
On the other hand, Gaussian calibration and Gamma calibration, which have a larger expressive power, learn asymmetric shapes tailored to the score distributions having different deviations and skewness.
This result shows that they effectively handle the imbalance of user-item interaction datasets and supports the experimental superiority of the proposed methods.

\subsection{Propensity Estimation}
\begin{figure}[t]
\centering 
\includegraphics[width=0.9\linewidth]{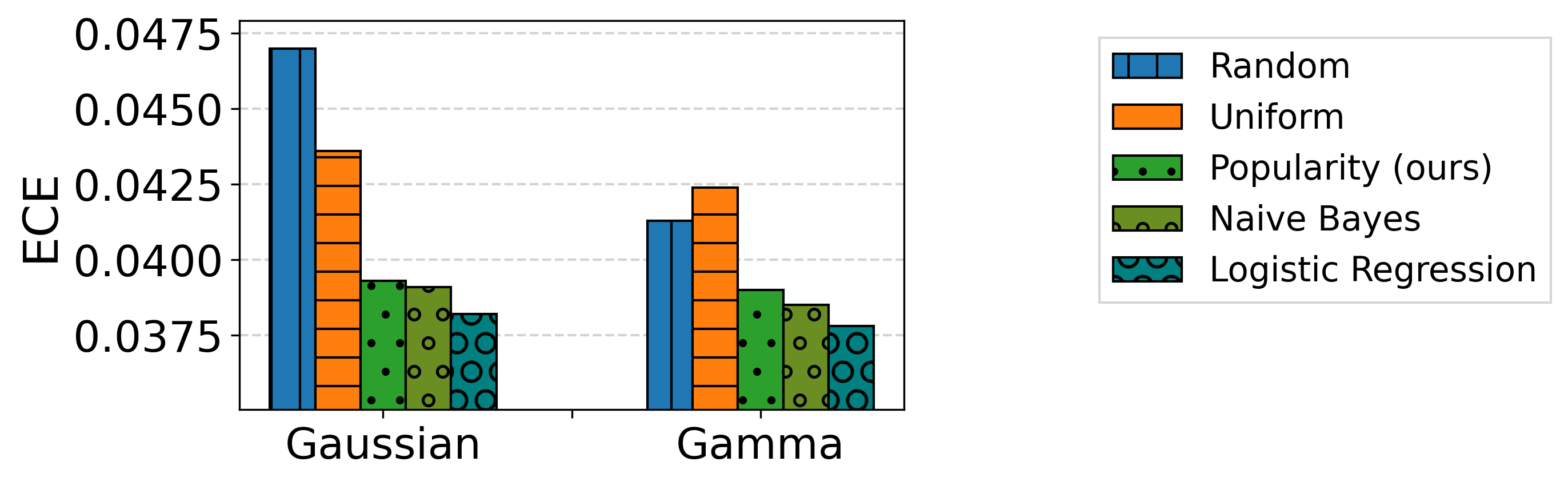}
\caption{ECE with various propensity estimation techniques. In this work, we utilize item popularities to estimate the propensity scores.}
\label{pe}
\end{figure}
Figure \ref{pe} shows the ECE of the proposed methods adopted on BPR for Coat dataset with various propensity estimation techniques.
Random estimation produces random propensity scores from [0,1].
Uniform denotes $\omega=1$, which is equivalent to the naive log-loss.
In this work, we utilize item popularities to estimate the propensity scores.
Naive Bayes exploits the test set to compute the conditional probabilities.
Lastly, Logistic Regression uses user demographics and item categories to estimate the propensity scores.
Note that Naive Bayes and Logistic Regression utilize the additional information that is not available in our setting, and their propensity scores are provided by the author \cite{ips16}.
Surprisingly, merely utilizing popularity is enough for the estimation and it shows a comparable performance with Naive Bayes and Logistic Regression which use additional information.

\subsection{User Degree vs Calibrated Confidence}
\begin{figure}[t]
\centering 
\includegraphics[width=0.45\linewidth]{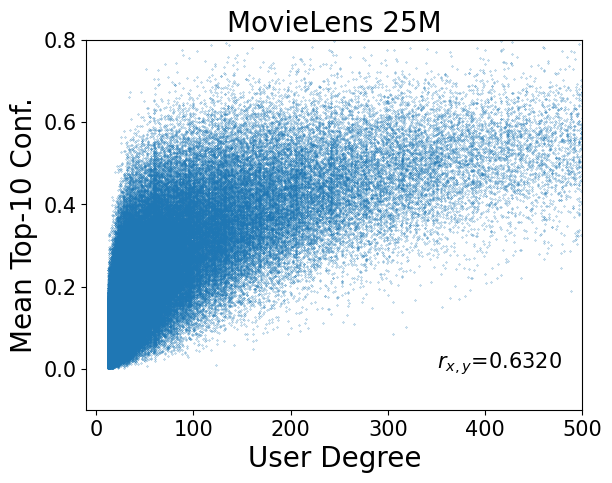}
\includegraphics[width=0.45\linewidth]{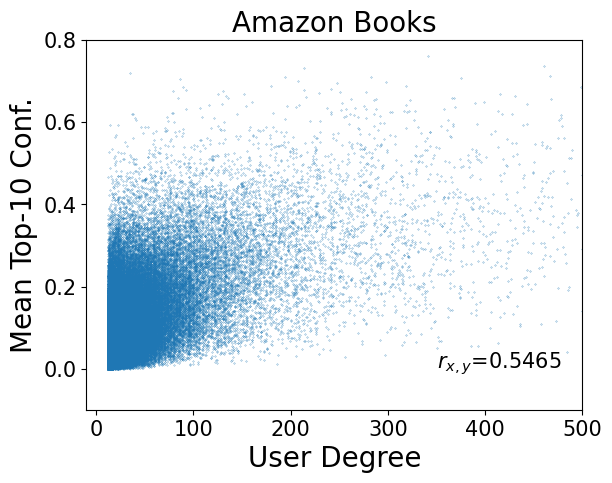}
\caption{User degree versus mean top-10 confidence.}
\label{udcc}
\end{figure}
Figure \ref{udcc} shows the relation between user degree (number of interactions) and mean calibrated confidence on top-10 items.
The base model is BPR.
$r_{x,y}$ denotes the Pearson correlation coefficient.
Since both correlation coefficients are positive and larger than 0.5, there is a positive correlation between user degree and mean top-10 calibrated confidence.
If a user has sufficient interactions, the recommender model can capture stable and certain preferences for that user.
Therefore, generally speaking, the mean confidence on top-10 items increases.
On the other hand, users with fewer interactions also may get confident recommendations (Upper left of Figure \ref{udcc}).
In this case, if a user has a distinct preference for small interactions, the recommender model also may capture certain preferences for that user.
For example, if a user has only three interactions for The Avengers, Avengers: Age of Ultron, and Avengers: Infinity War, the recommender system can provide Avengers: Endgame with very high confidence.

\subsection{Case Study}
\begin{figure}[t]
\centering 
\includegraphics[width=0.65\linewidth]{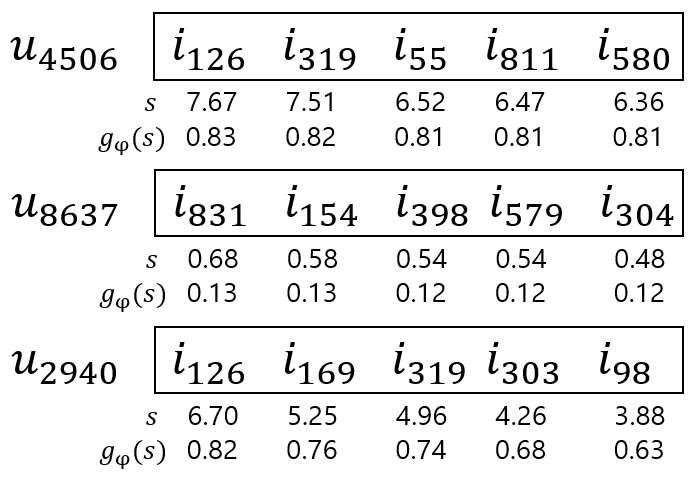}
\caption{Case study. Top-5 items for each user with ranking score $s$ and calibrated probability $g_{\phi}(s)$.}
\label{cs}
\end{figure}
Figure \ref{cs} shows the case study on Yahoo!R3 with Gaussian calibration adopted on LGCN.
The personalized ranking model first learns the ranking scores and produces a top-5 ranking list for each user.
Then, Gaussian calibration transforms the ranking scores to the well-calibrated preference probabilities.
For the first user $u_{4506}$, the method produces high preference probabilities for all top-5 items.
In this case, we can recommend them to him with confidence.
On the other hand, for the second user $u_{8637}$, all top-5 items have low preference probabilities, and the last user $u_{2940}$ has a wide range of preference probabilities.
For these users, merely recommending all the top-ranked items without consideration of potential preference degrade their satisfaction.
It is also known that the unsatisfactory recommendations even make the users leave the platform \cite{leave06}.
Therefore, instead of recommending items with low confidence, the system should take other strategies, such as requesting additional user feedback \cite{dre}.

\section{Conclusion}
In this paper, we aim to obtain calibrated probabilities with personalized ranking models.
We investigate various parametric distributions and propose two parametric calibration methods, namely Gaussian calibration and Gamma calibration.
We also design the unbiased empirical risk minimization framework that helps the calibration methods to be optimized towards true preference probability with the biased user-item interaction dataset.
Our extensive evaluation demonstrates that the proposed methods and framework significantly improve calibration metrics and have a richer expressiveness than existing methods.
Lastly, our case study shows that the calibrated probability provides an objective criterion for the reliability of recommendations, allowing the system to take various strategies to increase user satisfaction.

\chapter{Bidirectional Distillation for Top-K Recommender System}
Recommender systems (RS) have started to employ knowledge distillation, which is a model compression technique training a compact model (student) with the knowledge transferred from a cumbersome model (teacher).
The state-of-the-art methods rely on unidirectional distillation transferring the knowledge only from the teacher to the student, with an underlying assumption that the teacher is always superior to the student.
However, we demonstrate that the student performs better than the teacher on a significant proportion of the test set, especially for RS.
Based on this observation, we propose \textit{Bidirectional Distillation} (BD) framework whereby both the teacher and the student collaboratively improve with each other.
Specifically, each model is trained with the distillation loss that makes to follow the other’s prediction along with its original loss function.
For effective bidirectional distillation, we propose \textit{rank discrepancy-aware sampling} scheme to distill only the informative knowledge that can fully enhance each other.
The proposed scheme is designed to effectively cope with a large performance gap between the teacher and the student.
Trained in the bidirectional way, it turns out that both the teacher and the student are significantly improved compared to when being trained separately.
Our extensive experiments on real-world datasets show that our proposed framework consistently outperforms the state-of-the-art competitors.
We also provide analyses for an in-depth understanding of BD and ablation studies to verify the effectiveness of each proposed component.
This work was published at The Web Conference (WWW 2021) with an oral presentation \cite{kweon2021bidirectional}.

\section{Introduction}
% RS
Nowadays, the size of recommender systems (RS) is continuously increasing, as they have adopted deep and sophisticated model architectures to better understand the complex relationships between users and items \cite{rd18, cd19}.
A large recommender with many learning parameters usually has better performance due to its high capacity, but it also has high computational costs and long inference time.  
This problem is exacerbated for web-scale applications having numerous users and items, since the number of the learning parameters increases proportionally to the number of users and items.
Therefore, it is challenging to adopt such a large recommender for real-time and web-scale platforms. %with web-scale.

% KD for RS
To tackle this problem, a few recent work \cite{rd18, cd19, DERRD} has adopted \textit{Knowledge Distillation} (KD) to RS.
KD is a model-agnostic strategy that trains a compact model (student) with the guidance of a pre-trained cumbersome model (teacher).
The distillation is conducted in two stages;
First, the large teacher recommender is trained with user-item interactions in the training set with binary labels (i.e., $0$ for unobserved interaction and $1$ for observed interaction.).
Second, the compact student recommender is trained with the recommendation list predicted by the teacher along with the binary training set. 
Specifically, in \cite{rd18}, the student is trained to give high scores on the top-ranked items of the teacher's recommendation list. 
Similarly, in \cite{cd19}, the student is trained to imitate the teacher's prediction scores with particular emphasis on the high-ranked items in the teacher's recommendation list.
The teacher's predictions provide additional supervision, which is not explicitly revealed from the binary training set, to the student.
By distilling the teacher's knowledge, the student can achieve comparable performance to the teacher with fewer learning parameters and lower inference latency.

% Limitation
Despite their effectiveness, they still have some limitations.
First, they rely on \textit{unidirectional} knowledge transfer, which distills knowledge only from the teacher to the student, with an underlying assumption that the teacher is always superior to the student.
However, based on the in-depth analyses provided in Section 2, we argue that the knowledge of the student also could be useful for the teacher.
Indeed, we demonstrate that the teacher is not always superior to the student and further the student performs better than the teacher on a significant proportion of the test set.
In specific, the student recommender better predicts 36$\sim$42\% of the test interactions than the teacher recommender.
These proportions are remarkably large compared to 4$\sim$9\% from our experiment on the computer vision task.
In this regard, we claim that both the student and the teacher can take advantage of each other's complementary knowledge, and be improved further.
Second, the existing methods have focused on distilling knowledge of items ranked highly by the teacher.
However, we observe that most of the items ranked highly by the teacher are already ranked highly by the student (See Section 2 for details).
Therefore, merely high-ranked items may not be informative to fully enhance the other model, which leads to limiting the effectiveness of the distillation.

% Propose
We propose a novel \textbf{\underline{B}}idirectional \textbf{\underline{D}}istillation (BD) framework for RS.
Unlike the existing methods that transfer the knowledge only from the pre-trained teacher recommender, both the teacher and the student transfer their knowledge to each other within our proposed framework.
Specifically, they are trained simultaneously with the distillation loss that makes to follow the other’s predictions along with the original loss function.
Trained in the bidirectional way, it turns out that both the teacher and the student are significantly improved compared to when being trained separately.
In addition, the student recommender trained with BD achieves superior performance compared to the student trained with conventional distillation from a pre-trained teacher recommender.

For effective bidirectional distillation, the remaining challenge is to design an \textit{informative} distillation strategy that can fully enhance each other, considering the different capacities of the teacher and the student.
As mentioned earlier, items merely ranked highly by the teacher cannot give much information to the student.
Also, it is obvious that the student's knowledge is not always helpful to improve the teacher, as the teacher has a much better overall performance than the student.
To tackle this challenge, we propose \textit{rank discrepancy-aware sampling} scheme differently tailored for the student and the teacher.
In the scheme, the probability of sampling an item is defined based on its rank discrepancy between the two recommenders.
Specifically, each recommender focuses on learning the knowledge of the items ranked highly by the other recommender but ranked lowly by itself.
Taking into account the performance gap between the student and the teacher, we enable the teacher to focus on the items that the student has very high confidence in, whereas making the student learn the teacher's broad knowledge on more diverse items.

The proposed BD framework can be applicable in many application scenarios.
Specifically, it can be used to maximize the performance of the existing recommender in the scenario where there is no constraint on the model size (in terms of the teacher), or to train a small but powerful recommender as targeted in the conventional KD methods (in terms of the student).
The key contributions of our work are as follows:
\begin{itemize}[leftmargin=*]
    \item Through our exhaustive analyses on real-world datasets, we demonstrate that the knowledge of the student recommender also could be useful for the teacher recommender.
    Based on the results, we also point out the limitation of the existing distillation methods for RS under the assumption that the teacher is always superior to the student.
    \item We propose a novel bidirectional KD framework for RS, named BD, enabling that the teacher and the student can collaboratively improve with each other during the training.
    BD also adopts the rank discrepancy-aware sampling scheme differently designed for the teacher and the student considering their capacity gap.
    \item We validate the superiority of the proposed framework by extensive experiments on real-world datasets. 
    BD considerably outperforms the state-of-the-art KD competitors.
    We also provide both qualitative and quantitative analyses to verify the effectiveness of each proposed component\footnote{We provide the source code of BD at \url{https://github.com/WonbinKweon/BD_WWW2021}}.
\end{itemize}

\section{Analysis: Teacher vs Student}
In this section, we provide detailed analyses of comparison between the teacher and the student on two real-world datasets: CiteULike \cite{CUL13} and Foursquare \cite{FS14}.
Following \cite{cd19}, we hold out the last interaction of each user as test interaction, and the rest of the user-item interactions are used for training data
(see Section 3.5.1 for details of the experiment setup).
We use NeuMF \cite{ncf17} as a base model, and adopt NeuMF-50 as the teacher and NeuMF-5 as the student. The number indicates the dimension of the user and item embedding; 
the number of learning parameters of the teacher is 10 times bigger than that of the student, and accordingly, the teacher shows a much better overall performance than the student (reported in Table ???).
Note that the teacher and the student are trained separately without any distillation technique in the analyses.

% 소제목
% Analysis figure
\begin{figure*}[ht]
    \includegraphics[width=0.5\linewidth]{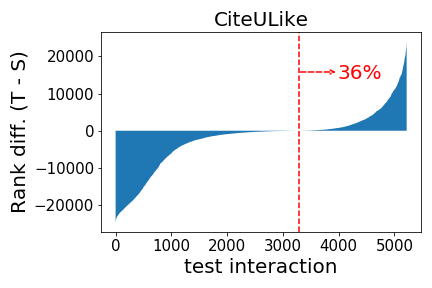}
    \includegraphics[width=0.5\linewidth]{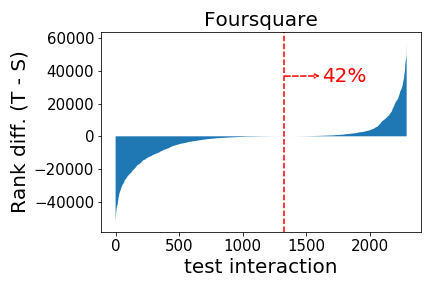}
    \caption{Rank difference between teacher and student. Rank differences are computed on the test interactions and the test interactions (x-axis) are sorted in increasing order by the rank difference.}
\label{TvsS1}
\end{figure*}
\textbf{The teacher is not always superior to the student.}
Figure \ref{TvsS1} shows bar graphs of the rank difference between the teacher and the student on the test interactions (the bars in the graphs are sorted in increasing order along with the x-axis).
The rank difference on a test interaction $(u, i)$ is defined as follows:
\begin{equation}
\text{Rank diff.}^{u}(i) = rank_T^{u}(i) - rank_S^{u}(i),
\end{equation}
where $rank_T^{u}(i)$ and $rank_S^{u}(i)$ denote the ranks assigned by the teacher and the student on the item $i$ for the user $u$ respectively, and $rank_{*}^{u}(i)=1$ is the highest ranking.
If the rank difference is bigger than zero, it means that the student assigns a higher rank than the teacher on the test interaction; the student better predicts the test interaction than the teacher.
We observe that the student model assigns higher ranks than the teacher model on the significant proportion of the entire test interactions: 36\% for CiteULike and 42\% for Foursquare.
These proportions are remarkably large compared to 4\% for CIFAR-10 and 9\% for CIFAR-100 from our experiments on computer vision task\footnote{We compare the predictions of the teacher and the student on the test set. We adopt ResNet-80 as the teacher and ResNet-8 as the student \cite{resnet16} and train them separately for the image classification task on CIFAR-10 and CIFAR-100 dataset \cite{cifar10}.}.

The results raise a few questions.
Why the student can perform better than the teacher on the test data and why this phenomenon is intensified for RS?
We provide some possible reasons as follows:
Firstly, not all user-item interactions require sophisticated calculations in high-dimensional space for correct prediction.
As shown in the computer vision \cite{overthinking19}, a simple image without complex background or patterns can be better predicted at the lower layer than the final layer in a deep neural network.
Likewise, some user-item relationships can be better captured based on simple operations without expensive computations.
Moreover, unlike the image classification task, RS has very high ambiguity in nature;
in many applications, the user’s feedback on an item is given in the binary form: 1 for observed interaction, 0 for unobserved interaction.
However, the binary labels do not explicitly show the user's preference.
For example, “0” does not necessarily mean the user’s negative preference for the item.
It can be that the user may not be aware of the item.
Due to the ambiguity in the ground-truth labels, the lower training loss does not necessarily guarantee a better ranking performance.
Specifically, in the case of NeuMF, which is trained with the binary cross-entropy loss,
the teacher can better minimize the training loss and thus better discriminate the observed/unobserved interactions in the training set. 
However, this may not necessarily result in better predicting user's preference on all the unobserved items due to the ambiguity of supervision.

\begin{figure*}[h!]
    \includegraphics[width=0.5\linewidth]{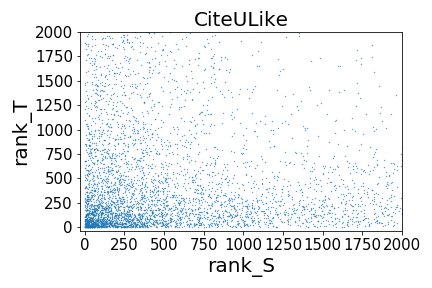}
    \includegraphics[width=0.5\linewidth]{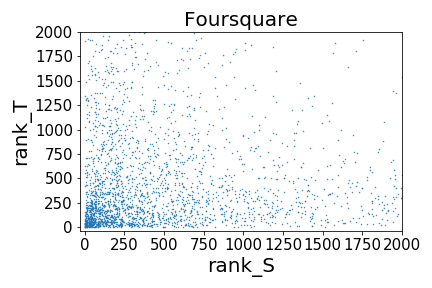}
    \caption{Rank comparison between teacher and student. Each dot represents a sampled unobserved interaction.}
\label{TvsS2}
\end{figure*}
% 왜 추천에서는 특히 STUDENT가 TEACHER보다 더 나을 수 있는가
% 소제목
\textbf{The high-ranked items are not that informative as expected.}
Figure \ref{TvsS2} shows the rank comparison between the teacher and the student.
We plot the high-ranked items in the recommendation list from the teacher and that from the student for all users\footnote{We sample the items by using the rank-aware sampling scheme suggested in \cite{cd19}. Note that in \cite{cd19}, the sampled items are used for the distillation.}.
Each point corresponds to $(rank^u_S(i), rank^u_T(i))$.
We observe that most of the points are located near the lower-left corner, which means that most of the items ranked highly by the teacher are already ranked highly by the student.
In this regard, unlike the motivation of the previous work \cite{rd18, cd19}, items merely ranked highly by the teacher may be not informative enough to fully enhance the student, and thus may limit the effectiveness of the distillation.
We argue that each model can be further improved by focusing on learning the knowledge of items ranked highly by the other recommender but ranked lowly by itself (e.g., points located along the x-axis and y-axis).

In summary, the teacher is not always superior to the student, so the teacher can also learn from the student.
Their complementarity should be importantly considered for effective distillation in RS.
Also, the distillation should be conducted with consideration of the rank discrepancy between the two recommenders.
Based on the observations, we are motivated to design a KD framework that the teacher and the student can collaboratively improve with each other based on the rank discrepancy.

\section{Problem Formulation}
Let the set of users and items be denoted as $\mathcal{U}=\{u_1, u_2,...,u_n\}$ and $\mathcal{I}=\{ i_1, i_2,...,i_m \}$, where $n$ is the number of users and $m$ is the number of items.
Let $\textbf{R} \in \{0,1\}^{n \times m}$ be the user-item interaction matrix, 
where $r_{ui}=1$ if the user $u$ has an interaction with the item $i$, otherwise $r_{ui}=0$.
Also, for a user $u$, $\mathcal{I}_{u}^{-} = \{ i \in \mathcal{I} | r_{ui} = 0\}$ denotes the set of unobserved items and $\mathcal{I}_{u}^{+} = \{ i \in \mathcal{I} | r_{ui} = 1\}$ denotes the set of interacted items.
A top-$K$ recommender system aims to find a recommendation list of unobserved items for each user.
To make the recommendation list, the system predicts the score $\hat{r}_{ui} = P(r_{ui}=1|u,i)$ for each item $i$ in $\mathcal{I}_{u}^{-}$ for each user $u$, then ranks the unobserved items according to their scores.

\section{Method}
% method figure
\begin{figure*}[h!]
\includegraphics[width=1\linewidth]{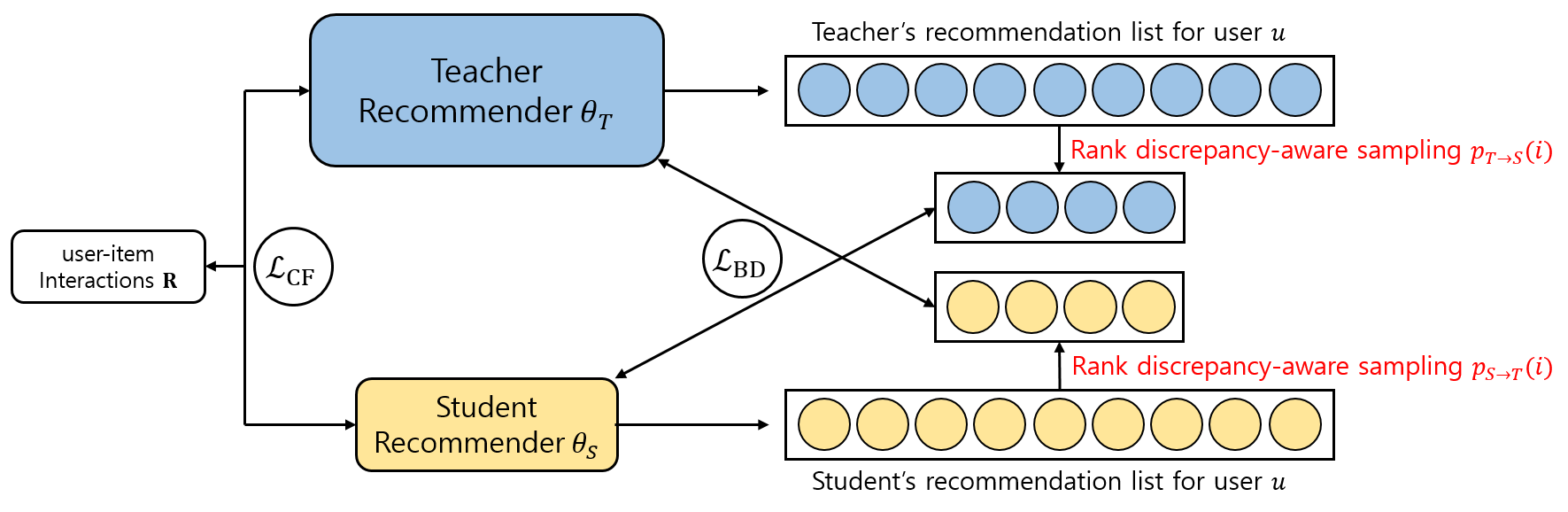}
\caption{Illustration of Bidirectional Distillation (BD) for top-$K$ recommender systems.}
\label{bd}
\end{figure*}

We propose a novel \textbf{\underline{B}}idirectional \textbf{\underline{D}}istillation (BD) framework for top-$K$ recommender systems.
Within our proposed framework, both the teacher and the student transfer their knowledge to each other.
We first provide an overview of BD (Section 3.4.1).
Then, we formalize the distillation loss to transfer the knowledge between the two recommenders (Section 3.4.2).
We also propose \textit{rank discrepancy-aware sampling} scheme differently tailored for the teacher and the student to fully enhance each other (Section 3.4.3).
Lastly, we provide the details of the end-to-end training process within the proposed framework (Section 3.4.4).

\subsection{Overview}
Figure \ref{bd} shows an overview of BD.
Unlike the existing KD methods for RS, both the teacher and the student are trained simultaneously by using each other's knowledge (i.e., recommendation list) along with the binary training set (i.e., user-item interactions).
First, the teacher and the student produce the recommendation list for each user.
Second, BD decides what knowledge to be transferred for each distillation direction based on the rank discrepancy-aware sampling scheme.
Lastly, the teacher and the student are trained with the distillation loss along with the original collaborative filtering loss.
To summarize, each of them is  trained as follows:
\begin{equation}
\begin{split}
& \mathcal{L}_T(\theta_T)  = \mathcal{L}_{CF}(\theta_T) + \lambda_{S \rightarrow T} \cdot \mathcal{L}_{BD}(\theta_T; \theta_S), \\
& \mathcal{L}_S(\theta_S)  = \mathcal{L}_{CF}(\theta_S) + \lambda_{T \rightarrow S} \cdot \mathcal{L}_{BD}(\theta_S; \theta_T),
\end{split}
\end{equation}
where $T$ and $S$ denote the teacher and the student respectively, $\theta_*$ is the model parameters.
$\mathcal{L}_{CF}$ is the collaborative filtering loss depending on the base model which can be any existing recommender, and $\mathcal{L}_{BD}$ is the bidirectional distillation loss.
Lastly, $\lambda_{S \rightarrow T}$ and $\lambda_{T \rightarrow S}$ are the hyperparameters that control the effects of the distillation loss in each direction.

% benefit
Within BD, the teacher and the student are collaboratively improved with each other based on their complementarity.
Also, during the training, the knowledge distilled between the teacher and the student gets gradually evolved along with the recommenders;
the improvement of the teacher leads to the acceleration of the student's learning, and the accelerated student again improves the teacher's learning.
Trained in the bidirectional way, both the teacher and the student are significantly improved compared to when being trained separately.
As a result, the student trained with BD outperforms the student model trained with the conventional distillation that relies on a pre-trained and fixed teacher.

\subsection{Distillation Loss}
We formalize the distillation loss that transfers the knowledge between the recommenders.
By following the original distillation loss that matches the class distributions of two classifiers for a given image \cite{kd15},
we design the distillation loss for the user $u$ as follows: 
\begin{equation}
\begin{aligned}
\mathcal{L}_{BD}(\theta_T; \theta_S) &=  \sum_{j \in \mathcal{RDS}_{S \rightarrow T}(\mathcal{I}_{u}^{-})} \mathcal{L}_{BCE}(\hat{r}^T_{uj}, \hat{r}^S_{uj})\\
\mathcal{L}_{BD}(\theta_S; \theta_T) &=  \sum_{j \in \mathcal{RDS}_{T \rightarrow S}(\mathcal{I}_{u}^{-})} \mathcal{L}_{BCE}(\hat{r}^S_{uj}, \hat{r}^T_{uj}),
\end{aligned}
\end{equation}
where $\mathcal{L}_{BCE}(p, q) = q \log p + (1-q) \log (1-p)$ is the binary cross-entropy loss,
$\hat{r}_{uj} = P(r_{uj}=1 | u,j)$ is the prediction of a recommender and $\mathcal{RDS}_{*}(\mathcal{I}_{u}^{-})$ is a set of the unobserved items sampled by the rank discrepancy-aware sampling.
$\hat{r}_{uj}$ is computed by $\sigma(z_{uj}/T)$
where $\sigma(\cdot)$ is the sigmoid function, $z_{uj}$ is the logit, and $T$ is the temperature that controls the smoothness.

% 제안하는 KD loss의 효과
Our distillation loss is a binary version of the original KD loss function.
Similar to the original KD loss transferring the knowledge of class probabilities, our loss transfers a user's potential positive and negative preferences on the unobserved items.
Specifically, in the binary training set, the unobserved interaction $r_{uj}$ is only labeled as “0”.
However, as mentioned earlier, it is ambiguous whether the user actually dislikes the item or potentially likes the item.
Through the distillation, each recommender can get the other's opinion of how likely (and unlikely) the user would be interested in the item, and such information helps the recommenders to better cope with the ambiguous nature of RS.

\subsection{Rank Discrepancy-aware Sampling}
We propose the rank discrepancy-aware sampling scheme that decides what knowledge to be transferred for each
distillation direction.
As we observed in Section 2, most of the items ranked highly by the teacher are already ranked highly by the student and vice versa.
Thus, the existing methods \cite{cd19,rd18} that simply choose the high-ranked items cannot give enough information to the other recommender.
Moreover, for effective bidirectional distillation, the performance gap between the teacher and the student should be carefully considered in deciding what knowledge to be transferred, as it is obvious that not all knowledge of the student is helpful to improve the teacher.

In this regard, we develop a sampling scheme based on the rank discrepancy of the teacher and the student, and tailor it differently for each distillation direction with consideration of their different capacities.
The underlying idea of the scheme is that each recommender can get informative knowledge by focusing on the items ranked highly by the other recommender, but ranked lowly by itself.
The sampling strategy for each distillation direction is defined as follows:

\textbf{Distillation from the teacher to the student.}
As the teacher has a much better overall performance than the student, the opinion of the teacher should be considered more reliable than that of the student in most cases.
Thus, for this direction of the distillation, we make the student follow the teacher's predictions on many rank-discrepant items.
Formally, for each user $u$, the probability of an item $i$ to be sampled is computed as follows:
\begin{equation}
p_{T \rightarrow S}(i) \propto tanh(\text{max}((rank_S(i) - rank_T(i)) \cdot \epsilon_t, 0)),
\end{equation}
where $rank_T(i)$ and $rank_S(i)$ denote the ranks assigned by the teacher and the student on the item $i$ for the user $u$ respectively, and $rank_{*}(i)=1$ is the highest ranking\footnote{we omit the superscript $u$ from $rank_{*}^{u}(i)$ for the simplicity.}.
We use a hyper-parameter $\epsilon_t$ $(> 0)$ to control the smoothness of the probability.
With this probability function, we sample the items ranked highly by the teacher but ranked lowly by the student.
Since $tanh(\cdot)$ is a saturated function, items with rank discrepancy above a particular threshold would be sampled almost uniformly.
As a result, the student learns the teacher's broad knowledge of most of the rank-discrepant items.

\textbf{Distillation from the student to the teacher.}
As shown in Section 3.2, the teacher is not always superior to the student, especially for RS.
That is, the teacher can be also further improved by learning from the student.
However, at the same time, the large performance gap between the teacher and the student also needs to be considered for effective distillation.
For this direction of the distillation, we make the teacher follow the student's predictions on only a few selectively chosen rank-discrepant items.
The distinct probability function is defined as follows:
\begin{equation}
p_{S \rightarrow T}(i) \propto exp((rank_T(i) - rank_S(i)) \cdot \epsilon_e),
\end{equation}
where $\epsilon_e$ $(> 0)$ is a hyper-parameter to control the smoothness of the probability.
We use the exponential function to put particular emphasis on the items that have large rank discrepancies.
Therefore, this probability function enables the teacher to follow the student's predictions only on the rank-discrepant items that the student has very high confidence in.

\subsection{Model Training}

\begin{algorithm}[t]
\SetKwInOut{Input}{Input}
\SetKwInOut{Output}{Output}
\Input{Training data $\mathcal{D}$, the number of total epochs $e$, rank updating period $p$}
\Output{Teacher model $(\theta_T)$, Student model $(\theta_S)$}
Warm up $\theta_T$ and $\theta_S$ with only $\mathcal{L}_{CF}$\\
\For{$t=0,1,...,(e-1)$}{
\If{$t \text{ } \% \text{ } p == 0$}{
Teacher and Student update their recommendation lists
}
\For{$(u,i) \in \mathcal{D}$}{
% Draw negative items randomly.\\
\BlankLine
\tcc{Train Teacher}
Draw $\mathcal{RDS}_{S \rightarrow T}(\mathcal{I}_{u}^{-})$ with probability $p_{S \rightarrow T}(\cdot)$ \\
Compute $\mathcal{L}_{CF}(\theta_T)$ and $\mathcal{L}_{BD}(\theta_T; \theta_S)$ \\
Update $\theta_T$
\BlankLine
\tcc{Train Student}
Draw $\mathcal{RDS}_{T \rightarrow S}(\mathcal{I}_{u}^{-})$ with probability $p_{T \rightarrow S}(\cdot)$ \\
Compute $\mathcal{L}_{CF}(\theta_S)$ and $\mathcal{L}_{BD}(\theta_S; \theta_T)$ \\
Update $\theta_S$
}
}
\caption{Bidirectional Distillation Framework.}
\label{BDF}
\end{algorithm}

Algorithm \ref{BDF} describes a pseudo code for the end-to-end training process within BD framework.
The training data $\mathcal{D}$ consists of observed interactions $(u,i)$.
First, the model parameters $\theta_T$ and $\theta_S$ are warmed up only with the collaborative filtering loss (line 1), as the predictions during the first few epochs are very unstable.
Second, we make the recommenders produce the recommendation lists for the subsequent sampling.
Since it is time-consuming to produce the recommendation lists every epoch, we conduct this step every $p$ epochs (line 3-4).
Next, we decide what knowledge to be transferred in each distillation direction via the rank discrepancy-aware sampling.
It is worth noting that the unobserved items sampled by the rank discrepancy-aware sampling can be used also for the collaborative filtering loss.
Finally, we compute the losses with the sampled items for the teacher and the student, respectively, and update the model parameters.

\section{Experiments}
In this section, we validate our proposed framework on 9 experiment settings (3 real-world datasets $\times$ 3 base models).
We first introduce our experimental setup (Section 3.5.1).
Then, we provide a performance comparison supporting the superiority of the BD (Section 3.5.2).
We also provide two analyses for the in-depth understanding of BD (Section 3.5.3, 3.5.4).
Lastly, We provide an ablation study to verify the effectiveness of rank discrepancy-aware sampling (Section 3.5.5) and analyses for important hyperparameters of BD (Section 3.5.6).

%%%%%%%%%% Setup %%%%%%%%%%%%%%%%%%%%%%%%%%%%%%%%%%%%%%%%%%%%%%%%%%%%%%%%%%%%%%%%%%%
\subsection{Experimental Setup}
\textbf{Datasets.} We use three real-world datasets: CiteULike\footnote{https://github.com/changun/CollMetric/tree/master/citeulike-t} \cite{CUL13}, Foursquare\footnote{https://sites.google.com/site/yangdingqi/home/foursquare-dataset} (Tokyo Check-in) \cite{FS14} and Yelp\footnote{https://github.com/hexiangnan/sigir16-eals/blob/master/data/yelp.rating} \cite{yelp16}.
We only keep users who have at least five ratings for CiteULike and Foursquare, ten ratings for Yelp as done in \cite{ncf17, bpr09}.
Data statistics after the preprocessing are presented in Table \ref{BDD}.
We also report the experimental results on ML100K and AMusic, which are used for CD \cite{cd19} for the direct comparison.

\begin{table}[t]
    \centering
  \caption{Data Statistics}
  \begin{tabular}{ccccc}
    \toprule
    Dataset & \#Users & \#Items & \#Ratings & Sparsity \\
    \midrule
    CiteULike & 5,219 & 25,187 & 130,788 & 99.90\% \\
    Foursquare & 2,293 & 61,858 & 537,167 & 99.62\% \\
    Yelp & 25,677 & 25,815 & 730,623 & 99.89\% \\
    \bottomrule
  \end{tabular}
  \label{BDD}
\end{table}

\textbf{Evaluation Protocol and Metrics.}
We adopt the widely used \textit{leave-one-out} evaluation protocol.
For each user, we hold out the last interacted item for testing and the second last interacted item for validation as done in \cite{cd19, ncf17}.
If there is no timestamp in the dataset, we randomly take two observed items for each user.
Then, we evaluate how well each method can rank the test item higher than all the unobserved items for each user (i.e., $\mathcal{I}_u^{-}$).
Note that instead of randomly choosing a predefined number of candidates (e.g., 99), we adopt the full-ranking evaluation that uses all the unobserved items as candidates.
Although it is time-consuming, it enables a more thorough evaluation compared to using random candidates \cite{fullrank20, cd19}.

As we focus on top-$K$ recommendation for implicit feedback, we employ two widely used metrics for evaluating the ranking performance of recommenders: Hit Ratio (H@$K$) \cite{hr16} and Normalized Discounted Cumulative Gain (N@$K$) \cite{NDCG02}.
H@$K$ measures whether the test item is present in the top-$K$ list and N@$K$ assigns a higher score
to the hits at higher rankings in the top-$K$ list.
We compute those two metrics for each user, then compute the average score.
Lastly, we report the average value of five independent runs for all methods.

\textbf{Base Models.}
BD is a model-agnostic framework applicable for any top-$K$ RS.
We validate BD with three base models that have different model architectures and learning strategies.
Specifically, we choose two widely used deep learning models and one latent factor model as follows:
\begin{itemize}
    \item \textbf{NeuMF \cite{ncf17}}: A deep recommender that adopts Matrix Factorization (MF) and Multi-Layer Perceptron (MLP) to capture complex and non-linear user-item relationships. 
    NeuMF uses the point-wise loss function for the optimization.
    \item \textbf{CDAE \cite{cdae16}}: 
    A deep recommender that adopts Denoising Autoencoders (DAE) \cite{dae08} for the collaborative filtering. 
    CDAE uses the point-wise loss function for the optimization.
    \item \textbf{BPR \cite{bpr09}}: A learing-to-rank recommender that adopts MF \cite{mf09} to model the user-item interaction.
    BPR uses the pair-wise loss function for the optimization under the assumption that observed items are more preferred than unobserved items.
\end{itemize}

\textbf{Methods Compared.}
The proposed framework is compared with the state-of-the-art KD methods for top-$K$ RS.
\begin{itemize}
    \item \textbf{Ranking Distillation (RD) \cite{rd18}}: A pioneering KD method for top-$K$ RS.
    RD makes the student give high scores on top-ranked items by the teacher.
    \item \textbf{Collaborative Distillation (CD) \cite{cd19}}: A state-of-the-art KD method for top-$K$ RS. 
    CD makes the student imitate the teacher's scores on the items ranked highly by the teacher.
\end{itemize}

\textbf{Implementation Details.}
For all the base models and baselines, we use PyTorch \cite{pytorch19} for the implementation.
For each dataset, hyperparameters are tuned by using grid searches on the validation set. 
We use Adam optimizer \cite{adam14} with L2 regularization and the learning rate is chosen from $\{$0.00001, 0.0001, 0.001, 0.002$\}$, and we set the batch size as 128.
For NeuMF, we use 2-layer MLP for the network.
For CDAE, we use 2-layer MLP for the encoder and the decoder, and the dropout ratio is set to 0.5.
The number of negative samples is set to 1 for NeuMF and BPR, $5*|\mathcal{I}_{u}^{+}|$ for CDAE as suggested in the original paper \cite{cdae16}.

For the distillation, we adopt as many learning parameters as possible for the teacher model until the ranking performance is no longer increased on each dataset.
Then, we build the student model by employing only one-tenth of the learning parameters used by the teacher.
The number of model parameters of each base model is reported in Table 3.
For KD competitors (i.e., RD, CD), $\lambda_{KD}$ is chosen from $\{$0.01, 0.1, 0.5, 1$\}$, the number of items sampled for the distillation is chosen from $\{$10, 15, 20, 30, 40, 50$\}$, and the temperature $T$ for logits is chosen from $\{$1, 1.5, 2$\}$.
For other hyperparameters, we use the values recommended from the public implementation and the original papers \cite{rd18, cd19}.
For BD, $\lambda_{T \rightarrow S}$ and $\lambda_{S \rightarrow T}$ are set to 0.5, the number of items sampled by rank discrepancy-aware sampling is chosen from $\{$1, 5, 10$\}$, $\epsilon_t$ is chosen from $\{10^{-2}, 10^{-3}, 10^{-4}, 10^{-5}\}$, $\epsilon_e$ is chosen from $\{10^{-3}, 10^{-4}, 10^{-5}\}$, the temperature $T$ for logits is chosen from $\{$2, 5, 10$\}$ and the rank updating period $p$ (in Algorithm 1) is set to 10.

%%%%%%%%%%% main table %%%%%%%%%%%%%%%%%%%%%%%%%%%%%%%%%%%%%%%%%%%%%%%%%%%%%%%%%%%%%%%%%%%
% main table
\begin{sidewaystable}[ph!] %\begin{table*}[h!]
 %\small
 %\renewcommand{\arraystretch}{0.3}
 \caption{Performance comparison. \textit{Improv.T} denotes the improvement of the teacher over the original teacher model ("Teacher" in this table), \textit{Improv.B} denotes the improvement of the student over the best competitive method and \textit{Improv.S} denotes the improvement of the student over the original student model ("Student" in this table). $*$ and $**$ indicate $p \leq $ 0.05 and $p \leq $ 0.01 for the paired t-test of BD vs. the best competitor.}
  \resizebox{\textwidth}{!}{% 
  \begin{tabular}{cl cccc | cccc | cccc}
    \toprule 
    \multirow{2}{*}{Base Model} & \multirow{2}{*}{Method} & \multicolumn{4}{c}{CiteULike} & \multicolumn{4}{c}{Foursquare} & \multicolumn{4}{c}{Yelp}\\
     & & H@50 & H@100 & N@50 & N@100 & H@50 & H@100 & N@50 & N@100 & H@50 & H@100 & N@50 & N@100 \\
    \bottomrule
    % NeuMF
     & Teacher & 0.1566 & 0.2272 & 0.0385 & 0.0487 & 0.1612 & 0.2066 & 0.0615 & 0.0685 & 0.0782 & 0.1341 & 0.0200 & 0.0290 \\
     & BD-Teacher & 0.1722 & 0.2443 & 0.0431 & 0.0548 & 0.1704 & 0.2232 & 0.0653 & 0.0737 & 0.0836 & 0.1473 & 0.0217 & 0.0319 \\
     & BD-Student & 0.0924 & 0.1425 & 0.0242 & 0.0325 & 0.1657 & 0.2181 & 0.0637 & 0.0721 & 0.0779 & 0.1315 & 0.0203 & 0.0289  \\
     & CD & 0.0820 & 0.1318 & 0.0227 & 0.0307 & 0.1548 & 0.2063 & 0.0592 & 0.0677 & 0.0712 & 0.1225 & 0.0189 & 0.0263 \\
    NeuMF & RD & 0.0755 & 0.1242 & 0.0197 & 0.0268 & 0.1442 & 0.1819 & 0.0547 & 0.0611 & 0.0662 & 0.1134 & 0.0161 & 0.0237 \\
     & Student & 0.0703 & 0.1167 & 0.0173 & 0.0248 & 0.1264 & 0.1674 & 0.0512 & 0.0578 & 0.0615 & 0.1018 & 0.0153 & 0.0221 \\
    \cmidrule{2-14}
     & \textit{Improv.T} & 9.96\% & 7.53\% & 11.95\% & 12.53\% & 5.73\% & 8.03\% & 6.21\% & 7.59\% & 6.91\% & 9.84\% & 8.50\% & 10.00\% \\
     & \textit{Improv.B} & 12.68\%{**} & 8.16\%{**} & 6.84\%{**} & 5.87\%{**} & 7.05\%{**} & 5.74\%{**} & 7.53\%{**} & 6.50\%{**} & 9.41\%{**} & 7.35\%{**} & 7.41\%{**} & 9.89\%{**} \\
     & \textit{Improv.S} & 31.44\% & 22.11\% & 39.88\% & 30.85\% & 31.11\% & 30.31\% & 24.34\% & 24.74\% & 26.67\% & 29.17\% & 32.68\% & 30.77\% \\
    \midrule
    % CDAE
     & Teacher & 0.1710 & 0.2445 & 0.0492 & 0.0611 & 0.1653 & 0.2281 & 0.0650 & 0.0743 & 0.0894 & 0.1523 & 0.0229 & 0.0331 \\
     & BD-Teacher & 0.1983 & 0.2702 & 0.0588 & 0.0674 & 0.1740 & 0.2368 & 0.0680 & 0.0766 & 0.0993 & 0.1681 & 0.0255 & 0.0366 \\
     & BD-Student & 0.0943 & 0.1470 & 0.0269 & 0.0355 & 0.1721 & 0.2242 & 0.0614 & 0.0709 & 0.0745 & 0.1323 & 0.0191 & 0.0268 \\
     & CD & 0.0861 & 0.1332 & 0.0250 & 0.0324 & 0.1629 & 0.2104 & 0.0553 & 0.0651 & 0.0667 & 0.1183 & 0.0176 & 0.0236 \\
    CDAE & RD & 0.0848 & 0.1266 & 0.0241 & 0.0318 & 0.1670 & 0.2146 & 0.0572 & 0.0675 & 0.0657 & 0.1118 & 0.0157 & 0.0231 \\
     & Student & 0.0724 & 0.1090 & 0.0198 & 0.0257 & 0.1491 & 0.2041 & 0.0552 & 0.0641 & 0.0613 & 0.1061 & 0.0151 & 0.0218 \\
    \cmidrule{2-14}
     & \textit{Improv.T} & 15.96\% & 10.51\% & 13.41\% & 10.31\% & 5.26\% & 3.81\% & 4.62\% & 3.10\% & 11.07\% & 10.37\% & 11.35\% & 10.57\% \\
     & \textit{Improv.B} & 9.52\%{**} & 10.36\%{**} & 7.60\%{**} & 9.57\%{**} & 3.05\%{*} & 4.47\%{*} & 7.34\%{**} & 5.04\%{**} & 11.86\%{**} & 11.83\%{**} & 8.52\%{**} & 13.56\%{**} \\
     & \textit{Improv.S} & 30.25\% & 34.86\% & 35.86\% & 38.13\% & 15.43\% & 9.85\% & 11.23\% & 10.61\% & 21.53\% & 24.69\% & 26.49\% & 22.94\% \\
    \midrule
    % BPR
     & Teacher & 0.1224 & 0.1892 & 0.0328 & 0.0415 & 0.1746 & 0.2307 & 0.0662 & 0.0764 & 0.0896 & 0.1383 & 0.0231 & 0.0291 \\
     & BD-Teacher & 0.1479 & 0.2144 & 0.0390 & 0.0498 & 0.1834 & 0.2449 & 0.0713 & 0.0812 & 0.0946 & 0.1488 & 0.0245 & 0.0302 \\
     & BD-Student & 0.0853 & 0.1378 & 0.0213 & 0.0298 & 0.1611 & 0.1991 & 0.0610 & 0.0687 & 0.0733 & 0.1283 & 0.0171 & 0.0243 \\
     & CD & 0.0803 & 0.1271 & 0.0201 & 0.0277 & 0.1515 & 0.1851 & 0.0567 & 0.0622 & 0.0660 & 0.1137 & 0.0144 & 0.0221 \\
    BPR & RD & 0.0782 & 0.1234 & 0.0185 & 0.0259 & 0.1404 & 0.1696 & 0.0519 & 0.0574 & 0.0661 & 0.1126 & 0.0155 & 0.0215 \\
     & Student & 0.0717 & 0.1217 & 0.0171 & 0.0251 & 0.1252 & 0.1618 & 0.0469 & 0.0528 & 0.0559 & 0.1039 & 0.0127 & 0.0199 \\
    \cmidrule{2-14}
     & \textit{Improv.T} & 20.79\% & 13.32\% & 18.90\% & 19.88\% & 5.04\% & 6.13\% & 7.63\% & 6.22\% & 5.58\% & 7.59\% & 6.06\% & 3.78\% \\
     & \textit{Improv.B} & 6.16\%{**} & 8.38\%{**} & 5.97\%{**} & 7.40\%{**} & 6.34\%{**} & 7.56\%{**} & 7.58\%{**} & 10.45\%{**} & 10.89\%{**} & 13.94\%{**} & 10.32\%{**} & 13.02\%{**} \\
     & \textit{Improv.S} & 18.90\% & 13.19\% & 24.56\% & 18.53\% & 28.67\% & 23.05\% & 30.06\% & 30.11\% & 31.13\% & 23.48\% & 34.65\% & 22.11\% \\
    \bottomrule
  \end{tabular}}
  \label{bdmain}
\end{sidewaystable} %\end{table*}
\subsection{Performance Comparison}
Table \ref{bdmain} shows the recommendation performance of each KD method on three real-world datasets and three different base models.
In Table \ref{bdmain}, "Teacher" and "Student" indicate the base models trained separately without any distillation technique, "BD-Teacher" and "BD-Student" are the teacher and the student trained simultaneously with BD.
"CD" and "RD" denote the student trained with CD and RD, respectively.
We analyze the experimental results from various perspectives.

% Teacher
We first observe that the teacher recommender is consistently improved by the knowledge transferred from the student with BD by up to 20.79\% (for H@50 on CiteULike).
This result verifies our claim that the knowledge of the student also could be useful for improving the teacher.
Also, this result strongly indicates that the distillation process of BD effectively resolves the challenge of a large performance gap, and successfully transfer the informative knowledge from the student.
In specific, the ranking discrepancy-aware sampling enables the teacher to focus on the items that the student has very high confidence in.
This strategy can minimize the adverse effects from the performance gap, making the teacher be further improved based on the complementary knowledge from the student.

% Student
Second, the student recommender is significantly improved by the knowledge transferred from the teacher with BD by up to 39.88\% (for N@50 on CiteULike).
Especially, the student trained with BD considerably outperforms the student trained with the existing KD methods (i.e., RD, CD) by up to 13.94\% (for H@100 on Yelp).
The superiority of BD comes from the two contributions;
BD makes the student follow the teacher’s predictions on the rank-discrepant items which are more informative than the merely high-ranked items.
Also, within BD, the improvement of the teacher leads to the acceleration of the student’s learning, and the accelerated student again improves the teacher’s learning.
With better guidance from the improved teacher, the student with BD can achieve superior performance than the student with the existing KD methods.
To verify the effectiveness of each contribution, we conduct in-depth analyses in the next sections.

% Inference efficiency
% inference
\begin{table}[t]
%\small
\centering
\setlength{\tabcolsep}{4pt}
  \renewcommand{\arraystretch}{0.8}
  \caption{Model size and online inference efficiency. Time denotes the wall time used for generating recommendation list for every user.}
  \begin{tabular}{c cc cc cc}
    \toprule
    \multirow{2}{*}{Base Model} & \multicolumn{2}{c}{CiteULike} & \multicolumn{2}{c}{Foursquare} & \multicolumn{2}{c}{Yelp}\\
     & \#Params. & Time & \#Params. & Time & \#Params. & Time \\
    \midrule
    NeuMF-T & 3.05M & 17.39s & 6.42M & 14.77s & 5.15M & 85.12s \\
    NeuMF-S & 0.30M & 14.12s & 0.64M & 7.99s & 0.51M & 69.53s \\
    \midrule
    CDAE-T & 2.80M & 14.73s & 6.36M & 7.95s & 3.89M & 64.45s \\
    CDAE-S & 0.30M & 10.35s & 0.69M & 5.78s & 0.41M & 53.39s \\
    \midrule
    BPR-T & 1.52M & 9.11s & 3.21M & 7.71s & 2.57M & 46.83s \\
    BPR-S & 0.15M & 7.94s & 0.32M & 5.39s & 0.26M & 40.49s \\
    \bottomrule
  \end{tabular}
  \label{bdinfer}
\end{table}
Lastly, Table \ref{bdinfer} shows the model size and online inference efficiency for each base model.
For making the inferences, we use PyTorch with CUDA on GTX Titan X GPU and Intel i7-7820X CPU.
As the student has only one-tenth of the learning parameters used by the teacher, the student requires less computational costs, thus achieves lower inference latency.
Moreover, the student trained with BD shows comparable or even better recommendation performance to the teacher (e.g., NeuMF and CDAE on Foursquare).
On CiteULike, we observe that the student achieves comparable performance by employing 20$\sim$30\% of learning parameters used by the teacher.
These results show that BD can be effectively adopted to train a small but powerful recommender.
Moreover, as mentioned earlier, BD is also applicable in setting where there is no constraint on the model size or inference time.
Specifically, BD can be adopted to maximize the performance of a large recommender with numerous learning parameters (i.e., the teacher).

%%%%%%%%%%% model size %%%%%%%%%%%%%%%%%%%%%%%%%%%%%%%%%%%%%%%%%%%%%%%%%%%%%%%%%%%%%%%%%%%
\subsection{Model Size Analysis}
We control the size of the teacher and the student to analyze the effects of the capacity gap between two recommenders on BD.
We report the results of a deep model and a latent factor model (i.e., NeuMF and BPR) on CiteULike.
Figure \ref{bdmsize1} shows the performance of the teacher trained with the student of varying sizes and Figure \ref{bdmsize2} shows the performance of the student trained with the teacher of varying sizes.
“S” indicates the size of the student (10\% of the teacher), “M” indicates the medium size (50\% of the teacher), and “T” indicates the size of the teacher.
Also, “fixed S” refers to the pre-trained student recommender with the size of "S", “fixed T” refers to the pre-trained teacher recommender with the size of "T".
Note that "fixed S/T" are not updated during the training (i.e., unidirectional knowledge distillation).

% Model size Figure
\begin{figure*}[t]
    \includegraphics[width=0.5\linewidth]{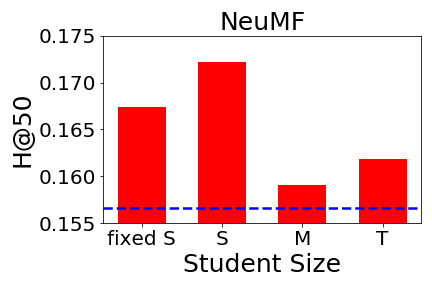}
    \includegraphics[width=0.5\linewidth]{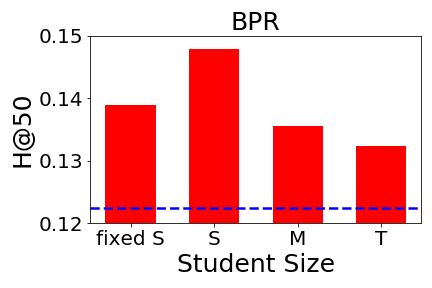}
    \caption{Performance of teacher with student of varying size. The blue dashed line indicates the performance of the model trained separately.}
\label{bdmsize1}
\end{figure*}

\begin{figure*}[t]
    \includegraphics[width=0.5\linewidth]{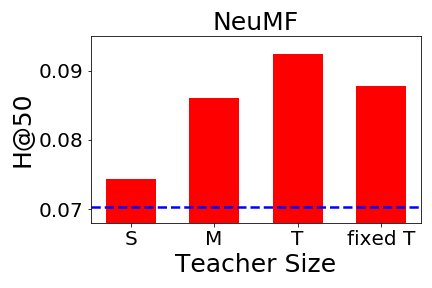}
    \includegraphics[width=0.5\linewidth]{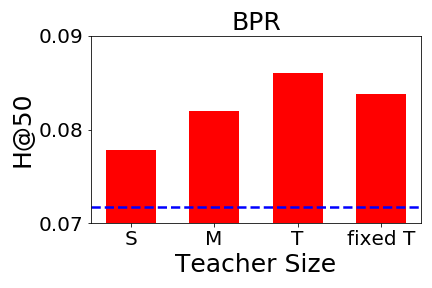}
    \caption{Performance of student with teacher of varying size. The blue dashed line indicates the performance of the model trained separately.}
\label{bdmsize2}
\end{figure*}

First, we observe that both the teacher and the student achieves the greatest performance gain when the capacity gap between two recommenders is largest;
the teacher shows the best performance with the smallest student (i.e., S in Fig. \ref{bdmsize1}) and the student performs best with the largest teacher (i.e., T in Fig. \ref{bdmsize2}).
As mentioned in Section 3.2, the teacher and the student have complementarity as some user-item relationships can be better captured without expensive computations.
In this regard, such complementarity can be maximized when the capacity gap between the two recommenders is large enough.
This result supports our claim that the performance gain comes from the different but reciprocal knowledge of two recommenders.
Also, it is worth noting that there are still performance improvements when two recommenders have identical sizes (i.e., T in Fig. \ref{bdmsize1}, S in Fig. \ref{bdmsize2}).
This can be understood as a kind of self-distillation effect \cite{born18, self19} when the teacher has the same size as the student.
Although they have very similar kinds of knowledge, they can still regularize each other, preventing its counterpart from being overfitted to a few observed interactions.

Second, the teacher is more improved when it is trained along with the learning student than when it is trained with the fixed student (i.e., S vs. fixed S in Fig. \ref{bdmsize1}).
Similarly, the student is more improved when it is trained together with the learning teacher than when it is trained with the fixed teacher (i.e., T vs. fixed T in Fig. \ref{bdmsize2}).
In the unidirectional distillation, the pre-trained recommender (i.e., fixed S/T) is no longer improved, thus always conveys the same knowledge during the training.
On the contrary, within BD, both recommenders are trained together, thus the knowledge distilled between the recommenders gets gradually evolved as they are improved during the training.
As a result, each recommender can be further improved based on the evolved knowledge of its counterparts.
This result again shows the superiority of our bidirectional distillation over the unidirectional distillation of the existing methods.

%%%%%%%%%%% synchronization %%%%%%%%%%%%%%%%%%%%%%%%%%%%%%%%%%%%%%%%%%%%%%%%%%%%%%%%%%%%%%%%%%%
\subsection{Synchronization Analysis}
We perform in-depth analysis to investigate how well the teacher and the student learn each other's complementary knowledge within the proposed framework.
Specifically, we evaluate how much synchronized the two recommenders are within BD, which shows that they are improved based on each other's complementarity.
To quantify the degree of the synchronization, we define (normalized) Average Rank Difference as follows:
\begin{equation}
\text{Average Rank Diff.} = \frac{1}{n \cdot m} \sum_{\mathcal{D}_{test}} \abs{\text{Rank Diff.}^{u}(i)},
\end{equation}
where $n$ and $m$ are the numbers of users and items, respectively, and $\mathcal{D}_{test}$ is the test set that contains the held-out observed interaction for each user.
The rank difference ($\text{Rank Diff.}^{u}(i)$) is defined in Equation 1.

% table Before-After
\begin{table}[t]
\centering
  \caption{Average Rank Difference before and after the training with BD.}
  \begin{tabular}{ccccc}
    \toprule
    \multicolumn{2}{c}{Base Model} & CiteULike & Foursquare & Yelp\\
    \midrule
    \multirow{2}{*}{NeuMF} & before BD & 0.1944 & 0.1020 & 0.0918\\
    & after BD & 0.1323 & 0.0957 & 0.0824\\
    \midrule
    \multirow{2}{*}{CDAE} & before BD & 0.1190 & 0.0963 & 0.0544\\
    & after BD & 0.0871 & 0.0883 & 0.0472\\
    \midrule
    \multirow{2}{*}{BPR} & before BD & 0.1380 & 0.1180 & 0.0560\\
    & after BD & 0.1064 & 0.1082 & 0.0511\\
    \bottomrule
  \end{tabular}
  \label{bdard}
\end{table}

% Figure 1 Before-After
\begin{figure}[t]
    \includegraphics[width=0.5\linewidth]{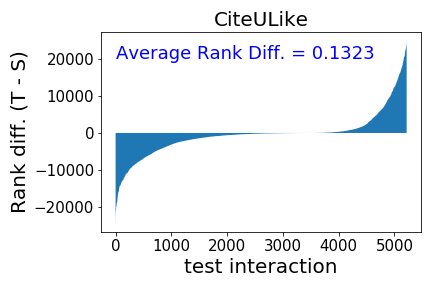}
    \includegraphics[width=0.5\linewidth]{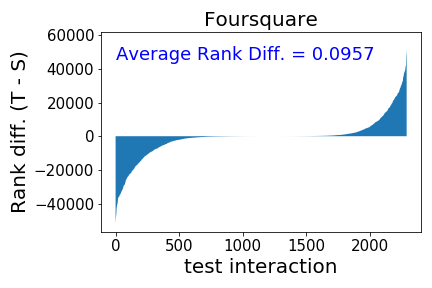}
\caption{Rank difference after the training with BD.}
\label{bdba}
\end{figure}

Table \ref{bdard} shows the change in the average rank difference after the training with BD.
We observe that the average rank difference gets consistently decreased on all the datasets.
This result indicates that the teacher and the student get synchronized by transferring their knowledge to each other.
Figure 4 shows the rank difference between the teacher and the student after the training with BD.
We adopt NeuMF as the base model as done in Section 3.2.
Note that the (normalized) average rank difference is proportional to the extent of the blue area.
We observe that the blue area in Figure \ref{bdba} shrinks compared to that in Figure \ref{TvsS1} after the training with BD.
Interestingly, we observe that the teacher is significantly improved on CiteULike dataset (by up to 20.79\%) which has the largest average rank difference change before and after the training with BD.
The large change indicates that the two recommenders get synchronized well during the training, which leads to significant improvements based on each other's knowledge.

%%%%%%%%%%% sampling %%%%%%%%%%%%%%%%%%%%%%%%%%%%%%%%%%%%%%%%%%%%%%%%%%%%%%%%%%%%%%%%%%%
\subsection{Sampling Scheme Analysis}
We examine the effects of diverse sampling schemes on the performance of BD to verify the superiority of the proposed sampling scheme.
Note that the schemes decide what knowledge to be distilled within BD.
We compare five different sampling schemes as follows: 
1) Rank discrepancy-aware sampling, 
2) Rank-aware sampling \cite{cd19},
3) Top-$N$ selection \cite{rd18},
4) Uniform sampling, 
5) \textit{Swapped} rank discrepancy-aware sampling.

The rank discrepancy-aware sampling, which is our proposed scheme, focuses on the rank-discrepant items between the teacher and the student.
On the other hand, the rank-aware sampling (adopted in CD) and top-$N$ selection (adopted in RD) focus on the items ranked highly by one recommender.
The uniform sampling randomly chooses items from the entire recommendation list.
Finally, the swapped rank discrepancy-aware sampling, which is the ablation of the proposed scheme, swaps the sampling probability function of each distillation direction;
we make the teacher follow the student on most of the rank-discrepant items (with \textit{tanh}), and make the student imitate only a few predictions of the teacher (with \textit{exp}).

% Sampling ablation
\begin{table}[t]
\centering
  \renewcommand{\arraystretch}{1}
  \caption{Recommendation performance (H@50) of BD with different sampling schemes on CiteULike. Numbers in bold face are the best results.}
  \begin{tabular}{cc cc}
    \toprule
    Base Model & Sampling Scheme & Teacher & Student\\
    \midrule
    \multirow{5}{*}{NeuMF} & Rank discrepancy-aware & \textbf{0.1722} & \textbf{0.0924} \\
    & Rank-aware \cite{cd19} & 0.1617 & 0.0812 \\
    & Top-$N$ selection \cite{rd18} & 0.1480 & 0.0766 \\
    & Uniform & 0.1590 & 0.0726 \\
    & Swapped rank discrepancy-aware & 0.1512 & 0.0747 \\
    \midrule
    \multirow{5}{*}{CDAE} & Rank discrepancy-aware & \textbf{0.1983} & \textbf{0.0943} \\
    & Rank-aware \cite{cd19} & 0.1818 & 0.0891 \\
    & Top-$N$ selection \cite{rd18} & 0.1757 & 0.0870 \\
    & Uniform & 0.1788 & 0.0819 \\
    & Swapped rank discrepancy-aware & 0.1733 & 0.0851 \\
    \midrule
    \multirow{5}{*}{BPR} & Rank discrepancy-aware & \textbf{0.1479} & \textbf{0.0853} \\
    & Rank-aware \cite{cd19} & 0.1367 & 0.0805 \\
    & Top-$N$ selection \cite{rd18} & 0.1264 & 0.0772 \\
    & Uniform & 0.1306 & 0.0749 \\
    & Swapped rank discrepancy-aware & 0.1281 & 0.0803 \\
    \bottomrule
  \end{tabular}
  \label{bdabl}
\end{table}

Table \ref{bdabl} shows the performance of BD with different sampling schemes.
We first observe that the proposed scheme achieves the best result both for the teacher and the student.
This result verifies the effectiveness of two components of the proposed scheme: 1) sampling based on the rank discrepancy, 2) probability functions differently designed for each distillation direction.
Specifically, we can see the effectiveness of rank discrepant-based sampling by comparing our sampling scheme with rank-aware sampling.
As observed in Section 3.2, the items merely ranked highly by the teacher may be not informative enough to fully enhance the student.
With the proposed scheme, BD focuses on the rank-discrepant items, thus can further improve the recommenders.
Also, we observe that swapping the probability function of each distillation direction (i.e., the swapped rank discrepancy-aware) significantly degrades the performance.
This result strongly indicates that each probability function is well designed to effectively cope with the large performance gap of two recommenders.

%%%%%%%%%% hyperparameter %%%%%%%%%%%%%%%%%%%%%%%%%%%%%%%%%%%%%%%%%%%%%%%%%%%%%%%%%%%%%%%%%%%
\subsection{Hyperparameter Analysis}
In this section, we provide thorough analyses that examine the effects of important hyperparameters on BD.
For the sake of the space, we report the results of NeuMF on CiteULike dataset.
We observe similar tendencies with other base models and datasets.

\begin{figure}[t]
\includegraphics[width=\linewidth, scale=1.7]{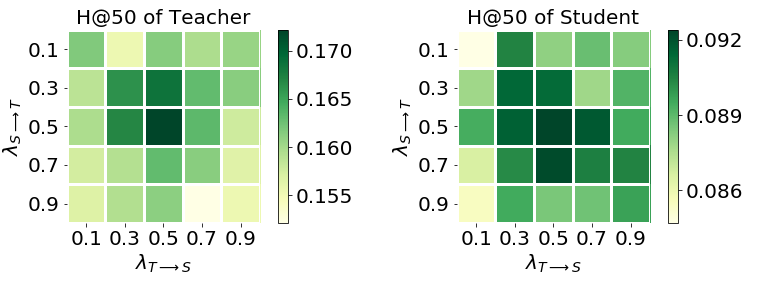}
\caption{Recommendation performance (H@50) of teacher and student with varying lambda.}
\label{bdlambda}
\end{figure}
\begin{figure}[t]
    \includegraphics[width=0.5\linewidth]{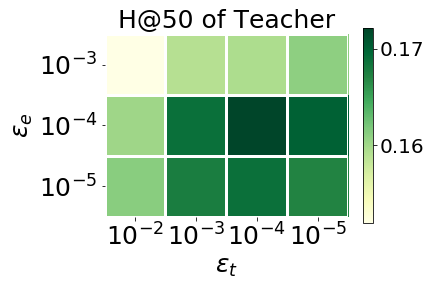}
    \includegraphics[width=0.5\linewidth]{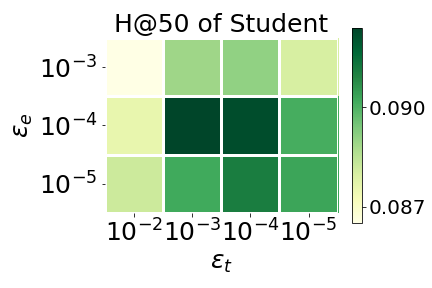}
\caption{Recommendation performance (H@50) of teacher and student with varying smoothing factor.}
\label{bdweight}
\end{figure}
\begin{figure}[t]
    \includegraphics[width=0.5\linewidth]{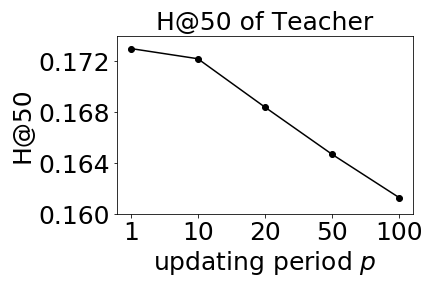}
    \includegraphics[width=0.5\linewidth]{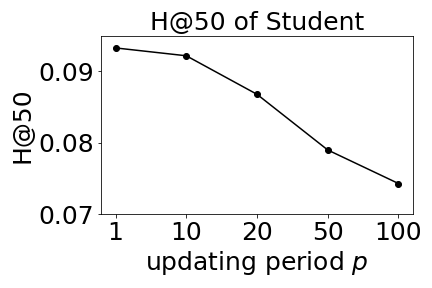}
\caption{Recommendation performance (H@50) of teacher and student with varying updating period.}
\label{bdup}
\end{figure}

% lambda
First, Figure \ref{bdlambda} shows the performance of the teacher and the student with varying $\lambda_{T \rightarrow S}$ and $\lambda_{S \rightarrow T}$ which control the effects of the distillation losses.
The best performance is achieved when both $\lambda_{T \rightarrow S}$ and $\lambda_{S \rightarrow T}$ are around 0.5.
Also, we observe that both the recommenders are considerably improved when $\lambda_{T \rightarrow S}$ and $\lambda_{S \rightarrow T}$ have similar values (i.e., diagonal entries).
Moreover, we observe that the performance of the student is more robust with respect to $\lambda$ than that of the teacher.
We believe that this is because the student has a lower overall performance than the teacher, thus can easily take advantage of the teacher in broad settings.

% weight
Second, Figure \ref{bdweight} shows the performance of the teacher and the student with varying $\epsilon_t$ and $\epsilon_e$ which control the smoothness of the probability functions in rank discrepancy-aware sampling.
When $\epsilon_t$ and $\epsilon_e$ are small, the probability functions become smooth.
The best performance is achieved when both parameters are around $10^{-4}$.
When $\epsilon_e$ is bigger than $10^{-3}$, the probability $p_{S \rightarrow T}(\cdot)$ gets too sharp.
Thus, the teacher cannot fully learn the student's complementary knowledge, which leads to degraded performance.

% updating period
Lastly, Figure \ref{bdup} shows the performance of the teacher and the student with varying rank updating period $p$ (in Algorithm 1).
Since it is time-consuming to generate the recommendation list every epoch, we update the recommendation list of the two models every $p$ epoch.
In this paper, we use $p=10$ which shows the comparable performance to the upper-bound ($p=1$).

\subsection{Additional Result}
In this section, we report the experimental results on ML100K and AMusic, which are used for CD \cite{cd19}, for the direct comparison with CD.
We do not include this result in our main table, because we consider those datasets are relatively small to simulate the real-world evaluation (ML100K has only 943 users and 1682 items).
We adopt CDAE as the base model and we employ 2-layer MLP for the encoder and the decoder of CDAE.
Experimental results on their experimental settings are as follows:
\begin{table}[h!]
\centering
  \caption{Performance comparison with CD.}
  \begin{tabular}{c cc cc}
    \toprule
    \multirow{2}{*}{Model} & \multicolumn{2}{c}{ML100K} & \multicolumn{2}{c}{AMusic} \\
     & H@50 & N@50 & H@50 & N@50 \\
    \midrule
    Teacher & 0.3966 & 0.1259 & 0.1748 & 0.0533 \\
    BD-Teacher & 0.4317 & 0.1442 & 0.1936 & 0.0607 \\
    BD-Student & 0.4111 & 0.1321 & 0.1723 & 0.0545 \\
    CD & 0.3786 & 0.1232 & 0.1650 & 0.0506 \\
    Student & 0.3503 & 0.1078 & 0.1265 & 0.0416 \\
    \midrule
    \textit{Improv.T} & 8.85\% & 14.54\% & 10.76\% & 13.88\% \\
    \textit{Improv.B} & 8.59\% & 7.22\% & 4.42\% & 7.71\% \\
    \bottomrule
  \end{tabular}
\end{table}

All results are the average of five iterations and statistically significant with p=0.01.
The proposed approach (BD) still outperforms the best competitor (CD) both on ML100K and AMusic.
Moreover, the ranking performance of the teacher also increases with BD.

\section{Related Work}
\textbf{Reducing inference latency of RS.}
Several methods have been proposed for reducing the model size and inference time of recommender systems.
First, a few work adopt discretization techniques to reduce the size of recommenders \cite{discreterec1, discreterec2, discreteAAAI, binary12, discreterec3}.
They learn discrete representations of users and items to make portable recommenders and successfully reduce the model size.
However, their recommendation performance is highly limited due to the restricted capability and thus the loss of recommendation performance is unavoidable \cite{cd19}.
Second, several methods try to accelerate the inference phase by adopting model-dependent techniques \cite{treeRS, inference15, inference17}.
For example, the order-preserving transformations \cite{treeRS} and the pruning techniques \cite{inference15, inference17} have been adopted to reduce the computational costs of the inner product.
Although they can reduce the inference latency, they are applicable only to specific models (e.g., inner product-based models).

To tackle this challenge, a few recent methods \cite{rd18, cd19, DERRD} have adopted knowledge distillation to RS.
KD is a model-agnostic strategy and we can employ any recommender system as the base model. 
RD \cite{rd18} firstly proposes a KD method that makes the student give high scores on the top-ranked items of the teacher’s recommendation list.
Similarly, CD \cite{cd19} makes the student imitate the teacher’s prediction scores with particular emphasis on the items ranked highly by the teacher.
The most recent work, RRD \cite{DERRD}, formulates the distillation process as a relaxed ranking matching problem between the ranking list of the teacher and that of the student.
Since it is daunting for the small student to learn all the prediction results from the large teacher, they focus on the high-ranked items, which can affect the top-$K$ recommendation performance \cite{DERRD}. 
By using such supplementary supervisions from the teacher, they have successfully improved the performance of the student. 
However, they have some critical limitations in that 
1) they rely on the unidirectional distillation, 
2) the high-ranked items are not informative enough.
These limitations are thoroughly analyzed and resolved in this work. 

\textbf{Training multiple models together.}
There have been successful attempts to train multiple models simultaneously for better generalization performance in computer vision and natural language processing \cite{dml18, collaborativelearning18, dual16}.
In computer vision, Deep Mutual Learning (DML) \cite{dml18} trains a cohort of multiple classifiers simultaneously in a peer-teaching scenario where each classifier is trained to follow the predictions of the other classifiers along with its original loss.
Since each classifier starts from a different initial condition, each classifier's predicted probabilities of the next most likely class vary.
DML claims that those secondary quantities provide extra information to the other classifiers that can help them to converge to a more robust minima.
Also, Collaborative Learning \cite{collaborativelearning18} trains several classifier heads of the same network simultaneously by sharing intermediate-level representations.
Collaborative Learning argues that the consensus of multiple views from different heads on the same data provides both supplementary information and regularization to each classifier head.
In natural language processing, Dual Learning \cite{dual16} proposes a learning mechanism that two machine translators teach each other to reduce the costs of human labeling.
Pointing out that the machine translation can be considered as a dual-task forming a closed loop (e.g., English-to-French/ French-to-English), 
Dual Learning generates informative feedback by transferring their outputs to each other and achieves considerable performance improvements without the involvement of a human labeler.
Although the aforementioned methods have effectively increased performance by training multiple models simultaneously, they mostly employ models with the same sizes focusing only on achieving better generalization based on the consensus of various views.

In this paper, we propose a novel Bidirectional Distillation framework whereby two recommenders of different sizes are trained together by transferring their knowledge to each other.
Pointing out that the large recommender and the small recommender have different but complementary knowledge, we design an effective distillation mechanism that considers their huge capacity gap.

\section{Conclusion \& Future Work}
We propose a novel Bidirectional Distillation framework for top-$K$ recommender systems whereby the teacher and the student are collaboratively improved with each other during the training.
Within our proposed framework, both the teacher and the student transfer their knowledge to each other with the distillation loss.
Also, our framework considers the capacity gap between the teacher and the student with differently tailored \textit{rank discrepancy-aware sampling}.
Our extensive experiments on real-world datasets show that BD significantly outperforms the state-of-the-art competitors in terms of the student recommender.
Furthermore, the teacher model also gets the benefit of the student and performs better than when being trained separately.
We also provide analyses for the in-depth understanding of BD and verifying the effectiveness of each proposed component.

Moreover, we can adopt our confidence calibration approach for knowledge distillation.
Since the prediction confidence of the student and the teacher serves as an additional soft target label, the confidence calibration technique can be readily utilized in bidirectional distillation.
As the calibrated confidence can provide more reliable uncertainty on the prediction of the student and the teacher, we expect the confidence calibration may enhance the performance of the bidirectional distillation.

\chapter{Top-Personalized-K Recommendation}
The conventional top-\textit{K} recommendation, which presents the top-\textit{K} items with the highest ranking scores, is a common practice for generating personalized ranking lists.
However, is this fixed-size top-$K$ recommendation the optimal approach for every user’s satisfaction?
Not necessarily.
We point out that providing fixed-size recommendations without taking into account user utility can be suboptimal, as it may unavoidably include irrelevant items or limit the exposure to relevant ones.
To address this issue, we introduce Top-Personalized-$K$ Recommendation, a new recommendation task aimed at generating a personalized-sized ranking list to maximize individual user satisfaction.
As a solution to the proposed task, we develop a model-agnostic framework named PerK.
PerK estimates the expected user utility by leveraging calibrated interaction probabilities, subsequently selecting the recommendation size that maximizes this expected utility.
Through extensive experiments on real-world datasets, we demonstrate the superiority of PerK in Top-Personalized-$K$ recommendation task.
We expect that Top-Personalized-$K$ recommendation has the potential to offer enhanced solutions for various real-world recommendation scenarios, based on its great compatibility with existing models.

\section{Introduction}
\begin{figure}[t]
    \includegraphics[width=\linewidth]{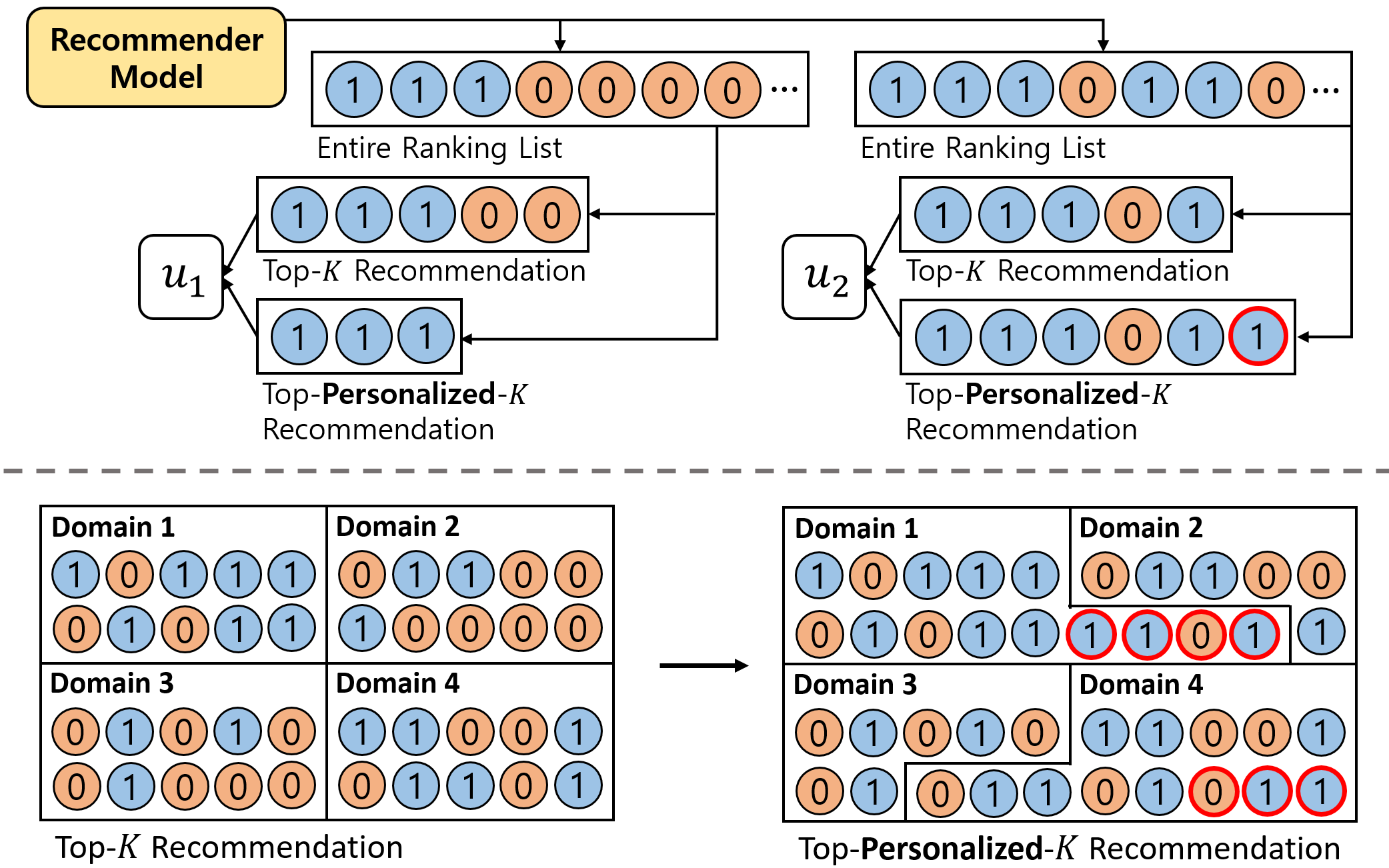}
\caption{An example of top-personalized-$K$ recommendation. 1 and 0 represent the relevant and the irrelevant items, respectively. Note that these labels are not available at the inference phase. Newly added items have red outlines.}
\label{perkintro}
\end{figure}

Personalized recommendations have a significant impact on various daily activities such as shopping, advertising, watching videos, and listening to music.
To generate personalized ranking lists of items, recommender systems utilize the top-$K$ recommendation approach \cite{topk10}, which presents the $K$ items with the highest ranking scores, sorted in descending order.
This approach has become a common practice in recent recommender systems \cite{sgl21, topk22, topkk22} due to its optimality with the globally fixed recommendation size \cite{prp77}.
However, while tremendous efforts have been made on recommender models, an important question has been overlooked in the previous literature: \textit{is the fixed-size top-$K$ recommendation the optimal approach for ensuring every user’s satisfaction?}

To elucidate the potential drawbacks of the top-$K$ recommendation approach, we start with a motivating example with $K=5$ in Figure \ref{perkintro}.
For user 1, the top-$K$ recommendation \textit{unavoidably includes} two irrelevant items in the tail, as the ranking list could not be filled with enough relevant items.
Moreover, for user 2, the recommendation could be further improved by including one additional relevant item in the tail, although user 2 receives a more accurate top-$K$ recommendation with four relevant items.
As such, globally fixing the recommendation sizes results in (1) exposing users to irrelevant items that could lead to ad blindness \cite{swing08} or user attrition \cite{leave06}, and (2) limiting chances to provide relevant items, which can curtail user engagement and revenue \cite{revenue19}.
In this context, we argue the top-$K$ recommendation with a globally fixed recommendation size is not the optimal approach for both user satisfaction and system efficacy.

Instead of globally fixed recommendation sizes, a recommendation size that is \textit{personalized} for individual user satisfaction can create enhanced solutions for various recommendation scenarios.
Back in Figure \ref{perkintro}, by adopting personalized recommendation sizes, the system can increase both users' satisfaction by reducing the effort spent inspecting irrelevant items (user 1) and providing more relevant items (user 2).
Furthermore, the personalized recommendation size paves a way to further increase user satisfaction in various applications, especially for systems with limited resources for making recommendations:
(1) \textit{Multi-domain recommender systems} \cite{cdr12} that display items from various domains on a single constrained screen can strike a balance in recommendation sizes for the maximum overall user satisfaction, by employing adapted-sized ranking lists from each domain (Figure 1 bottom).
(2) In the case of \textit{sponsored advertisements} \cite{swing08}, advertisers can achieve higher user engagement with the same promotion expenses by adjusting the number of promoted items based on each user's expected utility.
(3) In the context of the \textit{prefetching mechanism} \cite{prefetch16}, which caches the initial few seconds of videos expected to be clicked in order to reduce startup delay, the system can minimize cache size and prevent cache pollution by adjusting the number of items to be cached.
In this sense, we claim that embracing personalized recommendation sizes not only enhances user satisfaction but also unlocks diverse optimization possibilities in real-world systems.

In this paper, we propose \textbf{Top-Personalized-$K$ Recommendation}, a new recommendation task resolving the limitation of the top-$K$ recommendation.
Formally, the top-personalized-$K$ recommendation refers to providing a ranking list of the variable size that maximizes individual user satisfaction, which can be quantitatively measured by user utility \cite{fair18, saito22fair}.
As a solution to the proposed task, we develop \textbf{PerK}, a framework to determine the personalized recommendation size with any existing recommender model.
PerK first formulates a bi-level optimization problem where the objective is to determine the recommendation size that maximizes each user's utility (Sec 4.4).
To solve this optimization, we introduce the concept of \textit{expected user utility} (Sec 4.5.2), as it is not feasible to compute the true user utility during the inference phase.
We treat the interaction labels of unobserved items as Bernoulli random variables and derive the expectation of the user utility for various widely-used utility measures.
Derived expected user utilities can be computed with the interaction probability of user-item pairs.

The remaining challenge is obtaining accurate interaction probability with an arbitrary recommender model.
The recommender models do not necessarily output the accurate interaction probability \cite{cal17, kweon22}.
They often output unbounded ranking scores that cannot be treated as probabilities \cite{bpr09, lightgcn20} or miscalibrated interaction probabilities that do not accurately reflect the true likelihood of user-item interactions \cite{ncf17,vae18}.
To address this problem, we propose the use of \textit{calibrated interaction probability} obtained through the user-wise calibration function (Sec 4.5.3).
The calibration function maps the ranking scores of the recommender model to well-calibrated interaction probabilities \cite{kweon22}.
We adopt Platt scaling \cite{platt99} and instantiate it for each user to consider the different distributions of the ranking score across users. % , a simple and widely-adopted calibration function,
We train the calibration function to predict the interactions between pairs in a held-out calibration set. % held out for each user.
As a result, the output of the function accurately indicates the true likelihood of interaction, leading to accurate expected user utility.

In summary, we aim to find the optimal recommendation size for each user by (1) obtaining calibrated interaction probability with user-wise calibration, (2) estimating the expected user utility, and (3) determining the recommendation size that results in the maximum expected user utility.
The main contributions of our work can be described as follows:
\begin{itemize}[leftmargin=*]
    \item We highlight the necessity of personalized recommendation size based on its practical advantages in real-world scenarios, which has not been studied well in the previous literature.
    \item We propose Top-Personalized-$K$ Recommendation, a new recommendation task where the recommendation size can be adjusted for each user to enhance individual user satisfaction.    
    \item We develop PerK, a framework to determine the personalized recommendation size by estimating the user's expected utility with the calibrated interaction probability.
    \item We conduct comprehensive experiments with three base recommenders on four real-world datasets, demonstrating the superiority of PerK in the top-personalized-$K$ recommendation task.
\end{itemize}

\section{Related Work}
To the best of our knowledge, personalized recommendation size has not been studied well in the previous literature.
The nearest research line is the document list truncation \cite{arampatzis2009stop, bahri2020choppy} that aims to determine the optimal cutoff position for retrieved documents.
Document list truncation has been applied in various domains, including legal document retrieval \cite{legal07, legal22} and searching \cite{attncut21, mmoecut22}.
The recent methods \cite{bicut19, bahri2020choppy, attncut21, mmoecut22} formulate the truncation problem as a classification task that predicts the optimal position among the candidate cut-off positions.
The target label for this classification is given as a $K$-dimensional vector and each element indicates the probability of being the optimal position.
Then, they deploy large and deep models (e.g., Bi-LSTM \cite{lstm97}, Transformer \cite{transformer17}) for the classification.

\textbf{Limitations.} While this task is related to ours, directly applying truncation methods to the top-personalized-$K$ recommendation leads to poor performance for several reasons.
First, document retrieval datasets \cite{qin2010letor, msmarco16} have sufficient relevant documents, which provide rich target labels for classification.
In contrast, recommendation datasets suffer from severe sparsity and only a small fraction of items is relevant for each user.
Additionally, recommendation datasets often include hidden-relevant items among unobserved ones, providing inaccurate and noisy target labels.
Second, the document retrieval task leverages high-dimensional features extracted from text contents (e.g., tf-idf \cite{tfidf03} or doc2vec \cite{doc2vec12}) to train large transformer models.
However, in our scenarios, these additional features (e.g., review texts) are not always available for unobserved items \cite{unavailable18}, and scalar ranking scores are insufficient for training large transformer models.
As a result, state-of-the-art truncation models (e.g., AttnCut \cite{attncut21} and MtCut \cite{mmoecut22}) show limited performance when applied to the top-personalized-$K$ task, and a solution tailored to recommender systems is required.

\section{Preliminaries}
\subsection{Recommendation with Implicit Feedback}
\textbf{Implicit Feedback.} In this paper, we focus on the recommendation with implicit feedback \cite{implicit08}, a widely adopted scenario for top-$K$ recommendation \cite{topk10}.
Let $\mathcal{U}$ and $\mathcal{I}$ denote a set of users and a set of items, respectively.
For a pair of $u \in \mathcal{U}$ and $i \in \mathcal{I}$, an interaction label $Y_{u,i}$ is given as 1 if their interaction is observed and 0 otherwise.
It is noted that when $Y_{u,i}=0$, it indicates either that the item is irrelevant to the user or that it can be a \textit{hidden-relevant} item of the user \cite{saito20}.
A dataset $\mathcal{D} = \{(u,i) | Y_{u,i}=1\}$ consists of positive pairs, and is split into a training set $\mathcal{D}_{\textup{tr}}$ and a validation set $\mathcal{D}_{\textup{val}}$.
$\mathcal{I}^{-}_u = \{i|(u,i) \notin \mathcal{D}_{\textup{tr}} \}$ denotes the unobserved itemset of $u$.

\textbf{Recommender Model.} A recommender model $f_{\theta}: \mathcal{U} \times \mathcal{I} \rightarrow \mathbb{R}$ learns to assign a ranking score to each user-item pair. 
In the literature, a variety of model architectures for $f_{\theta}$ have been deployed, including matrix factorization \cite{mf09}, neural networks \cite{ncf17,vae18}, and graph neural networks \cite{lightgcn20, sgl21}.
To train recommender models, point-wise loss (e.g., binary cross-entropy, mean squared error), pair-wise loss (e.g., BPR \cite{bpr09}, Margin Ranking loss \cite{margin07}), and list-wise loss (e.g., InfoNCE \cite{infonce18}, Sampled Softmax \cite{nips22soft}) have been adopted.
After the recommender model is trained, we produce a ranking list $\pi$ with unobserved items in $\mathcal{I}_u^-$ by sorting the ranking scores $f_{\theta}(u, i)$ in descending order.

\subsection{User Utility}
In this paper, we adopt user utility for the quantitative measurement of user satisfaction yielded by recommendation \cite{fair18, saito22fair}.
User utility evaluates the ranking list $\pi$ based on how much the recommended items are \textit{exposed} and \textit{relevant} to the user:
\begin{equation}
    U(\pi | u) = \sum_{i \in \pi} \omega(\textup{rank}(i|\pi)) \rho(Y_{u,i}).
    \label{perkutil}
\end{equation}
The function $\omega(\cdot)$ maps an item's rank to the item's exposure based on the position-bias model \cite{positionbias08}, and $\rho(\cdot)$ casts the relevance of an item to the user utility.\footnote{In the evaluation phase, we have the true relevance (i.e., irrelevant or hidden-relevant) for unobserved items.}
$\textup{rank}(i|\pi)$ is the rank of item $i$ in the ranking list $\pi$.
Depending on the formulation of $\omega(\cdot)$ and $\rho(\cdot)$, we can represent various utility measures.
For example, Discounted Cumulative Gain (DCG) \cite{NDCG02} can be represented with $\omega(r)=\frac{1}{\textup{log}_2(1+r)}$ and $\rho(Y)=\mathbbm{1}_{\{Y=1\}}(Y)$.
Additionally, other utility measures, such as Normalized Discounted Cumulative Gain \cite{NDCG02}, Penalized Discounted Cumulative Gain \cite{NDCG02, attncut21}, F1 score \cite{f111}, and Truncated Precision \cite{f111, vae18} have been adopted in the literature and will be investigated in this work.

\subsection{Top-K Recommendation}
\begin{definition}[Top-$K$ Recommendation]
The top-$K$ recommendation refers to providing a ranking list of $K$ items with the highest ranking scores.
\end{definition}
\noindent Typically, the recommendation size ($K$) is globally fixed and predefined by systems, taking into account platform constraints such as screen size, thumbnail dimensions, and promotion expense.
The probability ranking principle \cite{prp77} guarantees that this approach maximizes user utility under a fixed value of $K$ and for any decreasing function of $\omega(\cdot)$ in Eq.\ref{perkutil}.
That is, we get:
\begin{equation}
    \pi_{K} = \argsort_{i \in \mathcal{I}_u^{-}} f_{\theta}(u,i) [:K] = \argmax_{\pi \in \Pi_K} U(\pi|u), 
    \label{topk}
\end{equation}
where $[:K]$ denotes to take the first $K$ elements of the list and $\Pi_K$ is a set of all possible ranking lists, each with a size of $K$.
In the rest of this paper, $\pi_K$ represents the sorted top-$K$ ranking list obtained by Eq.\ref{topk} (e.g., $\pi_{m}$ denotes the top-$m$ ranking list).

\textbf{Limitations.}
Despite the prevalence and advancements in the top-$K$ recommendation, as discussed in Section 4.1, the top-$K$ recommendation has limitations in that it provides a fixed-size recommendation without consideration of user utility.
In this case, users must inspect irrelevant items to filter them out, which can be time-consuming, especially in domains with lengthy inspection times (e.g., movies).
This process can negatively impact user satisfaction \cite{sigir22}, resulting in users ignoring future recommendations \cite{swing08} or even leaving the system \cite{leave06}.
Moreover, the fixed size scheme may further degrade the efficiency of real-world applications, such as presenting an equal number of items from each domain without taking into account user preferences or promoting an equal number of items to each user without considering the users' expected utility.
Nevertheless, the methodology for determining the appropriate number of items to present remains unexplored.

\section{Proposed Task}
We here firstly propose a new recommendation task, named \textbf{Top-Personalized-$K$ Recommendation}, as a means to overcome the limitation of the top-$K$ recommendation.
\begin{definition}[Top-Personalized-$K$ Recommendation]
The top-personalized-$K$ recommendation refers to providing a ranking list where the recommendation size is \textit{optimized} in $[K]$ for each user, to maximize individual user utility.\footnote{$[K] = \{1,2,3,...,K\}$}
\end{definition}
\noindent This approach ensures that each user receives a tailored-sized recommendation, helps avoid presenting irrelevant items and providing more number of relevant items.
The problem of finding the optimal recommendation size $k_{\textup{max}}$ can be formulated as the following bi-level optimization problem:
\begin{equation}
\begin{aligned}
    & k_{\textup{max}} = \argmax_{k \in [K]} U(\pi_k | u) \\
    & \textup{s.t.} \quad \pi_k  = \argmax_{\pi \in \Pi_k} U(\pi|u) \\
    & \quad \quad \quad \ = \argsort_{i \in \mathcal{I}_u^{-}} f_{\theta}(u,i) [:k],  \ \ \ \forall k \in [K].
\end{aligned}
\label{def2}
\end{equation}
Fortunately, the inner optimization can be done readily since the top-$k$ ranking list $\pi_{k}$ is the optimal ranking list for each $k$.
Then, in the outer optimization, we would like to select the $k$ where $\pi_{k}$ yields the highest user utility.
However, it is noted that \textit{directly computing the user utility is infeasible}, since we do not have access to the true relevance of unobserved items in the inference phase.

\textbf{Applications.} While our work primarily focuses on the technical aspects, the proposed task has several implications for real-world applications.
Various scenarios can adopt personalized recommendation sizes by modifying the constraint in Eq.\ref{def2}.
For instance, in \textit{multi-domain recommender systems} \cite{cdr12, cdr17}, the total number of recommended items from various domains is constrained due to the single limited screen.
Instead of displaying the equivalent number of items from each domain, we can adjust the recommendation size of each domain to maximize the overall utility under the constraint.
Similarly, in \textit{sponsored advertisement} \cite{swing08, sponsored11}, the advertiser has a budget constraint on the promotion expenses.
In this case, instead of promoting the same number of items to all users, the system can present personalized numbers of promotions depending on each user’s utility while still satisfying the global budget constraint.

\textbf{PerK on Multi-Domain Scenario.}
In the multi-domain scenario, PerK generates top-personalized-$K$ recommendations for each domain by slightly modifying Eq.\ref{def2}:
\begin{equation}
\begin{aligned}
    & \{k^x_{\textup{max}}\}_{x \in \mathcal{S}} = \argmax_{\{ k^x \}_{x \in \mathcal{S}}} \sum_{x \in \mathcal{S}} U(\pi_{k^x} | u, x) \\
    & \textup{s.t.} \quad \pi_{k^x}  = \argsort_{i \in \mathcal{I}_{u,x}^{-}} f^x_{\theta}(u,i) [:k],  \ \ \ \forall k^x \in [K], x \in \mathcal{S} \\
    & \quad \quad  \sum_{x \in \mathcal{S}} k^x \leq N
\end{aligned}
\end{equation}
$k^x$ is the recommendation size for domain $x$ and $\mathcal{S}$ is set of all domains.
$U(\pi_{k^x} | u, x)$ is user utility of user $u$ and domain $x$.
$\mathcal{I}_{u,x}^{-}$ is the unobserved itemset of user $u$ and domain $x$, $f^x_{\theta}$ is the recommender model for domain $x$ (it can be any cross/multi-domain recommender model).
$N$ is the total recommendation size and $\mathbb{N}$ is the natural numbers set.
To briefly explain, PerK finds the recommendation size for each domain to maximize the average user utility across all domains under the constraint that the total recommendation size should not exceed $N$.
This optimization problem is a variant of the Knapsack problem \cite{salkin1975knapsack} and can be solved by dynamic programming.
We present a case study of the multi-domain scenario on the four largest domains of Amazon datasets in Section 4.6.6.

\section{Proposed Framework}
\subsection{Overview}
We propose \textbf{PerK}, a novel framework to find the optimal recommendation size for the top-personalized-$K$ recommendation.
PerK is a model-agnostic framework, allowing it to be adapted for any item recommendation scenario with existing recommenders.
To solve the bi-level optimization problem in Eq.\ref{def2}, PerK utilizes:
\begin{itemize}[leftmargin=*]
    \item \textbf{(Sec 4.5.2) Expected User Utility}:
    Expected user utility can be estimated by treating the interaction labels for unobserved items as Bernoulli random variables.
    PerK derives the computational form of the expected user utility for widely-used utility functions, which can be computed with the interaction probabilities.
    \item \textbf{(Sec 4.5.3) Calibrated Interaction Probability}:
    To obtain accurate interaction probabilities, PerK utilizes user-wise calibration functions instantiated and trained for each user.
    The calibration function maps the ranking scores of the recommender model to the calibrated interaction probabilities.   
\end{itemize}
To sum up, given a pre-trained recommender model, (1) PerK trains the user-wise calibration functions and gets the calibrated interaction probabilities for unobserved items (Sec 4.5.3), 
(2) PerK estimates the expected user utility for each candidate size $k$ with the calibrated  probability (Sec 4.5.2),
(3) PerK selects the size with the maximum expected user utility, and provides the recommendation list having the selected size (Sec 4.5.4).

\subsection{Expected User Utility}
We cannot compute the true user utility in Eq.\ref{def2}, since we do not have access to the true relevance of unobserved items in the inference phase.
To overcome this issue, PerK estimates the \textbf{expected user utility} instead of the true value by treating the interaction label $Y$ for unobserved items as Bernoulli random variables.
We defined the expected user utility as follows:
\begin{equation}
    \mathbb{E}_{Y}[ U(\pi | u) ] = \mathbb{E}_{Y} \Big[ \sum_{i \in \pi} \omega(\textup{rank}(i|\pi)) \rho(Y_{u,i}) \Big].
\end{equation}
For simplicity, we transform the above formalization for the top-$k$ ranking list $\pi_k$ as follows:
\begin{equation}
    \mathbb{E}_{Y}[ U(\pi_k | u) ] = \mathbb{E}_{Y} \Big[ \sum_{r=1}^{k} \omega(r) \rho(Y_{u,r}) \Big].
\end{equation}
With slight abuse of terminology, let $Y_{u,r}$ denote the interaction variable for user $u$ and the $r^{\textup{th}}$ item in $\pi_{k}$.
In the rest of this section, we derive the computational form of expected user utility for four widely-adopted utility measures.

\textbf{Normalized Discounted Cumulative Gain (NDCG).}
NDCG \cite{NDCG02}, one of the most established utility measures, is formulated as:
\begin{equation}
    U_{\textup{NDCG}}(\pi_k | u) = \frac{U_{\textup{DCG}}(\pi_k | u)}{U_{\textup{IDCG}}(\pi_k | u)} = \frac{\sum_{r=1}^{k} \frac{\mathbbm{1}_{\{Y=1\}}(Y_{u,r})}{\textup{log}_2(1+r)}}{\sum_{r = 1}^{\textup{min}(S_Y^u ,k)} \frac{1}{\textup{log}_2(1+r)}}.
    \label{perkndcg}
\end{equation}
$S_Y^u = \sum_{i \in \mathcal{I}_u^{-}} Y_{u,i}$ is the sum of all interaction variables for unobserved items of user $u$.
The expected NDCG with respect to the random variable $Y$ is:
\begin{equation}
\begin{aligned}
    \mathbb{E}_{Y}[ U_{\textup{NDCG}}(\pi_k | u) ] & = \mathbb{E}_{Y}\Big[ \frac{U_{\textup{DCG}}(\pi_k | u)}{U_{\textup{IDCG}}(\pi_k | u)} \Big] \\
    & = \sum_{Y_{u,1}, ..., Y_{u,|\mathcal{I}_u^{-}|}} P(Y_{u,1}, ..., Y_{u,|\mathcal{I}_u^{-}|}) \frac{U_{\textup{DCG}}(\pi_k | u)|_{Y_{u,*}}}{U_{\textup{IDCG}}(\pi_k | u)|_{Y_{u,*}}}.
\end{aligned}
\end{equation}
$U_{\textup{DCG}}(\pi_k | u)$ and $U_{\textup{IDCG}}(\pi_k | u)$ are conditionally independent given $Y_{u,*}$.
Since investigating all possible combinations of $Y_{u,r}$ is intractable, we re-formulate the above equation with the summation over possible $S_Y^u$ by adopting the total expectation theorem \cite{totalexpec08}. 
\begin{equation}
\begin{aligned}
    \mathbb{E}_{Y}[ U_{\textup{NDCG}}(\pi_k | u) ] & = \sum_{m = 1}^{|\mathcal{I}_u^{-}|} P(S_Y^u=m) \ \mathbb{E}_{Y|S_Y^u = m} \Big[ \frac{U_{\textup{DCG}}(\pi_k | u)}{U_{\textup{IDCG}}(\pi_k | u)} \Big] \\    
    & = \sum_{m = 1}^{|\mathcal{I}_u^{-}|} P(S_Y^u=m) \frac{\mathbb{E}_{Y|S_Y^u=m}[U_{\textup{DCG}}(\pi_k | u)]}{U_{\textup{IDCG}}(\pi_k | u)|_{S_Y^u=m}}.
\end{aligned}
\label{perkeq18}
\end{equation}
$U_{\textup{DCG}}(\pi_k | u)$ and $U_{\textup{IDCG}}(\pi_k | u)$ are conditionally independent given $S_Y^u$, and $U_{\textup{IDCG}}(\pi_k | u)|_{S_Y^u=m} = \sum_{r=1}^{\textup{min}(m,k)} \frac{1}{\textup{log}(1+r)}$.
The expected DCG with respect to $Y$ conditioned on $S_Y^u=m$ can be computed as follows:
\begin{equation}
\begin{aligned}
     \mathbb{E}_{Y|{S_Y^u=m}} [U_{\textup{DCG}}(\pi_k | u)] & = \mathbb{E}_{Y|{S_Y^u=m}} \Big[\sum_{r=1}^{k} \frac{\mathbbm{1}_{\{Y=1\}}(Y_{u,r})}{\textup{log}_2(1+r)} \Big] \\
     & = \sum_{r=1}^{k} \frac{P(Y_{u,r}=1 | S_Y^u=m)}{\textup{log}_2(1+r)} \\ 
     & = \sum_{r=1}^{k} \frac{P(Y_{u,r}=1)P(S_{Y_{-r}}^u=m-1)}{P(S_Y^u=m) \textup{log}_2(1+r)},
\end{aligned}
\label{perkeq19}
\end{equation}
where $S_{Y_{-r}}^u = S_{Y}^u - Y_{u,r}$.
After aggregating Eq.\ref{perkeq18} and Eq.\ref{perkeq19}, we get
\begin{equation}
    \mathbb{E}_{Y}[ U_{\textup{NDCG}}(\pi_k | u) ] = \sum_{m = 1}^{|\mathcal{I}_u^{-}|} \frac{\sum_{r=1}^{k} \frac{P(Y_{u,r}=1)P(S_{Y_{-r}}^u=m-1)}{\textup{log}_2(1+r)}}{\sum_{r=1}^{\textup{min}(m,k)} \frac{1}{\textup{log}_2(1+r)}}.
\end{equation}
For scalability, we adopt \textbf{two simple approximations} and get
\begin{equation}
    \mathbb{E}_{Y}[ U_{\textup{NDCG}}(\pi_k | u) ] \approx \sum_{m = 1}^{M} \frac{\sum_{r=1}^{k} \frac{P(Y_{u,r}=1)P(S_{Y}^u=m-1)}{\textup{log}_2(1+r)}}{\sum_{r=1}^{\textup{min}(m,k)} \frac{1}{\textup{log}_2(1+r)}}.
    \label{perkapp}
\end{equation}
(1) We aggregate the summation only to $M \leq 2000$ rather than $|\mathcal{I}_u^{-}|$ (here, $M$ is a hyperparameter), since the users are likely to interact with only a few items among the unobserved items.
(2) We replace $P(S_{Y_{-r}}^u=m-1)$ with $P(S_{Y}^u=m-1)$ as we confirmed that the effect of one interaction for $S_{Y}^u$ is negligible.
These simple techniques make the expected user utility can be estimated in real-time.
$P(S_{Y}^u=m-1) = P( \sum_{i \in \mathcal{I}_u^{-}} Y_{u,i} =m-1)$ follows the Poisson-Binomial distribution \cite{poisonbinomial60}, and can be computed only with the interaction probabilities $P(Y_{u,i}=1)$ for $i \in \mathcal{I}_u^{-}$.

\textbf{Penalized Discounted Cumulative Gain (PDCG).}
PDCG \cite{NDCG02}, a utility measure based on DCG, has a penalizing term for the irrelevant items in the ranking list.\footnote{It is called DCG in the document retrieval field \cite{attncut21,mmoecut22}, however, we call it PDCG to distinguish it from DCG in the item recommendation field in Eq.\ref{perkndcg}.}
\begin{equation}
    U_{\textup{PDCG}}(\pi_k | u) = \sum_{r=1}^{k} \frac{\mathbbm{1}_{\{Y=1\}}(Y_{u,r})- \mathbbm{1}_{\{Y=0\}}(Y_{u,r})}{\textup{log}_2(1+r)}.
\end{equation}
The expected PDCG with respect to interaction variable $Y$ is computed as follows:
\begin{equation}
\begin{aligned}
     \mathbb{E}_{Y}[ U_{\textup{PDCG}}(\pi_k | u) ] & = \mathbb{E}_{Y} \Big[ \sum_{r=1}^{k} \frac{\mathbbm{1}_{\{Y=1\}}(Y_{u,r})-\mathbbm{1}_{\{Y=0\}}(Y_{u,r})}{\textup{log}_2(1+r)} \Big] \\
    & = \sum_{r=1}^{k} \frac{2 \cdot P(Y_{u,r}=1) - 1}{\textup{log}_2(1+r)}.   
\end{aligned}
\label{perkpdcg}
\end{equation}

\textbf{F1 Score (F1).}
F1 \cite{f111} is a utility measure computed as the harmonic mean of Precision and Recall.
\begin{equation}
\begin{aligned}
    U_{\textup{F1}}(\pi_k | u) & = \frac{2 \cdot \frac{\sum_{r=1}^{k} \mathbbm{1}_{\{Y=1\}}(Y_{u,r})}{k} \cdot \frac{\sum_{r=1}^{k} \mathbbm{1}_{\{Y=1\}}(Y_{u,r})}{S_Y^u}}{\frac{\sum_{r=1}^{k} \mathbbm{1}_{\{Y=1\}}(Y_{u,r})}{k} + \frac{\sum_{r=1}^{k} \mathbbm{1}_{\{Y=1\}}(Y_{u,r})}{S_Y^u}} \\
    & = \frac{2 \cdot \sum_{r=1}^{k} \mathbbm{1}_{\{Y=1\}}(Y_{u,r})}{S_Y^u + k}.
\end{aligned}
\end{equation}
The expected F1 with respect to the interaction variable $Y$ is computed as follows:
\begin{equation}
\begin{aligned}
     \mathbb{E}_{Y}[ U_{\textup{F1}}(\pi_k | u) ] &= \sum_{m = 1}^{|\mathcal{I}_u^{-}|} P(S_Y^u=m) \ \mathbb{E}_{Y|S_Y^u = m} [ U_{\textup{F1}}(\pi_k | u)] \\     
    & = \sum_{m=1}^{|\mathcal{I}_u^{-}|} P(S^u_Y=m) \frac{2 \cdot \sum_{r=1}^{k} P(Y_{u,r}=1|S^u_Y=m)}{m+k} \\  
    & = \sum_{m=1}^{|\mathcal{I}_u^{-}|} \frac{2 \cdot \sum_{r=1}^{k} P(Y_{u,r}=1) P(S_{Y_{-r}}^u=m-1)}{m+k} \\  
    & \approx \sum_{m=1}^{M} \frac{2 \cdot \sum_{r=1}^{k} P(Y_{u,r}=1)P(S^u_Y=m-1) }{m+k}.  
\end{aligned}
\label{perkf1}
\end{equation}
Here, we use the total expectation theorem and apply the same approximations as done in Eq.\ref{perkapp}.

\textbf{Truncated Precision (TP).}
TP \cite{f111, vae18} is a utility measure that addresses the limitations of Recall and Precision.\footnote{For example, Precision cannot have a value of 1 if $k$ is larger than $S_Y^u$. TP is referred to as Recall in \cite{vae18}. However, in this work, we use the term Truncated Precision to distinguish it from the standard definition of Recall.}
\begin{equation}
    U_{\textup{TP}}(\pi_k | u) = \frac{\sum_{r=1}^{k} \mathbbm{1}_{\{Y=1\}}(Y_{u,r})}{\textup{min}(k,S_Y^u)}.
\end{equation}
The expected TP with respect to the interaction variable $Y$ is computed as follows:
\begin{equation}
\begin{aligned}
     \mathbb{E}_{Y}[ U_{\textup{TP}}(\pi_k | u) ] &= \sum_{m = 1}^{|\mathcal{I}_u^{-}|} P(S_Y^u=m) \ \mathbb{E}_{Y|S_Y^u = m} [ U_{\textup{TP}}(\pi_k | u)] \\     
    & = \sum_{m=1}^{|\mathcal{I}_u^{-}|} P(S^u_Y=m) \frac{\sum_{r=1}^{k} P(Y_{u,r}=1|S^u_Y=m)}{\textup{min}(k,m)} \\  
    & = \sum_{m=1}^{|\mathcal{I}_u^{-}|} \frac{\sum_{r=1}^{k} P(Y_{u,r}=1) P(S_{Y_{-r}}^u=m-1)}{\textup{min}(k,m)} \\  
    & \approx \sum_{m=1}^{M} \frac{\sum_{r=1}^{k} P(Y_{u,r}=1)P(S^u_Y=m-1) }{\textup{min}(k,m)}.  
\end{aligned}
\label{perktp}
\end{equation}
Here, we use the total expectation theorem and apply the same approximations as done in Eq.\ref{perkapp}.

\textbf{Interaction Probability.}
Up to this point, we have formulated the computational forms of expected user utility for four widely-adopted utility measures: NDCG (Eq.\ref{perkapp}), PDCG (Eq.\ref{perkpdcg}), F1 (Eq.\ref{perkf1}), and TP (Eq.\ref{perktp}).
In common, these measures all require the interaction probability $P(Y_{u,r}=1)$ for the estimation.
In the following subsection, we present our solution to obtain accurate interaction probabilities with an arbitrary recommender model.

\subsection{Calibrated Interaction Probability}
Recommender models do not necessarily output accurate interaction probability.
They often output the ranking score that can have any value of an unbounded real number \cite{bpr09, lightgcn20, sgl21}, making it difficult to treat it as a probability.
Furthermore, even when a model is trained to output probabilities \cite{ncf17, vae18}, it has been demonstrated that these probabilities may not accurately reflect the true likelihood (i.e., model miscalibration) \cite{cal17, kweon22}.

To address this, we introduce a post-processing calibration function $g_\phi: \mathbb{R} \rightarrow [0,1]$ that maps the ranking score $s_{u,i}=f_{\theta}(u,i)$ of a pre-trained recommender to the calibrated interaction probability $g_\phi(s_{u,i}) =\hat{P}(Y_{u,i}=1|s_{u,i})$.
A probability $p$ is regarded \textit{calibrated} if it indicates the ground-truth correctness likelihood \cite{beta17}:
$\mathbb{E} [ Y | g_{\phi}(s) = p ] = p$.
For example, if we have 100 user-item pairs with $p = 0.2$, we expect 20 of them to have interactions $(Y = 1)$.

We adopt Platt scaling $g_{\phi}(s) = \sigma(as + b)$ \cite{platt99}, a generalized form of temperature scaling \cite{cal17}.
This calibration function has been deployed effectively for model calibration in computer vision \cite{tsrevisit21, temperaturescaling21}, natural language processing \cite{tstransforment20}, and recommender system \cite{kweon22}.
The key difference is that PerK instantiates the calibration function for each user, while previous calibration work \cite{kweon22} deploys one global calibration function covering all users.
That is, we have:
\begin{equation}
    g_{\phi_u}(s) = \sigma(a_u s + b_u), \quad \forall u \in \mathcal{U}.
\end{equation}
The user-specific parameters $\phi_u=\{ a_u, b_u \}$ are related to the distribution of the ranking score of each user \cite{beta17,kweon22}.
Therefore, the user-wise calibration function can consider the different distributions of the ranking score across users.

We train the calibration function to predict the interactions of pairs in a calibration set constructed for each user.
It is a common practice to adopt the validation set as the calibration set \cite{cal17, kweon22}.
In this work, the calibration set for user $u$ is constructed as follows:
\begin{equation}
    \mathcal{D}_u^{\textup{cal}} = \{(u,i,Y_{u,i}=1) | i \in \mathcal{I}_u^{\textup{val}}\} \cup \{(u,i,Y_{u,i}=0) | i \in \mathcal{I}_u^{-} \setminus \mathcal{I}_u^{\textup{val}}\},
\end{equation}
where $\mathcal{I}_u^{\textup{val}}\ = \{i|(u,i) \in \mathcal{D}_{\textup{val}} \}$.
We also use the binary cross-entropy loss, a widely-adopted loss function for the calibration of binary classifiers \cite{cal17, kweon22, beta17}, as our calibration loss.
\begin{equation}
    \mathcal{L}_u^{\textup{cal}} = \sum_{(u,i,Y_{u,i}) \in \mathcal{D}^{\textup{cal}}_u} - Y_{u,i}\text{log}( g_{\phi_u}(s_{u,i})) - (1-Y_{u,i}) \text{log}(1 - g_{\phi_u}(s_{u,i})).
\end{equation}
During the fitting of the calibration function, the base recommender model $f_{\theta}$ is fixed and only $g_{\phi_u}$ is updated.
It is worth mentioning that Platt scaling with binary cross-entropy is mathematically equivalent to logistic regression and can be efficiently solved. % using existing modules such as Scipy \cite{scikit-learn}.
Additionally, as we only have two learnable parameters for each user, our calibration functions have a negligible impact on space complexity.

\subsection{Optimization Procedure}
\begin{algorithm}[t]
\DontPrintSemicolon
\SetKwInOut{Input}{Input}
\SetKwInOut{Output}{Output}
\Input{Training set $\mathcal{D}_{\textup{tr}}$, validation set $\mathcal{D}_{\textup{val}}$, pre-trained recommender model $f_\theta$, maximum recommendation size $K$, expected user utility $\mathbb{E}_{Y}[ U(\pi_{k} | u) ]$}
\Output{Top-personalized-$K$ recommendation $\pi^{\textup{perK}}$}
\BlankLine
Fit user-wise calibration $g_{\phi_u}$ on $f_\theta$ \Comment{Section 4.5.3}
\BlankLine
$\pi_{K} = \argsort_{i \in \mathcal{I}_u^{-}} f_{\theta}(u,i) [:K] $ \Comment{Top-$K$ Recommendation} \\
\For{$k \in [K]$}{
$\pi_{k} = \pi_{K}[:k]$ \\
Compute $\mathbb{E}_{Y}[ U(\pi_{k} | u) ]$ with $g_{\phi_u}$ \Comment{Section 4.5.2}
}
\BlankLine
$k_{\textup{max}} = \argmax_{k \in [K]} \mathbb{E}_{Y} [U(\pi_{k} | u)]$. \\
$\pi^{\textup{perK}} = \pi_{k_{\textup{max}}}$ \\
\caption{Top-Personalized-K Recommendation with PerK}
\label{perkalgo}
\end{algorithm}

Algorithm \ref{perkalgo} presents the entire procedure of PerK solving the bi-level optimization in Eq.\ref{def2} for the top-personalized-$K$ recommendation.
(line 1): PerK first fits the user-wise calibration function $g_{\phi_{u}}$ with $\mathcal{D}_u^{\textup{cal}}$ and $\mathcal{L}_u^{\textup{cal}}$ on top of pre-trained and fixed recommender $f_\theta$.
(line 2): PerK generates top-$K$ recommendation by sorting the ranking score.
(line 3-5): For each candidate size $k$, PerK estimates the expected user utility $\mathbb{E}_{Y}[ U(\pi_{k} | u) ]$ by using the calibrated interaction probability $g_{\phi_{u}}(s_{u,i})=\hat{P}(Y_{u,i}=1|s_{u,i})$.
(line 6-7): Lastly, PerK select $k_{\textup{max}}$, the recommendation size with the highest expected user utility, and provide the top-personalized-$K$ recommendation $\pi^{\textup{perK}} = \pi_{k_{\textup{max}}}$ to user $u$.

\section{Experiments}
\subsection{Experiment Setup}
We provide a summary of the experiment setup due to limited space.
We will publicly provide the GitHub repository of this work in the final version.

\begin{table}[ht]
\centering
  \caption{Data statistics after the preprocessing.}
  \begin{tabular}{ccccc}
    \toprule
    Dataset & \#Users & \#Items & \#Interactions & Sparsity \\
    \midrule
    MovieLens 10M & 69,838 & 8,939 & 9,985,038 & 98.40\% \\
    CiteULike & 3,277 & 16,807 & 178,187 & 99.68\% \\
    MovieLens 25M & 162,414 & 18,424 & 24,811,113 & 99.17\% \\
    Amazon Books & 157,809 & 112,048 & 8,460,428 & 99.95\% \\
    \bottomrule
  \end{tabular}
\end{table}
\textbf{Datasets.}
We use four real-world datasets including MovieLens 10M,\footnote{https://grouplens.org/datasets/movielens/} CiteULike \cite{CUL13}, MovieLens 25M, and Amazon Books \cite{amazon19}.
These datasets are publicly available and have been widely used in the literature \cite{ncf17, lightgcn20, cdl15, cvae17}.
We adopt the 20-core setting for all datasets as done in MovieLens datasets.
We randomly split each user’s interactions into a training set (60\%), a validation set (20\%), and a test set (20\%).

\textbf{Methods Compared.}
We compare PerK with various traditional and recent methods.
Specifically, we adopt
\begin{itemize}
    \item \textbf{Oracle}: It uses the ground-truth labels of the test set to determine the optimal recommendation size for each user.
\end{itemize}
and three traditional methods:
\begin{itemize}
    \item \textbf{Top-$k$}: It denotes the top-$k$ recommendation with globally fixed recommendation size. We adopt $k \in \{ 1, 5, 10, 20, 50\}$.
    \item \textbf{Rand}: It randomly selects the recommendation size for each user. It represents the lower bound of the performance.
    \item \textbf{Val-$k$}: It selects the recommendation size that maximizes validation utility for each user.
\end{itemize}
and two state-of-the-art methods for truncating document retrieval results:
\begin{itemize}
    \item \textbf{AttnCut} \cite{attncut21}: It deploys a classification model with Bi-LSTM and Transformer encoder to predict the best cut-off position.
    \item \textbf{MtCut} (MMoECut) \cite{mmoecut22}: It deploys MMoE \cite{mmoe18} on top of AttnCut architecture and adopt multi-task learning.
\end{itemize}
and the proposed framework:
\begin{itemize}
    \item \textbf{PerK (ours)}: We determine the personalized recommendation size by estimating the expected user utility with calibrated interaction probability.
\end{itemize}
For AttnCut and MtCut, we use the source code of the authors.\footnote{\url{https://github.com/Woody5962/Ranked-List-Truncation}}
We use a $K$-dimentional ranking score vector for the input of the models.
We have tried to use the item embeddings as additional features for the models, but it did not increase the performance.
For each dataset, hyperparameters are tuned by using grid searches on the validation set. 
We use Adam optimizer with learning rate in $\{10^{-2}, 10^{-3}, 10^{-4}\}$ and weight decay in $\{0, 10^{-1}, 10^{-3}, 10^{-5}, 10^{-7}, 10^{-9} \}$.
We set the batch size to 64 and the embedding size to 64.
The number of layers and the number of transformer heads are chosen from $\{1, 2 \}$ and the dropout ratio is set to 0.2.
Each model is trained until the convergence of validation performa
For a quantitative comparison of user utility, we set the maximum recommendation size to 50 for all methods.
We also tried 100 and observed a similar performance improvement for PerK.

\textbf{Base Recommender Models ($f_{\theta}$).}
We adopt three widely-used recommender models with various model architectures and loss functions.
Bayesian Personalized Ranking (BPR) \cite{bpr09} captures the user-item relevance by the inner product of the user and the item embeddings, and is trained with the loss function that makes the model put the higher ranking score on the observed pair than the unobserved pair.
Neural Collaborative Filtering (NCF) \cite{ncf17} adopts the feed-forward neural networks to output the ranking score of a user-item pair and is trained with the binary cross-entropy loss.
LightGCN (LGCN) \cite{lightgcn20} adopts simplified Graph Convolutional Networks (GCN) \cite{GCN} to capture the high-order interaction between the user and the item, and is trained with the loss function of BPR.
Since PerK is a model-agnostic framework, other models can be also adopted for PerK in future work.

For all the base recommender models, we basically follow the source code of the authors and use PyTorch \cite{pytorch19} for the implementation.
For each dataset, hyperparameters are tuned by using grid searches on the validation set. 
We use Adam optimizer \cite{adam14} with learning rate in $\{10^{-2}, 10^{-3}, 10^{-4}\}$ and weight decay in $\{0, 10^{-1}, 10^{-3}, 10^{-5}, 10^{-7}, 10^{-9} \}$.
We set the batch size to 8192 and the embedding size is chosen from $\{ 64, 128 \}$ for all base models.
For NCF and LGCN, the number of layers is chosen from $\{ 1, 2\}$.
The negative sample rate is set to 1 for all models.
Each model is trained until the convergence of validation performance.
The ranking score from the base recommender model serves as input for PerK, AttnCut, and MtCut.

\subsection{Comparison of User Utility}
\begin{sidewaystable}[ph!] %\begin{table*}[ht]
 \caption{Average user utility of recommendations produced by compared methods when they are optimized for each target utility measure. \textit{Improv.} denotes the improvement of PerK over MtCut. The numbers in boldface denote the best result in each setting. *We conduct the paired t-test with a 0.05 level and every \textit{Improv.} is statistically significant.}
  \resizebox{\textwidth}{!}{% 
  \begin{tabular}{cl | cccc | cccc | cccc | cccc}
    \toprule 
     Base & \multirow{2}{*}{Method} & \multicolumn{4}{c|}{MovieLens 10M} & \multicolumn{4}{c|}{CiteULike} & \multicolumn{4}{c|}{MovieLens 25M} & \multicolumn{4}{c}{Amazon Books}\\
     Model & & NDCG & PDCG & F1 & TP &NDCG & PDCG & F1 & TP & NDCG & PDCG & F1 & TP & NDCG & PDCG & F1 & TP \\
    \toprule
    % BPR
     & Oracle &0.6190 & 0.6702  & 0.3152 & 0.7078 & 0.3020 & -0.3677  & 0.1806 & 0.3940 & 0.5961 & 0.6089  & 0.2883 & 0.6812 & 0.1895 & -0.7438  & 0.1133 & 0.2643 \\
     \cmidrule{2-18}
     & Top-1 & 0.4549 & -0.0902 & 0.0546 & 0.4549 & 0.1837 & -0.6326  & 0.0338 & 0.1837 & 0.4392 & -0.1216 & 0.0492 & 0.4392 & 0.0961 & -0.8078  & 0.0216 & 0.0961 \\
     & Top-5  &0.3925 & -0.6428 & 0.1541 & 0.3751 & 0.1489 & -2.0787  & 0.0842 & 0.1401 & 0.3729 & -0.7561 & 0.1357 & 0.3547 & 0.0786 & -2.4938  & 0.0507 & 0.0741 \\
     & Top-10  &0.3725 & -1.3715 & 0.1984 & 0.3584 & 0.1495 & -3.3650  & 0.1024 & 0.1483 & 0.3496 & -1.5372 & 0.1757 & 0.3328 & 0.0828 & -3.9401  & 0.0582 & 0.0852 \\
     & Top-20 & 0.3734 & -2.8434 & 0.2228 & 0.3849 & 0.1668 & -5.4853  & 0.1108 & 0.1907 & 0.3460 & -3.0494 & 0.2013 & 0.3510 & 0.0955 & -6.2660  & 0.0584 & 0.1155 \\
     BPR & Top-50 & 0.4057 & -7.1636 & 0.2117 & 0.4841 & 0.2021 & -10.8747 & 0.0930 & 0.2834 & 0.3740 & -7.3813 & 0.1983 & 0.4421 & 0.1203 & -11.8590 & 0.0496 & 0.1826 \\
    \cmidrule{2-18}
     & Rand & 0.3872 & -3.6387 & 0.2046 & 0.4144 & 0.1738 & -6.2622  & 0.1003 & 0.2058 & 0.3611 & -3.8266 & 0.1865 & 0.3801 & 0.0998 & -7.1022  & 0.0535 & 0.1284 \\
     & Val-$k$ & 0.4562 & -0.0640 & 0.2270 & 0.4906 & 0.1922 & -0.6162  & 0.1061 & 0.2266 & 0.4368 & -0.0532 & 0.2052 & 0.4347 & 0.1093 & -0.8378  & 0.0563 & 0.1362 \\
    \cmidrule{2-18}
     & AttnCut & 0.4604 & -0.0702 & 0.2371 & 0.4969 & 0.2024 & -0.6142  & 0.1063 & 0.2835 & 0.4392 & -0.1214 & 0.2087 & 0.4392 & 0.1203 & -0.8078  & 0.0566 & 0.1826 \\
     & MtCut & 0.4621 & -0.0167 & 0.2383 & 0.5037 & 0.2034 & -0.6022  & 0.1071 & 0.2838 & 0.4413 & -0.0616 & 0.2140 & 0.4639 & 0.1212 & -0.8078  & 0.0568 & 0.1826 \\
     \cmidrule{2-18}
     & PerK & \textbf{0.4901} & \textbf{0.2087}  & \textbf{0.2538} & \textbf{0.5711} & \textbf{0.2159} & \textbf{-0.4971}  & \textbf{0.1117} & \textbf{0.2993} & \textbf{0.4687} & \textbf{0.1876}  & \textbf{0.2300} & \textbf{0.5401} & \textbf{0.1261} & \textbf{-0.7952}  & \textbf{0.0619} & \textbf{0.1894} \\
     & \textit{Improv.} & 6.06\%* &   -      & 6.50\%* & 13.38\%* & 6.15\%* &   -       & 4.30\%* & 5.46\%* & 6.21\%* &    -     & 7.48\%* & 16.43\%* & 4.04\%* &   -       & 8.98\%* & 3.72\%* \\
    \bottomrule
  \end{tabular}}
  \label{perkmain1}
\end{sidewaystable} %\end{table*}

\begin{sidewaystable}[ph!] %\begin{table*}[ht]
 \caption{Average user utility of recommendations produced by compared methods when they are optimized for each target utility measure. \textit{Improv.} denotes the improvement of PerK over MtCut. The numbers in boldface denote the best result in each setting. *We conduct the paired t-test with a 0.05 level and every \textit{Improv.} is statistically significant.}
  \resizebox{\textwidth}{!}{% 
  \begin{tabular}{cl | cccc | cccc | cccc | cccc}
    \toprule 
     Base & \multirow{2}{*}{Method} & \multicolumn{4}{c|}{MovieLens 10M} & \multicolumn{4}{c|}{CiteULike} & \multicolumn{4}{c|}{MovieLens 25M} & \multicolumn{4}{c}{Amazon Books}\\
     Model & & NDCG & PDCG & F1 & TP &NDCG & PDCG & F1 & TP & NDCG & PDCG & F1 & TP & NDCG & PDCG & F1 & TP \\
    \toprule
    % NCF
     & Oracle & 0.5760 & 0.4792  & 0.2896 & 0.6688  & 0.2813 & -0.4076  & 0.1685 & 0.3690 & 0.5310 & 0.3544  & 0.2513 & 0.6157  & 0.1444 & -0.8246  & 0.0885 & 0.2076 \\
     \cmidrule{2-18}
     & Top-1  & 0.4081 & -0.1839 & 0.0474 & 0.4081  & 0.1700 & -0.6601  & 0.0304 & 0.1700 & 0.3761 & -0.2479 & 0.0393 & 0.3761  & 0.0666 & -0.8668  & 0.0150 & 0.0666 \\
     & Top-5  & 0.3521 & -0.8801 & 0.1350 & 0.3368  & 0.1403 & -2.1273  & 0.0786 & 0.1330 & 0.3206 & -1.0628 & 0.1117 & 0.3055  & 0.0563 & -2.6231  & 0.0370 & 0.0537 \\
     & Top-10  & 0.3360 & -1.6822 & 0.1764 & 0.3251  & 0.1384 & -3.4438  & 0.0942 & 0.1368 & 0.3011 & -1.9434 & 0.1474 & 0.2873  & 0.0606 & -4.1048  & 0.0434 & 0.0638 \\
     & Top-20  & 0.3397 & -3.2150 & 0.2023 & 0.3545  & 0.1529 & -5.5941  & 0.1015 & 0.1733 & 0.2980 & -3.5640 & 0.1721 & 0.3033  & 0.0712 & -6.4683  & 0.0443 & 0.0888 \\
     NCF & Top-50  & 0.3721 & -7.6021 & 0.1966 & 0.4525  & 0.1866 & -11.0026 & 0.0870 & 0.2618 & 0.3237 & -8.0277 & 0.1744 & 0.3870  & 0.0916 & -12.1148 & 0.0384 & 0.1437 \\
    \cmidrule{2-18}
     & Rand  & 0.3527 & -3.9939 & 0.1866 & 0.3821  & 0.1609 & -6.4157  & 0.0910 & 0.1884 & 0.3102 & -4.3422 & 0.1603 & 0.3298  & 0.0749 & -7.3212  & 0.0407 & 0.0987 \\
     & Val-$k$  & 0.4123 & -0.1030  & 0.2071 & 0.4554  & 0.1794 & -0.5801  & 0.0989 & 0.2101 & 0.3769 & -0.5071 & 0.1729 & 0.3842  & 0.0791 & -0.8935  & 0.0422 & 0.1018 \\
    \cmidrule{2-18}
     & AttnCut  & 0.4154 & -0.0839 & 0.2068 & 0.4612  & 0.1812 & -0.5527  & 0.0991 & 0.2619 & 0.3762 & -0.1166 & 0.1755 & 0.3994  & 0.0918 & -0.8668  & 0.0437 & 0.1438 \\
     & MtCut  & 0.4195 & 0.0805 & 0.2153 & 0.4798  & 0.1891 & -0.5499  & 0.1006 & 0.2627 & 0.3798 & -0.0142 & 0.1837 & 0.4079  & 0.0923 & -0.8667  & 0.0441 & 0.1441 \\
     \cmidrule{2-18}
     & PerK  & \textbf{0.4482} & \textbf{0.0928}  & \textbf{0.2293} & \textbf{0.5335}  & \textbf{0.2061} & \textbf{-0.5177}  & \textbf{0.1080} & \textbf{0.2764} & \textbf{0.4056} & \textbf{0.0047}  & \textbf{0.1960} & \textbf{0.4742}  & \textbf{0.0982} & \textbf{-0.8646}  & \textbf{0.0468} & \textbf{0.1553} \\
     & \textit{Improv.} & 6.84\%* &   -      & 6.50\%* & 11.19\%* & 8.99\%* &   -       & 7.36\%* & 5.22\%* & 6.79\%* &   -      & 6.70\%* & 16.25\%* & 6.39\%* &     -     & 6.12\%* & 7.77\%*\\
    \bottomrule
  \end{tabular}}
  \label{perkmain2}
\end{sidewaystable} %\end{table*}

\begin{sidewaystable}[ph!] %\begin{table*}[ht]
 \caption{Average user utility of recommendations produced by compared methods when they are optimized for each target utility measure. \textit{Improv.} denotes the improvement of PerK over MtCut. The numbers in boldface denote the best result in each setting. *We conduct the paired t-test with a 0.05 level and every \textit{Improv.} is statistically significant.}
  \resizebox{\textwidth}{!}{% 
  \begin{tabular}{cl | cccc | cccc | cccc | cccc}
    \toprule 
     Base & \multirow{2}{*}{Method} & \multicolumn{4}{c|}{MovieLens 10M} & \multicolumn{4}{c|}{CiteULike} & \multicolumn{4}{c|}{MovieLens 25M} & \multicolumn{4}{c}{Amazon Books}\\
     Model & & NDCG & PDCG & F1 & TP &NDCG & PDCG & F1 & TP & NDCG & PDCG & F1 & TP & NDCG & PDCG & F1 & TP \\
    \toprule
    % LightGCN
     & Oracle & 0.6249 & 0.7459  & 0.3094 & 0.7051  & 0.3471 & -0.2352  & 0.2104 & 0.4400 & 0.5893 & 0.4840  & 0.2864 & 0.6772 & 0.1782 & -0.7571  & 0.1061 & 0.2454 \\
     \cmidrule{2-18}
     & Top-1  & 0.4749 & -0.0502 & 0.0565 & 0.4749  & 0.2182 & -0.5636  & 0.0420 & 0.2182 & 0.4236 & -0.1527 & 0.0490 & 0.4236 & 0.0919 & -0.8162  & 0.0213 & 0.0919 \\
     & Top-5  & 0.4004 & -0.5959 & 0.1531 & 0.3799  & 0.1829 & -1.8801  & 0.1068 & 0.1735 & 0.3603 & -0.8313 & 0.1349 & 0.3427 & 0.0744 & -2.5187  & 0.0487 & 0.0699 \\
     & Top-10  & 0.3763 & -1.3345 & 0.1952 & 0.3584  & 0.1829 & -3.1084  & 0.1270 & 0.1816 & 0.3391 & -1.6383 & 0.1741 & 0.3237 & 0.0783 & -3.9772  & 0.0552 & 0.0803 \\
     & Top-20  & 0.3734 & -2.8186 & 0.2180 & 0.3793  & 0.1999 & -5.1979  & 0.1303 & 0.2244 & 0.3379 & -3.1805 & 0.1988 & 0.3449 & 0.0898 & -6.3212  & 0.0543 & 0.1077 \\
     LightGCN & Top-50  & 0.3997 & -7.1674 & 0.2068 & 0.4673  & 0.2354 & -10.5738 & 0.1037 & 0.3174 & 0.3687 & -7.5533 & 0.1951 & 0.4396 & 0.1117 & -11.9466 & 0.0449 & 0.1663 \\
    \cmidrule{2-18}
     & Rand  & 0.3875 & -3.5835 & 0.2001 & 0.4062  & 0.2075 & -6.0557  & 0.1182 & 0.2450 & 0.3527 & -3.9406 & 0.1839 & 0.3735 & 0.0938 & -7.1826  & 0.0497 & 0.1187 \\
     & Val-$k$  & 0.4652 & 0.0417  & 0.2225 & 0.4894  & 0.2296 & -0.4288  & 0.1286 & 0.2620 & 0.4253 & -0.0864  & 0.1953 & 0.4439 & 0.1017 & -0.8469  & 0.0532 & 0.1247 \\
    \cmidrule{2-18}
     & AttnCut  & 0.4749 & -0.0412 & 0.2237 & 0.4899  & 0.2260 & -0.5616  & 0.1291 & 0.3186 & 0.4242 & -0.1527 & 0.2041 & 0.4359 & 0.1121 & -0.8162  & 0.0516 & 0.1707 \\
     & MtCut  & 0.4761 & 0.1171  & 0.2318 & 0.4949  & 0.2369 & -0.5487  & 0.1319 & 0.3223 & 0.4317 & 0.0649 & 0.2113 & 0.4440 & 0.1185 & -0.8162  & 0.0536 & 0.1761 \\
     \cmidrule{2-18}
     & PerK  & \textbf{0.4993} & \textbf{0.3309}  & \textbf{0.2489} & \textbf{0.5702}  & \textbf{0.2551} & \textbf{-0.3989}  & \textbf{0.1383} & \textbf{0.3438} & \textbf{0.4543} & \textbf{0.0876}  & \textbf{0.2277} & \textbf{0.4742} & \textbf{0.1261} & \textbf{-0.7912}  & \textbf{0.0571} & \textbf{0.1878} \\
     & \textit{Improv.} & 4.87\%* &    -     & 7.38\%* & 15.22\%* & 7.68\%* &   -       & 4.85\%* & 6.67\%* & 5.24\%* &    -     & 7.76\%* & 6.80\%* & 6.41\%* &    -      & 6.53\%* & 6.64\%* \\
    \bottomrule
  \end{tabular}}
  \label{perkmain3}
\end{sidewaystable} %\end{table*}
Table \ref{perkmain1}, \ref{perkmain2}, and \ref{perkmain3} present the average user utility yielded by the recommendation of each compared method.
\textit{For a fair comparison, we adopt the same base model and the same target utility throughout the training of AttnCut, MtCut, and PerK in each experimental configuration.}
We report the average result of three independent runs. 

\textbf{Benefits of Top-Personalized-$K$ Recommendation.}
We first observe that personalizing the recommendation size results in higher user utility than the globally fixed size.
Oracle shows significantly higher utility compared to the best value of Top-$k$.
This upper bound on user utility highlights the importance of the proposed task in improving user satisfaction.
Accordingly, methods determining the personalized recommendation sizes (i.e., AttnCut, MtCut, and PerK) generally outperform Top-$k$.
Furthermore, we notice that Top-$k$ shows a large performance deviation depending on the recommended size.
Therefore, the system should avoid naively setting a globally fixed recommendation size and instead determine the personalized size for improved user satisfaction.

\textbf{Effectiveness of PerK.}
We observe that PerK outperforms the competitors in the top-personalized-$K$ recommendation task.
Val-$k$ does not perform well since overfitting to the validation utility is not effective due to the sparse and noisy interactions.
Furthermore, the effectiveness of the methods for document list truncation (i.e., AttnCut and MtCut) remains limited, since the target label for the classification does not provide enough supervision due to the hidden-positive and noisy interactions.
Indeed, we discovered that they can be easily overfitted to select the size that works well globally (Figure \ref{perkdist1} and \ref{perkdist2}), resulting in comparable performance to the best of Top-$k$.
In contrast, PerK does not rely on a deep model and considers the hidden-relevant items, to estimate the expected user utility.
PerK directly computes the expected user utility in mathematical form with the calibrated interaction probability and significantly outperforms AttnCut and MtCut in the top-personalized-$K$ task.

\begin{figure}[ht]
\centering 
    \includegraphics[width=0.35\linewidth]{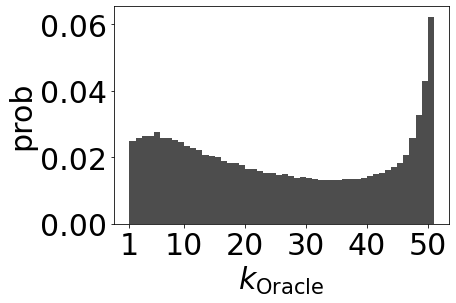}
    \includegraphics[width=0.27\linewidth]{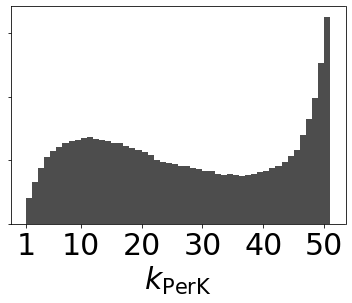}
    \includegraphics[width=0.27\linewidth]{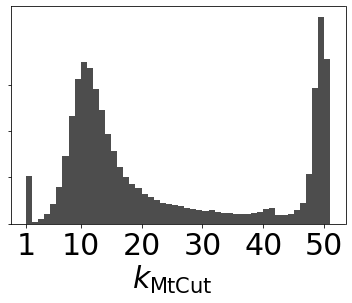}
\caption{Distribution of recommendation sizes from each method on MovieLens 25M. Three figures in the same row share the y-axis.}
\label{perkdist1}
\end{figure}

\begin{figure}[ht]
    \includegraphics[width=0.35\linewidth]{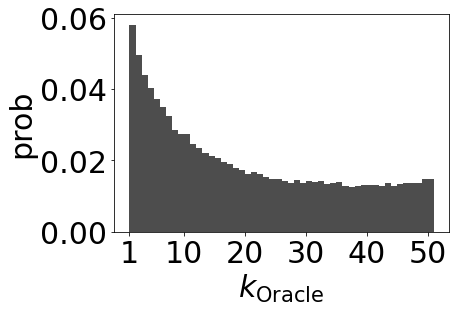}
    \includegraphics[width=0.27\linewidth]{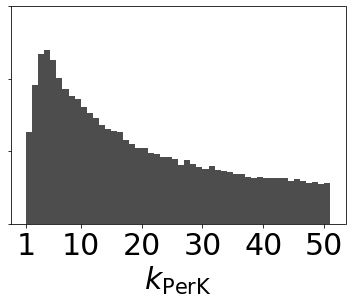}
    \includegraphics[width=0.27\linewidth]{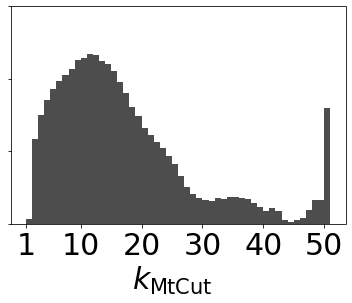}
\caption{Distribution of recommendation sizes from each method on Amazon Books. Three figures in the same row share the y-axis.}
\label{perkdist2}
\end{figure}

\subsection{Personalized Recommendation Size}
Figure \ref{perkdist1} and \ref{perkdist2} show the distributions of recommendation sizes determined by Oracle, PerK, and MtCut.
The base model is BPR and the target user utility is F1.
We have the following findings:
(1) The distributions of Oracle show that the optimal recommendation size for maximum user utility differs for each user.
(2) The distributions of MtCut are severely skewed towards $k \in [10, 20]$ and have a high peak in that range.
The reason is that MtCut is overfitted to select the globally well-performing $k$ which falls into the range of [10, 20] (refer to F1 on MovieLens 25M and Amazon Books with BPR in Table \ref{perkmain1}).
(3) The distributions of PerK are smooth and fairly close to those of Oracle.
It is noted that we can set the constraints for the minimum recommendation size in Eq.\ref{def2}, if the system requires it.

\subsection{User Degree vs Personalized Size}
\begin{figure}[t]
\centering 
\includegraphics[width=0.45\linewidth]{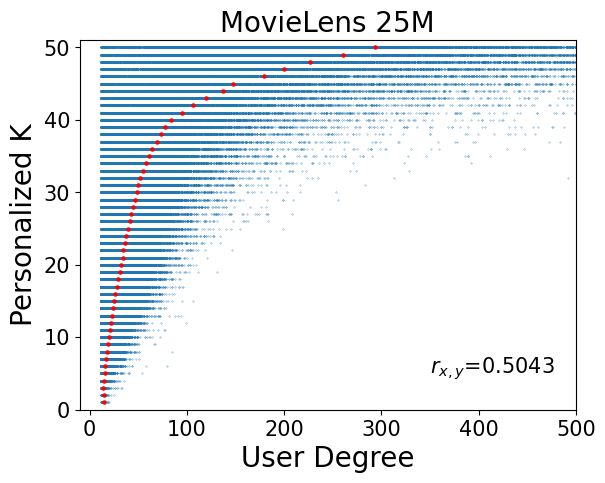}
\includegraphics[width=0.45\linewidth]{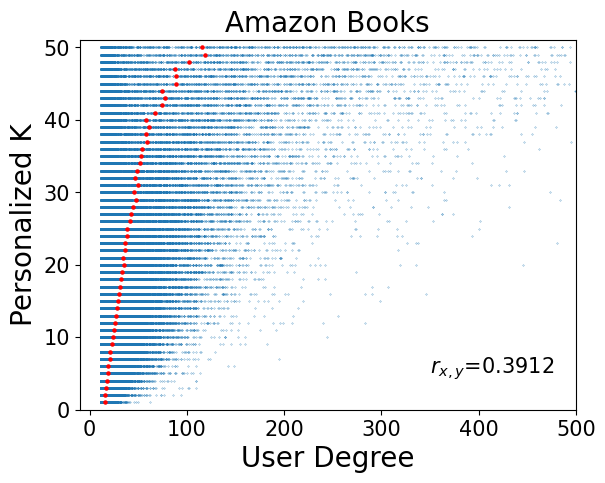}
\caption{User degree versus Personalized Recommendation Size. Red dots denote the average x (user degree) for each y (personalized K).}
\label{udprs}
\end{figure}
Figure \ref{udprs} shows the relation between user degree (number of interactions) and selected personalized recommendation size.
The base model is BPR and the target utility is F1.
$r_{x,y}$ denotes the Pearson correlation coefficient.
Since both correlation coefficients are positive, there is a positive correlation between user degree and selected personalized recommendation size.
If a user has sufficient interactions, the recommender model can capture stable and certain preferences for that user.
Therefore, generally speaking, the mean confidence in top items increases and more number of recommendations is provided.
On the other hand, users with fewer interactions also may get larger recommendations (Upper left of Figure \ref{udprs}).
In this case, if a user has a distinct preference for small interactions, the recommender model also may provide many recommendations for that distinct preference.

\subsection{Ablation Study for Calibration}
%% table for NCF
\begin{table}[ht]
\caption{Calibration error and user utility for each calibration method. User utilities are computed when each calibration method is applied to PerK. Lower is better for ECE.}
\resizebox{\linewidth}{!}{% 
\begin{tabular}{ccccccc}
\toprule
Dataset & Calibration & ECE $\downarrow$  & NDCG   &PDCG & F1  &TP   \\
\toprule
\multirow{3}{*}{ML10M}  & uncalibrated & 0.1284 & 0.4083 & -0.1071 & 0.1966 & 0.4481\\
    & global            &  0.0046 & 0.4255 & 0.0838 & 0.2175 & 0.4925\\
    & user-wise         &  \textbf{0.0011} & \textbf{0.4482}& \textbf{0.0928} &\textbf{0.2293}  &\textbf{0.5335} \\
\midrule
\multirow{3}{*}{CiteULike}  & uncalibrated & 0.0480 & 0.1811 & -0.5667 & 0.0975 & 0.2512\\
                            & global  & 0.0017 & 0.1940 &-0.5306 & 0.1054 & 0.2639 \\
                            & user-wise & \textbf{0.0003} & \textbf{0.2061}&  \textbf{-0.5177}& \textbf{0.1080} & \textbf{0.2764}\\
\midrule
\multirow{3}{*}{ML25M}      & uncalibrated  & 0.1572 & 0.3612 & -0.2311 & 0.1669 & 0.3812\\
                            & global  & 0.0422 & 0.3761 & -0.1015 & 0.1757 & 0.4007 \\
                            & user-wise & \textbf{0.0098} & \textbf{0.4056}&  \textbf{0.0047}& \textbf{0.1960} & \textbf{0.4742}\\
\midrule
\multirow{3}{*}{ABooks}     & uncalibrated & 0.3371  & 0.0812 & -0.8864 & 0.0427 & 0.1391 \\
                            & global & 0.0581 & 0.0946 & -0.8655 & 0.0449 & 0.1477 \\
                            & user-wise & \textbf{0.0171} & \textbf{0.0982}& \textbf{-0.8646} & \textbf{0.0468} & \textbf{0.1553}\\   
\bottomrule
\end{tabular}}
\label{perkabl2}
\end{table}

Table \ref{perkabl2} present the ablation study on the calibration method with NCF \cite{ncf17} as a base model.
We compare (1) the calibration performance, and (2) user utility when it is applied to PerK.
The calibration performance is measured by Expected Calibration Error (ECE) \cite{bbq15}, a widely used metric for measuring the gap between the output probability and true likelihood of interaction \cite{kweon22, cal17}.
We observe that the proposed user-wise calibration function shows lower ECE than the global calibration function.
Accordingly, PerK yields higher user utilities when it adopts user-wise calibration, demonstrating the superiority of the proposed user-wise calibration over the global calibration.

\subsection{Space and Time analysis}
\begin{table}[ht]
\caption{Space and Time analysis. \#Params. denotes the number of learnable parameters and Time denotes the wall time (in ms) used for generating a ranking list for a user.}
\resizebox{\linewidth}{!}{% 
\begin{tabular}{c|c|cc|cc|cc|cc}
\toprule
Base & \multirow{2}{*}{Method} & \multicolumn{2}{c|}{\textbf{ML10M}} & \multicolumn{2}{c|}{\textbf{CiteULike}} & \multicolumn{2}{c|}{\textbf{ML25M}} & \multicolumn{2}{c}{\textbf{ABooks}} \\
model &                         & \#Params.           & Time            & \#Params.           & Time             & \#Params.           & Time            & \#Params.           & Time           \\
\toprule
\multirow{2}{*}{BPR}        & Top-k                   & 5.041M             & 2.871           & 1.291M              & 2.086            & 11.574M            & 2.374           & 34.542M            & 2.636          \\
                            & PerK                    & 5.181M             & 3.022           & 1.292M              & 2.241            & 11.898M            & 2.652           & 34.857M            & 2.973          \\
\midrule
\multirow{2}{*}{NCF}      & Top-k                   & 10.092M            & 4.092           & 2.581M              & 2.566            & 23.156M            & 3.067           & 69.117M            & 7.767          \\
                            & PerK                    & 10.232M            & 4.406           & 2.588M              & 2.817            & 23.480M            & 3.378           & 69.432M            & 8.871          \\
\midrule
\multirow{2}{*}{LightGCN}       & Top-k                   & 5.042M             & 3.195           & 2.571M              & 2.493            & 23.147M            & 2.628           & 34.542M            & 3.781          \\
                            & PerK                    & 5.181M             & 3.449           & 2.577M              & 2.759            & 23.472M            & 2.969           & 34,857M            & 4.106       \\  
\bottomrule
\end{tabular}}
\label{perkspace}
\end{table}

Table \ref{perkspace} shows the number of learnable parameters and inference time of Top-$k$ and PerK.
The target user utility is NDCG for PerK.
We use PyTorch \cite{pytorch19} with CUDA on GTX Titan Xp GPU and Intel Xeon(R) E5-2640 v4 CPU.

First, PerK does not significantly increase the number of learnable parameters from Top-$k$.
PerK only has two additional parameters for each user's calibration function, and it has a negligible impact considering the size of the user embedding is typically selected in the range of 64-128.

\begin{figure}[h!]
\centering
    \begin{minipage}[r]{0.49\linewidth}
        \centering
          \includegraphics[width=0.49\textwidth]{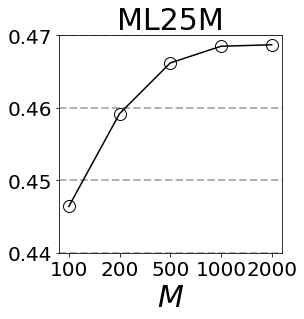}
          \includegraphics[width=0.49\textwidth]{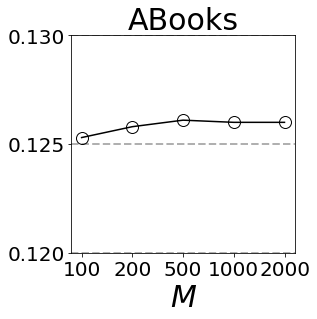}
    \caption*{(a) NDCG}
    \end{minipage}
    \hfill%
    \begin{minipage}[r]{0.49\linewidth}
        \centering
          \includegraphics[width=0.49\textwidth]{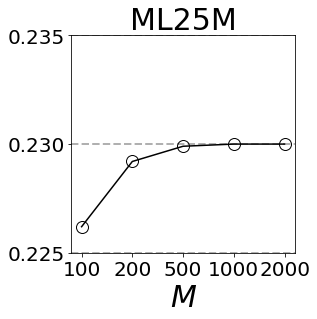}
          \includegraphics[width=0.49\textwidth]{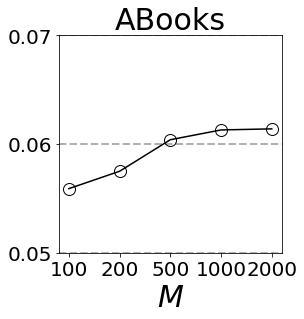}
    \caption*{(b) F1}
    \end{minipage}
    \caption{Hyperparameter study of $M$}
\label{perkhyper}
\end{figure}
Second, the inference time is increased by about 10\%.
To speed up the estimation of expected user utility,
(1) We perform the user-wise calibration for all users together with a few matrix operations and estimate the expected user utility with various $k$ in a parallel way,
and (2) We adopt two approximation techniques for fast computation of the expected user utility in Eq.\ref{perkndcg}.
The hyperparameter study for $M$, which is used for this approximation, with BPR is presented in Figure \ref{perkhyper}.
We can see that it is enough to aggregate the summation in Eq.7d just to $M=2000$ rather than to the number of all unobserved items $|\mathcal{I}_u^{-}|$.

\subsection{Case Study}
\begin{figure}[h!]
\centering 
    \includegraphics[width=1\linewidth]{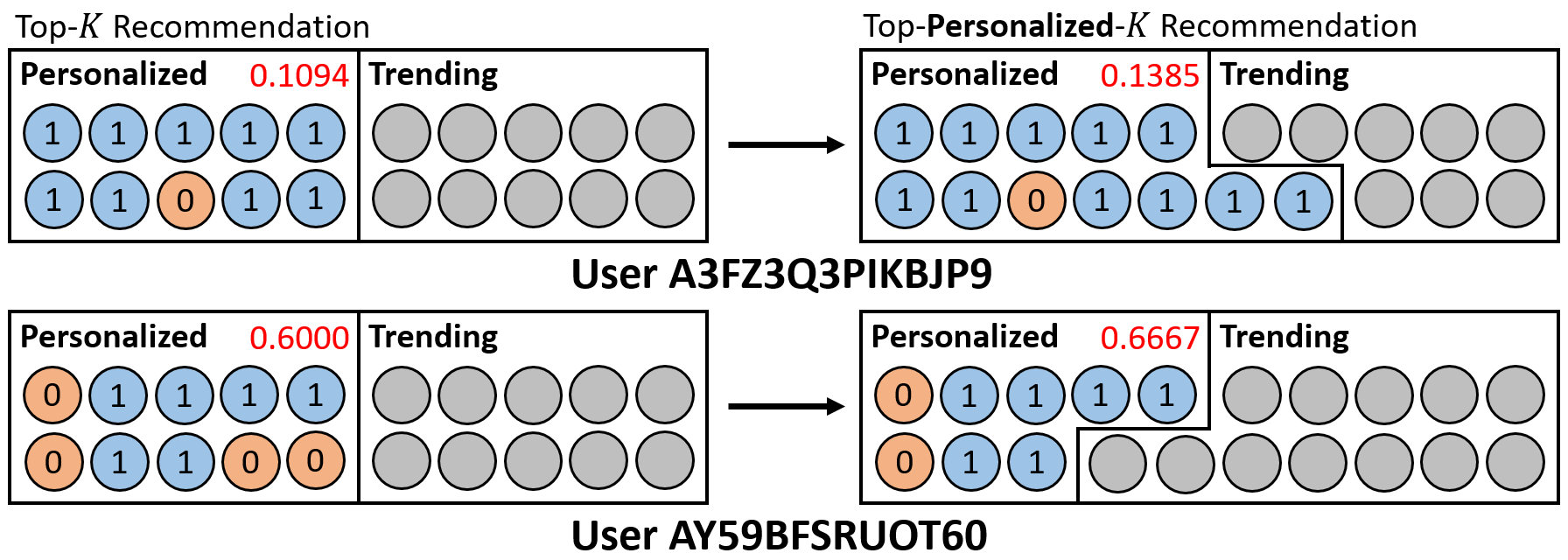}
\caption*{(a) Single-domain scenario on Amazon Books}

    \includegraphics[width=1\linewidth]{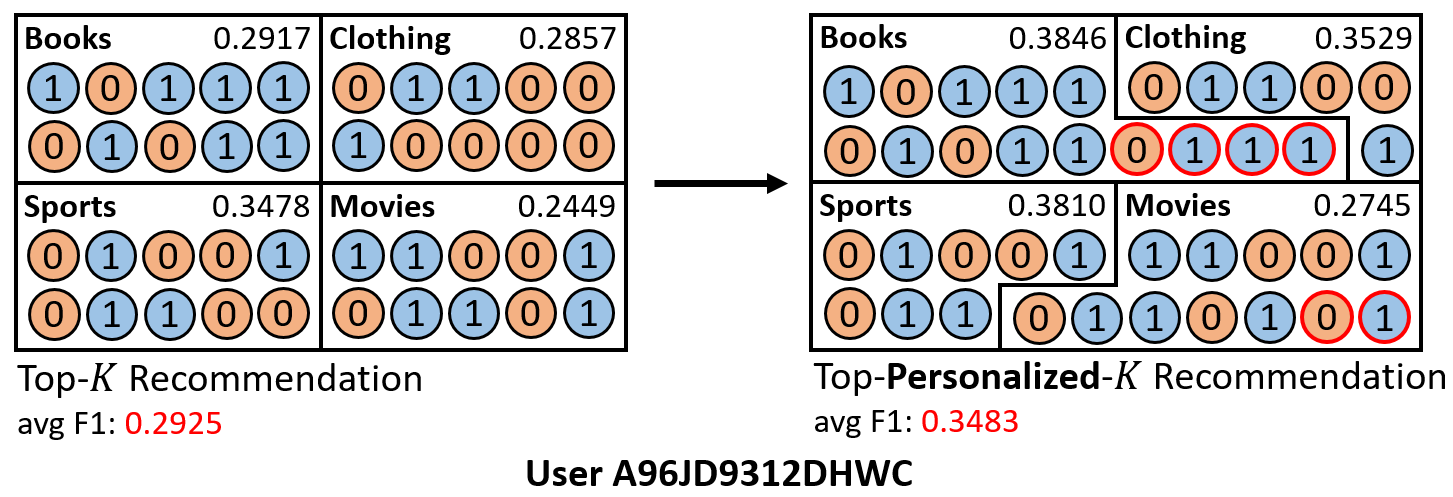}
\caption*{(b) Multi-domain scenario on Amazon}
\caption{Case study. 1 and 0 represent the relevant and the irrelevant labels, which are unavailable at the inference phase. Decimal numbers indicate the F1 score.}
\label{perkcase}
\end{figure}

We present a case study on Amazon dataset to provide concrete examples of how PerK can be applied in real-world applications.

\textbf{Single-domain scenario.}
Figure \ref{perkcase} (a) shows a case study of the single-domain scenario on Amazon Books with BPR as a base model and F1 as a target utility.
Real-world recommender systems often display trending items alongside personalized ones, to create a balanced experience that encompasses both popular choices and individual preferences \cite{jain2015trends}.
We determine the number of personalized items with PerK, and the remaining slots are then populated with trending items.
As a result, F1 for the personalized items increases by a large margin for both users, and more trending items can be presented to the second user.
In this context, the top-personalized-$K$ recommendation allows the system to strike a balance between the exploration-exploitation trade-off by effectively adjusting the number of personalized items.

\textbf{Multi-domain scenario.}
Figure  \ref{perkcase} (b) shows a case study of the multi-domain scenario on the four largest domains of Amazon datasets with BPR as a base model and F1 as a target utility.
Here, we have a constraint on the total recommendation size considering the system's limited resources, such as screen size and thumbnail dimensions.
We generate top-personalized-$K$ recommendations for each domain by slightly modifying Eq.\ref{def2}, to maximize the average F1 across all domains under the constraint that the total recommendation size should not exceed 40.
As a result, we can see that the average F1 score increases as we adopt the top-personalized-$K$ recommendation scheme.
This example demonstrates that the top-personalized-$K$ recommendation can be adopted to multi-domain systems for displaying the optimized number of items from each domain on a single constrained page.

\section{Conclusion}
We first highlight the necessity of personalized recommendation size based on its practical advantages in real-world scenarios, which has not been studied well in the previous literature.
Then, we propose Top-Personalized-$K$ Recommendation, a new recommendation task that aims to find the optimal recommendation size for each user to maximize individual user satisfaction.
As a solution to the top-personalized-$K$ recommendation, we propose PerK, a framework determining the recommendation size that maximizes the expected user utility estimated by using calibrated interaction probabilities.
In our thorough experiments on real-world datasets, PerK outperforms recent competitors in the top-personalized-$K$ recommendation task.
We believe that the top-personalized-$K$ recommendation can provide enhanced solutions for various item recommendation scenarios and anticipate future work on applications including multi-domain recommender systems, sponsored advertisements, and prefetching mechanisms.

\chapter{Conclusion}
In this dissertation, I investigated confidence calibration in recommendations and introduced two real-world applications of confidence on recommendations.
% The overview of this dissertation is presented in Figure \ref{conclusion}.
Specifically, (1) I proposed Gaussian calibration and Gamma calibration that map the ranking scores of pre-trained models to well-calibrated preference probabilities, without affecting the recommendation performance (Section 2).
(2) I propose a Bidirectional Distillation framework whereby both the teacher and the student collaboratively improve with each other by treating the confidence of the counterpart as additional learning guidance (Section 3).
(3) I introduce Top-Personalized-$K$ Recommendation, a new recommendation task adjusting the number of presented items based on the expected user utility estimated with calibrated probability (Section 4).
We validate the superiority of each proposed method with extensive experiments on real-world datasets.
We also provide in-depth analyses to verify the effectiveness of each proposed component.
In the future, I plan to continue my work on model calibration for recommender systems, particularly for fairness and large language models of recommendations. 

\begin{figure}[h!]
\centering
    \includegraphics[scale=0.5]{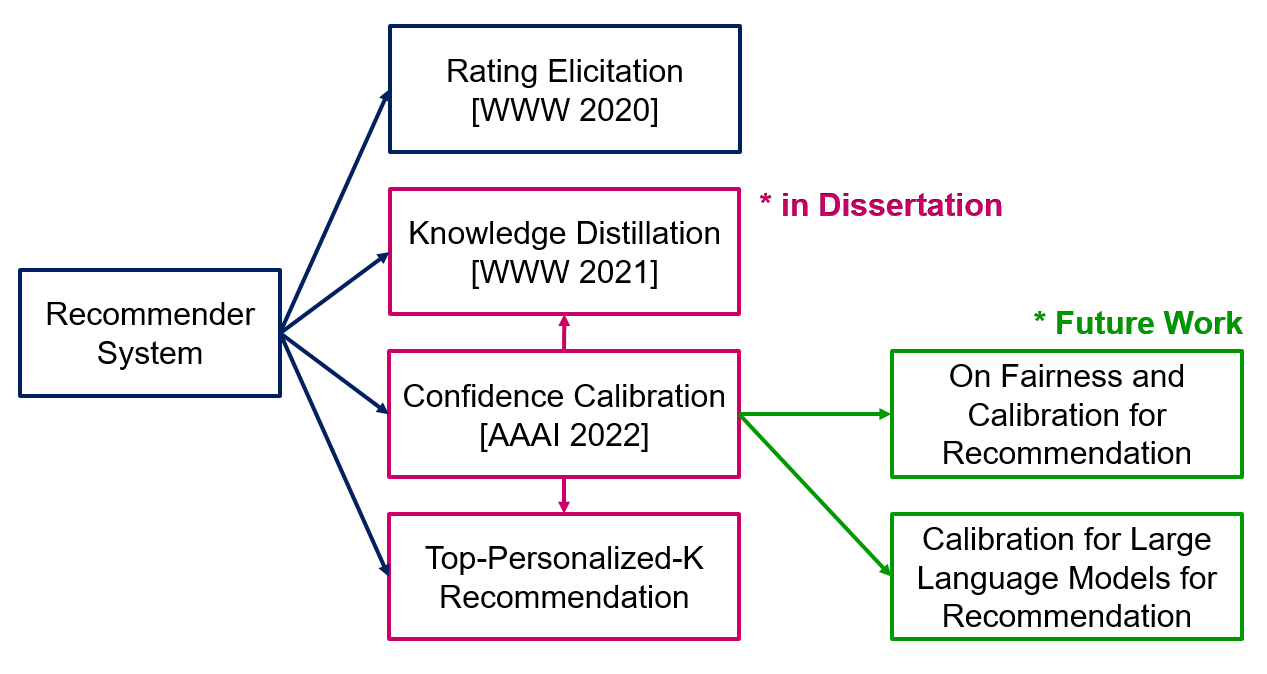} %[width=0.5\linewidth]
\caption{An overview of this dissertation.}
\label{conclusion}
\end{figure}

%% 한글요약문 시작 (Korean summary)
\begin{summarykorean}
개인화된 추천은 쇼핑, 뉴스, 검색, 동영상, 음악 등 다양한 일상 활동에 큰 영향을 미친다.
대부분의 추천 시스템은 추천 결과에 대한 신뢰도를 전혀 제공하지 않고 사용자에게 상위 항목만 표시하고 있다.
하지만, 동일한 순위 위치의 의미는 사용자마다 다르다.
한 사용자는 30\%의 확률로 첫 번째 항목을 좋아할 수 있는 반면, 다른 사용자는 90\%의 확률로 첫 번째 항목을 좋아할 수 있다.
이 논문에서는 추천 시스템의 추천 결과에 대한 신뢰도를 추정할 수 있는 방법을 제시하고, 추정된 신뢰도를 활용하는 두 가지 실제 응용법을 소개한다.
구체적으로, (1) 사전 훈련된 추천 모델의 순위 점수를 잘 보정된 신뢰도로 매핑하는 가우스 보정과 감마 보정을 제안한다 (Section 2).
(2) 큰 교사 모델과 작은 학생 모델이 서로의 신뢰도를 추가적인 학습 레이블로 사용하여 학습하는 양방향 지식 증류 프레임워크를 제안한다 (Section 3).
(3) 보정된 신뢰도를 이용하여 추정된 예상 사용자 효용을 기반으로 추천 항목 수를 조정하는 새로운 추천 패러다임인 Top-Personalized-K Recommendation를 제안한다 (Section 4).
본 논문에서는 제안된 모델들의 성능을 실제 데이터셋에 대한 실험으로 평가하였으며, 여러 분석을 통해 그 타당성을 입증하였다.
본 논문에서 소개되는 추천시스템의 보정된 신뢰도는 최근 주목 받고 있는 추천시스템의 형평성과 추천시스템을 위한 초거대 언어모델에도 활용가능하며, 앞으로 이에대한 연구를 진행 할 예정이다.
\end{summarykorean}

%% Refences
\bibliographystyle{unsrt}
\bibliography{mybib}

%% Acknowledgement
\acknowledgement[korean]
2015년에 포항에 와서, 9년 동안 공부하고 연구하다가, 박사 학위를 받게 되었습니다.
지난 5년 동안 지도해주신 유환조 교수님, 자유로운 연구 환경을 조성해 주시면서도 깊은 통찰을 말씀해 주실 때마다 많이 배웠습니다.
바쁘신 와중에도 박사학위 논문 심사를 맡아주신 이근배 교수님, 김동우 교수님, 이남훈 교수님, 고영명 교수님께도 감사의 말씀을 드립니다.
교수님들이 주신 피드백을 바탕으로 더 탄탄한 연구를 할 수 있었고, 앞으로도 좋은 연구를 하도록 노력하겠습니다.

\begin{comment}
  학부 4년, 대학원 5년을 다시금 돌아보니 많은 분에게 도움을 받으며, 배우며, 성장했던 것 같습니다.
데이터 인텔리전스 연구실 가족들을 만난 것은 행운이라는 생각이 듭니다.
후배들에게 다정한 현준이형, 모범이 되는 모습을 항상 보여주신 동하형, 항상 성실하신 동민이형, 말을 예쁘게 하시는 준수형, 알고리즘 잘 가르쳐 주시는 준영이형, 잘 들어주시는 성제형, 좋은 말을 많이 해주시는 수연님 모두 감사합니다.
특히 옆자리에서 버텨준 상환이 형과, 제 모든 논문을 도와주신 성구형에게 큰 감사를 드립니다.
여러분들이 없었다면, 대학원 생활이 훨씬 어려웠을 것 같습니다.
학부 때부터 같이 지낸 산경과 식구들 영관, 세윤, 승엽, 영석, 찬진 고맙고 앞으로 바라는 일 이루기를 응원합니다.
축구보다 술을 더 같이한 기창, 준혁, 가종 고맙습니다.
같이 웃고 떠들며 많이 활발해진 것 같습니다.
%그 외에 같이 활동한 일레븐 가족들과 기자회 가족들에게도 인사 전합니다.
%형들에게 좋은 이야기 많이 들으면서 성장했습니다.
분반 동기들보다 더 편하고 잘 지낸 순호, 대원님과 룸메이트까지 같이했던 지훈님과 예준님도 운동 같이하면서 참 좋았습니다.
여행 같이 다녔던 현철, 문승, 동훈님도 잘 지내고 있는 것 같아 보기 좋습니다.
하은과 한결도 같이 분반, 학과 생활하면서 좋은 얘기 많이 나누며 성장시켜 주어 고맙습니다.
신년회 때마다 보는 경남과학고 가족 민기, 정윤, 찬민, 승욱, 종서들도 모두 고맙고 같이 롤하느라 고생 많았다는 말 전하고 싶습니다.
등산메이트 성호와 홍준, 같이 떠날 때마다 포항을 벗어나서 많은 환기가 되었습니다.
포항에서 지낸다고 자주 보지 못했던 제일중 지우, 성철, 멸치, 재성, 승욱, 종상이도 앞으로는 자주 봤으면 좋겠습니다.
앞서 언급한 모두가 각자의 장점을 가지고 있어, 하나씩 배우려고 노력했습니다.
마지막으로, 항상 응원해 주는 가족들에게 감사드립니다.
앞으로 보답하려 노력하겠습니다.  
\end{comment}

학부 4년, 대학원 5년을 다시금 돌아보니 많은 분들에게 도움을 받으며, 배우며, 성장했던 것 같습니다.
모난 사람 없이 서로서로 잘 챙겼던 데이터 인텔리전스 연구실 가족들, 학부 때부터 같이 지낸 산경과 식구들, 축구보다 술을 더 같이한 기창/준혁/가종, 동기들보다 더 편하게 잘 지낸 분반 형님들, 신년회 때마다 보는 경남과학고 동기들, 등산메이트 성호와 홍준, 포항에서 지낸다고 자주 보지 못했던 제일중 친구들까지, 모두가 각자의 장점을 가지고 있어 하나씩 배우려고 노력했습니다.
다들 고맙습니다.
같이 만나 웃고 떠들며 대학원 생활을 잘 보낼 수 있었습니다.
마지막으로, 항상 응원해 주는 가족들에게도 감사의 인사를 전합니다.
앞으로 보답하려 노력하겠습니다.
제 곁에 머물렀던 모두가 바라는 일 이루기를 바라고, 시간이 지나 건강한 모습으로 다시 만났으면 좋겠습니다.

%% Curriculum Vitae
\curriculumvitae[korean]
    \begin{personaldata}
        \name       {Wonbin Kweon}
        \email {wonbin.kweon@gmail.com}
    \end{personaldata}

    \begin{education}
        \item[2015. 3.\ --\ 2019. 2.] Department of Industrial and Management Engineering, \\ 
                                        Pohang University of Science and Technology (B.S.) \\

         \item[2019. 3.\ --\ 2024. 2.] Department of Convergence IT Engineering, \\
                                        Pohang University of Science and Technology (Ph.D.)
    \end{education}

   \begin{experience}
	\item[PC Members] AAAI’23-24, KDD’23, NeurIPS’23, ICLR’24, WWW’24, ICML'24
        \item[Awards] Fourth Prize, The 3rd POSTECH Research Award, 2023 \\
                        AAAI Student Scholarship, 2022 \\
                        Third Prize, Kakao Music Playlist Recommendation Competition, 2020 \\
                        WWW Student Scholarship, 2020
   \end{experience}

    \vspace*{10mm} \noindent           
    \begin{center}\large\textbf{Publications}\end{center}
    \vspace{3mm}
    \begin{enumerate}
        \item \textbf{Wonbin Kweon}, SeongKu Kang, Junyoung Hwang, Hwanjo Yu. Deep Rating Elicitation for New Users in Collaborative Filtering. WWW 2020.
        \item SeongKu Kang, Junyoung Hwang, \textbf{Wonbin Kweon}, Hwanjo Yu. DE-RRD: A Knowledge Distillation Framework for Recommender System. CIKM 2020.
        \item \textbf{Wonbin Kweon}, SeongKu Kang, Hwanjo Yu. Bidirectional Distillation for Top-K Recommender System. WWW 2021.
        \item SeongKu Kang, Junyoung Hwang, \textbf{Wonbin Kweon}, Hwanjo Yu. Topology Distillation for Recommender System. KDD 2021.
        \item SeongKu Kang, Dongha Lee, \textbf{Wonbin Kweon}, Junyoung Hwang, Hwanjo Yu. Consensus Learning from Heterogeneous Objectives for One-Class Collaborative Filtering. WWW 2022.
        \item \textbf{Wonbin Kweon}, SeongKu Kang, Hwanjo Yu. Obtaining Calibrated Probabilities with Personalized Ranking Models. AAAI 2022 (oral presentation).
        \item SeongKu Kang, \textbf{Wonbin Kweon}, Dongha Lee, Jianxun Lian, Xing Xie, Hwanjo Yu. Distillation from Heterogeneous Models for Top-K Recommendation. WWW 2023.
        \item \textbf{Wonbin Kweon}, SeongKu Kang, Sanghwan Jang, Hwanjo Yu. Top-Personalized-K Recommendation. WWW 2024.
        \item \textbf{Wonbin Kweon}, Hwanjo Yu. Doubly Calibrated Estimator for Recommendation on Data Missing Not At Random. WWW 2024.
    \end{enumerate}

    \afterpage{\blankpage}  % 마지막장 백색별지

%% 본문 끝
\end{document}